\title{A general model-checking procedure for \\ semiparametric accelerated failure time models}
\author{
    Dongrak Choi\footnotemark[1] \\
    % Ph.D. Student \\
    Department of Biostatistics and Bioinformatics \\
    Duke University \\
    Durham, NC, USA \\
    \texttt{dongrak.choi@duke.edu}
    \And
    Woojung Bae\thanks{Contribute equally to this study} \\
    % Ph.D. Student \\
    Department of Statistics\\
    University of Florida\\
    Gainesville, FL, USA \\
    \texttt{woojung.bae@ufl.edu} 
    \And
    Jun Yan \\
    % Professor \\
    Department of Statistics\\
    University of Connecticut\\
    Storrs, CT, USA \\
    \texttt{jun.yan@uconn.edu}
    \And
    Sangwook Kang\thanks{Corresponding author} \\
    % Professor \\
    Department of Applied Statistics and \\
    Department of Statistics and Data Science \\
    Yonsei University\\
    Seoul, Republic of Korea \\
    \texttt{kanggi1@yonsei.ac.kr}
}
\begin{document}
\maketitle

% ---------------------------------------------------------------------------
% ---------------------------------------------------------------------------
% ---------------------------------------------------------------------------
\begin{abstract}
    We propose a set of goodness-of-fit tests for the semiparametric accelerated failure time (AFT) model, including an omnibus test, a link function test, and a functional form test. This set of tests is derived from a multi-parameter cumulative sum process shown to follow asymptotically a zero-mean Gaussian process. Its evaluation is based on the asymptotically equivalent perturbed version, which enables both graphical and numerical evaluations of the assumed AFT model. Empirical p-values are obtained using the Kolmogorov-type supremum test, which provides a reliable approach for estimating the significance of both proposed un-standardized and standardized test statistics. The proposed procedure is illustrated using the induced smoothed rank-based estimator but is directly applicable to other popular estimators such as non-smooth rank-based estimator or least-squares estimator.Our proposed methods are rigorously evaluated using extensive simulation experiments that demonstrate their effectiveness in maintaining a Type \Romannum{1} error rate and detecting departures from the assumed AFT model in practical sample sizes and censoring rates. Furthermore, the proposed approach is applied to the analysis of the Primary Biliary Cirrhosis data, a widely studied dataset in survival analysis, providing further evidence of the practical usefulness of the proposed methods in real-world scenarios. To make the proposed methods more accessible to researchers, we have implemented them in the \textbf{R} package \textbf{afttest}, which is publicly available on the Comprehensive R Archieve Network.
    
    \keywords{Goodness-of-fit, Induced smoothing, Least-squares estimation, Model diagnostics, Rank-based estimation, Survival Analysis}
\end{abstract}

% ---------------------------------------------------------------------------
% ---------------------------------------------------------------------------
% ---------------------------------------------------------------------------
\section{Introduction} \label{sec:background}
    An accelerated failure time (AFT) model provides a valuable tool for assessing the effects of covariates on failure times. It links the natural logarithm of failure time to a set of predictors (including risk factors and covariates) in a linear fashion with an addition of a random error term. When the distribution of the random error is left unspecified, the corresponding AFT model is referred to as a semiparametric AFT model. This semiparametric AFT model has several notable features: First, covariates act directly on the failure time which makes interpreting the effects straightforward compared to the other popular models that act on the hazard function such as Cox \citep{wei1992accelerated}. Second, it maintains flexibility in the shape of the failure time distribution since no distributional assumption is made on the error term. With these desirable features, semiparametric AFT models have served as a useful alternative to the popular Cox model in the regression analysis of failure time data subject to censoring. Estimation of regression parameters, the rank-based estimation procedure \citep{prentice1978linear, tsiatis1990estimating, jin2003rank, chiou2014fitting} and least-squares estimation procedure \citep{buckley1979linear, jin2006least, chiou2014marginal} has been popular among others. 
    
    Model-checking procedures are essential steps that follow inference procedures to validate the model-fitting results. The popular Cox model has offered well-developed procedures for estimating regression parameters under a wide range of settings \citep{cox1972regression}. Moreover, it provides an array of model-checking methods, including omnibus tests, link function tests, and functional form tests. For example, as a technique for assessing the validity of the marginal Cox model, procedures using a cumulative partial-sum process based on martingale-type residuals have been predominantly popular \citep{lin1993checking, spiekerman1996checking,huang2011model,lu2014testing,li2015checking,lee2019model}. These methodological and theoretical developments are implemented in popular computer software such as \textbf{R} \citep{r2022}. For example, an \textbf{R} package for assessing goodness-of-fit of a Cox proportional hazards regression model and Fine-Gray subdistribution model, \textbf{goftte} \citep{sfumato2019goftte}, has been developed. 

    Unlike these developments of model-checking procedures for the Cox model, most of the literature on model-checking procedures for AFT models has focused on parametric AFT models \citep{bagdonavivcius2013chi, balakrishnan2013testing,lin1996model,cockeran2021goodness}; only limited studies are available for checking semiparametric AFT models \citep{novak2013goodness, novak2015regression, chiou2014marginal}. Based on the rank-based estimator, \cite{novak2013goodness, novak2015regression} proposed an omnibus test checking the overall departure from the assumed model, which shares the similar spirit of \citet{lin1993checking}. \citet{lin1996model} developed similar procedures but with parametric AFT models. Despite these developments, several important aspects are either missing or need to be improved. First, checking the assumptions regarding the linearity of the effect of each covariate on the transformed $T$ or the exponential link function between a set of covariates and the failure time are often of specific interest, and violations of such specific assumptions could lead to invalid statistical inferences. Therefore, checking these assumptions and assessing model adequacy are crucial after fitting an AFT model but such procedures are not fully developed for semiparametric AFT models. For least-squares estimators, another popular class of estimators, checking the functional form of covariates in marginal semiparametric AFT models have been considered \citep{chiou2014marginal} but the derivation of the properties of the proposed processes requires a formal investigation. Second, the omnibus test procedures proposed in \citet{novak2013goodness,novak2015regression} are restricted to the rank-based estimator of \citet{prentice1978linear,jin2003rank}, non-smooth in model parameters. Procedures for checking specific model components as well as the overall departure from the assumed AFT model under other popular estimators such as the induced smoothed estimator \citep{Brow:Wang:indu:2007,johnson2009induced}, computationally more efficient version of the non-smooth rank-based estimator or the aforementioned least-squares estimators have not yet been developed. Third, the proposed process in \citet{novak2013goodness} is a process of time ($t$) evaluated at  fixed values of covariates ($z$). Meanwhile, the cumulative-sum processes considered in \citet{lin1993checking, spiekerman1996checking} are multi-parameter processes in $t$ as well as $z$.  
    Therefore, model diagnostic procedures that properly account for the aforementioned three aspects in checking the semiparametric AFT model are still remain to be developed.  
    
    Our primary objective can be summarized as follows. We propose to develop suitable test statistics that can effectively assess the adequacy of  semiparametric AFT models. We propose a series of test statistics based on a multi-parameter stochastic process of $t$ and $z$ and demonstrate that they follow Gaussian processes under the null hypothesis of the assumed AFT model. We consider various forms of functions based on the proposed test statistics to enable omnibus tests, link function tests, and functional form tests, providing a comprehensive assessment of semiparametric AFT models. We provide non-standardized and standardized test statistics based on the popular estimators for the semiparametric AFT model; namely, the non-smoothed rank-based estimator, its induced-smoothed version  and least-squares estimator. To ensure objectivity, we employ the Kolmogorov-type supremum test for calculating p-values, providing a rigorous measure of the model's goodness-of-fit. Furthermore, to facilitate the practical implementation of our proposed methods, we have also developed an \textbf{R} package, \textbf{afttest}, \citep{bae2022afttest}. This package provides user-friendly tools for conducting model adequacy assessments based on our proposed test statistics, allowing for easy application to medical data.
    
    The structure of the article is organized as follows. Section \ref{sec:modest} introduces the semiparametric AFT model and its parameter estimation procedure. Section \ref{sec:modchk} presents the proposed model-checking methods for assessing the adequacy of the AFT model. In Section \ref{sec:sim}, we describe the simulation studies conducted to evaluate the performance of the proposed model-checking methods. Section \ref{sec:realdata} demonstrates the application of the proposed model-checking procedures to the well-known Primary Biliary Cirrhosis (PBC) study data. Lastly, Section \ref{sec:conclusion} provides a discussion and concluding remarks.
    
% ---------------------------------------------------------------------------
% ---------------------------------------------------------------------------
% ---------------------------------------------------------------------------
\section{Semiparametric AFT model and estimation} \label{sec:modest}
    Suppose that the sample is composed of $n$ independent subjects. Let $T_{i}$ and $C_{i}$ denote the potential failure time and censoring time for the $i^{th}$ subject, respectively ($i = 1, \cdots, n$). We assume that $T_{i}$ is subject to right censoring, so the observed time is $X_{i} = \min \left( T_{i}, C_{i} \right)$. The corresponding failure indicator is $\Delta_{i} = I \left(T_{i} \leq C_{i} \right)$. For each subject, we observe a $p$-dimensional bounded vector of covariates $\bZ_{i} = \left( Z_{i1}, \cdots, Z_{ip} \right)^{\top}$. Throughout this article, we consider time-invariant $\bZ$. For the $i^{th}$ subject, the observed data are then $\left( X_{i}, \Delta_{i}, \bZ_{i} \right)$. $T_{i}$ and $C_{i}$ are assumed to be independent given $\bZ_{i}$ for all $i$. For a given $\bZ_{i}$, $T_{i}$ is assumed to follow a semiparametric AFT model. Specifically, for the $i^{th}$ subject, 
    \begin{align} \label{eq:sec:modest:1}
        \log T_{i} = - \bZ_{i}^{\top} \bb_{0} + \epsilon_{i},
    \end{align}
    where $\bb_{0} = \left(\beta_{01} , \cdots , \beta_{0p} \right)^{\top}$ is a true vector of regression parameters and $\epsilon_{i}$ is an unspecified random error term.

    Let $e_{i} \left( \bb \right) = \log X_{i} + \bZ_{i}^{\top} \bb$ denote the $i^{th}$ observed residual based on the assumed model in \eqref{eq:sec:modest:1} where $\bb= \left( \beta_{1},\cdots,\beta_{p} \right)^{\top}$ is an unknown vector of regression parameters. Let $N_{i} \left( t, \bb \right) = I \left( e_{i} \left( \bb \right) \leq { t }, \Delta_{i} = 1 \right)$ and $Y_{i} \left( t, \bb \right) = I \left( e_{i} \left( \bb \right) \geq { t } \right)$ be the corresponding counting process and at-risk process, respectively. 
    
    For estimation of the regression coefficients $\bb$ in \eqref{eq:sec:modest:1}, the rank-based approach \citep{prentice1978linear} and the least squares approach (references) have been popular. 
    In the rank-based approach, estimating functions $\score \left(t, \bb \right)$ at $t$ based on the ranks of $e_{i}$'s may be used where
    \begin{align} \label{eq:sec:modest:2}
        \score \left(t, \bb \right) = \sum_{i=1}^{n} \int_{- \infty}^{t} \psi \left( s, \bb \right) \left( \bZ_{i} - E \left( s, \bb \right) \right) dN_{i} \left(s, \bb \right),
    \end{align}    
    $\psi\left(s, \bb \right)$ is a possibly data-dependent weight function, $S_{d} \left( t, \bb \right) = \sum_{i=1}^{n} \bZ_{i}^{\otimes d} Y_{i} \left( t, \bb \right), d=0, 1$ and $E \left( t, \bb \right) = S_{1} \left( t, \bb \right) / S_{0} \left( t, \bb \right)$. Throughout this article, we assume the Gehan-type weight for $\psi\left(s, \bb \right)$, i.e.,  $\psi\left(s, \bb \right)= n^{-1} S_{0} \left( t, \bb \right)$. Then, with $t = \infty$, \eqref{eq:sec:modest:2} can be reduced as 
    \begin{align} \label{eq:sec:modest:3}
        \scoreN \left( \bb \right)
        & = n^{-1} \sum_{i=1}^{n} \sum_{j=1}^{n} \Delta_{i} \left( \bZ_{i} - \bZ_{j} \right)  I \left( e_{j} \left( \bb \right) \geq e_{i} \left( \bb \right) \right) 
    \end{align}
    Note that \eqref{eq:sec:modest:3} is non-smooth in $\bb$. An induced smoothed version of \eqref{eq:sec:modest:3}, $\scoreI \left( \bb \right)$, may also be considered for computational efficiency, especially in variance estimation. Specifically,     
    \begin{align} \label{eq:sec:modest:4}
        \scoreI \left( \bb  \right)
        = n^{-1} \sum_{i=1}^{n} \sum_{j=1}^{n} \Delta_{i} \left(\bZ_{i} - \bZ_j \right) \Phi \left( \frac{ e_{j} \left( \bb \right) - e_{i} \left( \bb \right) }{r_{ij}} \right),
    \end{align}
    where $r_{ij}^{2} = n^{-1} \left( \bZ_{i} - \bZ_{j} \right)^{\top} \left( \bZ_{i} - \bZ_{j} \right)$, and $\Phi$ is the standard normal cumulative distribution function (CDF). Estimators $\bbhn$ and $\bbhi$ for $\bb_0$ are defined as the solutions to $\scoreN \left( \bb \right) = 0$ and $\scoreI \left( \bb \right) = 0$, respectively \citep{chiou2014fitting, chiou2015rank}. Under some regularity conditions, $\bbhn$ is shown to be consistent for $\bb_0$ and $n^{\frac{1}{2}} \left( \bbhn - \bb_0 \right)$ is shown to converge to a zero-mean normal random variable with a finite covariance matrix \citep{lin1998accelerated}. $\bbhi$ is shown to share the same asymptotic properties of $\bbhn$ \citep{johnson2009induced, chiou2014fitting, chiou2015rank}.

    % Least squares approach
    In the least squares approach, the usual least squares normal equations are used with replacing the right-censored failure time by its estimated conditional expectation $\widehat{\mbox{E}} \left[ \log T_{i} | X_{i}, \Delta_{i}, \bZ_{i} \right]$. Specifically,
    \begin{align*}
        \widehat{T}_{i}^{\ast} \left( \bb \right) 
        = \widehat{\mbox{E}}\left[ \log T_{i} | X_{i}, \Delta_{i}, \bZ_{i} \right] 
        = \Delta_{i} \log X_{i} + \left( 1 - \Delta_{i} \right) \left[ \frac{\int_{e_{i} \left( \bb \right)} ud \widehat{F}_{\bb} \left( u \right) }{1 - \widehat{F}_{\bb} \left\{ e_{i} \left( \bb \right)\right\} } - \bZ_{i}^{\top}\bb \right]
    \end{align*}
    where $\widehat{F}_{\bb}(\cdot)$ is the Kaplan-Meier estimator of the CDF of $e_{i} \left( \bb \right)$s. Then, the corresponding normal equations are written as
    \begin{align*}
        \scoreL \left( \bb, b \right) = \sum_{i=1}^n (\bZ_{i} - \bar{\bZ})\left\{ \widehat{T}_{i}^{\ast} \left( b \right) + \bZ_{i}^{\top} \bb \right\},
    \end{align*}
    where $\bar{\bZ} = n^{-1}\sum_{i=1}^n \bZ_{i}$. An estimator for $\bb$ can be obtained by iteratively evaluating $\bbhl^{(m)} = L(\bbhl^{(m-1)})$ for $m \geq 1$ where
    \begin{align*}
        L \left( b \right) = \left\{ \sum_{i=1}^n (\bZ_{i} - \bar{\bZ})^{\otimes 2}\right\}^{-1}\left[ \sum_{i=1}^n (\bZ_{i} - \bar{\bZ}) \left\{ \bar{T}^{\ast} \left( b \right) - \widehat{T}_{i}^{\ast} \left( b \right)\right\}\right] \mbox{ and}
    \end{align*}
    $\bar{T}^{\ast} \left( b \right) = n^{-1} \sum_{i=1}^n \widehat{T}_{i}^{\ast} \left( b \right)$ \citep{jin2006least, chiou2014marginal}. 
    This approach is computationally more efficient and stable than the Buckley-James estimator \citep{buckley1979linear} that solves $\scoreL(\bb, \bb) = 0$ since $\scoreL(\bb, \bb)$ is neither smooth nor monotone in $\bb$. For each $m$, $\bbhl^{(m)}$ is shown to be consistent and asymptotically normal if the initial estimator, $\bbhl^{(0)}$ is consistent and asymptotically normal \citep{jin2006least}. 
    
% ---------------------------------------------------------------------------
% ---------------------------------------------------------------------------
% ---------------------------------------------------------------------------
\section{Model-checking procedures} \label{sec:modchk}

% ---------------------------------------------------------------------------
% ---------------------------------------------------------------------------
\subsection{Multi-parameter stochastic process} \label{subsec:multiproc}
    In this section, we will introduce a class of multi-parameter stochastic processes, which is the cumulative sum of martingale residuals for the AFT model, and discuss its properties. We will begin by defining a Nelson-Aalen type estimator of the cumulative hazard function of residuals, denoted by $\Lambda(t, \bb)$, where $\bb$ represents the vector of regression coefficients such as
    \begin{align*}
        \widehat{\Lambda} \left( t, \bb \right) = \int_{0}^{t} \frac{J \left( s \right)}{S_{0} \left( s, \bb \right)} dN_{\bullet} \left( s, \bb \right) 
    \end{align*}   
    where $N_{\bullet} \left( s, \bb \right)=\sum_{i=1}^{n} N_{i} \left( t, \bb \right)$ and $J\left( t \right) = I \left( S_{0} \left( t, \bb \right) > 0 \right)$. Then, a stochastic process $M_{i} \left( t, \bb \right)$, $i=1, \cdots, n$, defined by $M_{i} \left( t, \bb \right) = N_{i} \left( t, \bb \right) - \int_{0}^{t} {Y_{i} \left( s, \bb \right) d \Lambda \left( s \right)}$ is a zero-mean martingale with respect to a suitable filtration \citep{lin1998accelerated, novak2015regression}. The corresponding martingale residual, an estimated version of $M_{i} \left( t, \bb \right)$, is denoted as $\widehat{M}_{i} \left( t, \bbhi \right) = N_{i} \left( t, \bbhi \right) - \int_{0}^{t} {Y_{i} \left( s, \bbhi \right) d \widehat{\Lambda} \left( s, \bbhi \right)}$. If the assumptions on the AFT model are satisfactory, the martingale residuals $\widehat{M}_{i} \left( t, \bbhi \right)$ are expected to fluctuate around zero \citep{novak2013goodness, novak2015regression}. This offers an essential building block for detecting model misspecifications. 
    
    For an objective diagnostic test, we consider a class of multi-parameter stochastic processes, the cumulative sum of martingale residuals for the AFT model, whose form is given as
    \begin{align*}
        \w \left( t, \bz  \right) = n^{- \frac{1}{2}} \sum_{i=1}^{n} \pi_{i} \left( \bz \right) \widehat{M}_{i} \left( t, \bbhi \right),
    \end{align*}
    where $\pi_{i} \left( \bz \right) = I \left( \bZ_{i} \leq \bz \right) = \prod_{q=1}^{p} I \left( \bZ_{iq} \leq \bz_{q} \right)$ is a weight function of $\bz$. One can use $\pi_{i} \left( \bz \right) = \ell \left( \bz; \bZ_{i} \right) I \left( \bZ_{i} \leq \bz \right)$ where $\ell \left( \cdot \right)$ is a bounded function of $\bz$. Similar ideas have been popularly used in the survival analysis literature \citep[see, $e.g.$, ][]{barlow1988residuals, therneau2015package, lin1993checking, novak2013goodness, novak2015regression}. For checking the semiparametric AFT model, \citet{novak2013goodness, novak2015regression} also considered the same process only with $\bbhn$, the non-smoothed estimator for $\bb$ and at a fixed $\bz$. 

    Under some regularity conditions, the proposed stochastic process can be shown to converge to a zero-mean Gaussian process. We use this process to detect model misspecifications in several directions. The asymptotic property is summarized in the following theorem. 
    \begin{restatable}{theorem}{theoremone} \label{theorem1}
        $\w \left( t, \bz \right)$ follows zero-mean Gaussian process asymptotically for varying $t$ and $\bz$, and it is consistent against any misspecifications of the semiparametric AFT model.
    \end{restatable}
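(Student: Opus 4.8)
The plan is to follow the martingale-residual strategy of \citet{lin1993checking}, adapted to the AFT setting in which the counting and at-risk processes $N_i(\cdot,\bb)$ and $Y_i(\cdot,\bb)$ are themselves indexed by the unknown regression parameter. First I would fix the true value $\bb_0$ and analyze the \emph{oracle} process $\w_0(t,\bz) = n^{-1/2}\sum_{i=1}^n \pi_i(\bz) M_i(t,\bb_0)$. Since the $M_i(\cdot,\bb_0)$ are zero-mean martingales and $\pi_i(\bz)$ is a bounded, coordinatewise-monotone (hence manageable, Donsker) weight, a functional central limit theorem for sums of i.i.d.\ martingale integrals gives weak convergence of $\w_0$ to a zero-mean Gaussian process in $(t,\bz)$, with covariance the limit of $n^{-1}\sum_i \pi_i(\bz)\pi_i(\bz') M_i(t,\bb_0) M_i(t',\bb_0)$.

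The substantive work is to show that $\w(t,\bz)$, evaluated at the estimator $\bbhi$ with $\widehat\Lambda(\cdot,\bbhi)$ plugged in, is asymptotically equivalent to $\w_0$ plus a smooth linear functional of $n^{1/2}(\bbhi-\bb_0)$. I would expand in two stages. Stage one replaces $\widehat\Lambda(\cdot,\bbhi)$ by $\Lambda(\cdot)$ using the asymptotic representation of the Nelson--Aalen type estimator, so that the integral $\int_0^t Y_i(s,\bb)\,d(\widehat\Lambda-\Lambda)$ contributes a term expressible through $S_0$ and the aggregate counting process. Stage two linearizes the dependence of $N_i$, $Y_i$, and $\widehat\Lambda$ on $\bb$ about $\bb_0$. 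The delicate point is that $N_i(t,\bb)$ and $Y_i(t,\bb)$ are indicator functions and hence non-smooth in $\bb$, so a pathwise Taylor expansion is unavailable; I would instead expand their compensators/expectations (smooth under the density and regularity conditions) and control the centered empirical remainder by stochastic equicontinuity, arguing that the classes indexed by $(t,\bz,\bb)$ are Donsker and that $\bbhi \to \bb_0$ in probability. This delivers $\w(t,\bz) = \w_0(t,\bz) + \widehat{H}(t,\bz)^\top n^{1/2}(\bbhi-\bb_0) + o_p(1)$ uniformly, for a data-dependent slope $\widehat H$ converging to a deterministic $H(t,\bz)$.

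Next I would invoke the known influence-function representation of the induced smoothed estimator, $n^{1/2}(\bbhi-\bb_0) = n^{-1/2}\sum_{i=1}^n \phi_i + o_p(1)$, valid because $\bbhi$ inherits the asymptotics of $\bbhn$ \citep{johnson2009induced, chiou2014fitting, chiou2015rank, lin1998accelerated}. Substituting collapses the expression into a single sum of i.i.d.\ mean-zero processes, $\w(t,\bz) = n^{-1/2}\sum_{i=1}^n \eta_i(t,\bz) + o_p(1)$, with $\eta_i(t,\bz) = \pi_i(\bz) M_i(t,\bb_0) + H(t,\bz)^\top \phi_i$ after the $\widehat\Lambda$ correction from Stage one is folded into $\eta_i$. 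A multivariate functional central limit theorem over the Donsker index set $(t,\bz)$ then yields weak convergence to a zero-mean Gaussian process, whose covariance is the limit of the empirical second moment of $\eta_i$; this representation is exactly what the later perturbed/resampling approximation reproduces.

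Finally, for consistency against misspecification I would show that when \eqref{eq:sec:modest:1} fails, $\widehat{M}_i(\cdot,\bbhi)$ no longer has conditional mean zero, so $n^{-1/2}\w(t,\bz)$ has a nonzero probability limit, i.e.\ $\w(t,\bz)$ drifts at rate $n^{1/2}$, for at least one $(t,\bz)$. Hence any Kolmogorov-type supremum functional of $\w$ diverges in probability under the alternative, giving asymptotic power one. I expect the main obstacle to be Stage two: rigorously justifying the linearization through the non-smooth indicators $N_i,Y_i$, which requires the empirical-process equicontinuity machinery rather than ordinary differentiation, together with carefully propagating the $\widehat\Lambda$ correction so that the slope $H(t,\bz)$ is identified correctly.
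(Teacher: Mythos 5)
Your treatment of the weak-convergence half follows essentially the same route as the paper: isolate the oracle term $n^{-1/2}\sum_{i}\pi_{i}(\bz)M_{i}(t,\bb_{0})$, linearize the plug-in of $\widehat{\Lambda}$ and of the estimator around $\bb_{0}$ (the paper does this via the lemmas of Novak and the representation in Lin et al.\ 1998, which play the role of your stochastic-equicontinuity step for the non-smooth indicators), substitute the asymptotic representation of $n^{1/2}(\bbhi-\bb_{0})$, and close with a functional CLT for manageable classes. That part of your plan is sound and matches the paper's argument.

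The genuine gap is in your consistency argument. You assert that once the model is misspecified, $\widehat{M}_{i}(\cdot,\bbhi)$ has nonzero conditional mean, and therefore $n^{-1/2}\w(t,\bz)$ must have a nonzero probability limit at some $(t,\bz)$. That implication is exactly what has to be proved, and it is not automatic: under the alternative, $\bbhi$ and $\widehat{\Lambda}(\cdot,\bbhi)$ converge to pseudo-true values $\bbstar$ and $\Lambda^{\star}$ that are \emph{defined} by zeroing certain weighted sums of the very same residuals --- in particular, the Nelson--Aalen construction forces the aggregate residual sum (the case $\bz=\infty$) to be centered, and the estimating equation for $\bbhi$ kills another weighted functional. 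So one must rule out the possibility that the drift vanishes over the entire class of weights $\pi_{i}(\bz)=I(\bZ_{i}\leq\bz)$. The paper does this concretely: it computes the limit of $n^{-1}\sum_{i}\pi_{i}(\bz)\widehat{M}_{i}(t,\bbhi)$ under the alternative as an integral whose integrand involves the discrepancy
\begin{align*}
\frac{g\left( \bz ; \bbstar \right)}{\exp\left( -\bz^{\top}\bbstar \right)}
-\frac{\lambda_{0}^{\star}\left( s\cdot\exp\left( -\bz^{\top}\bbstar \right) \right)}{\lambda_{0}\left( s\cdot g\left( \bz ; \bbstar \right) \right)},
\end{align*}
and then invokes the Hardy--Littlewood--P\'{o}lya inequality argument of Lin et al.\ (1993, Appendix 3): the second ratio converges to a weighted average of the first over the covariate distribution, so evaluating at the maximizer $\bz_{\dag}$ of $g(\bz;\bbstar)/\exp(-\bz^{\top}\bbstar)$ makes the difference strictly positive, hence the limit is nonzero at that $\bz_{\dag}$ and the supremum statistic diverges. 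Without this identification of a specific $(t,\bz)$ where the drift survives the estimation of $\bbstar$ and $\Lambda^{\star}$, asymptotic power one is not established; you should add this step (or an equivalent separation argument) to complete the proof.
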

    \begin{proof}
	A sketch of the proof is provided in Appendix. 
    \end{proof}

    Here, we define $\w \left( t, \bz \right)$ and derive its asymptotic distribution based on $\bbhi$, induced smoothed rank-based estimator for $\bb$ but the result is more general in the sense that the other two estimators $\bbhn$ and $\bbhl$ can also be incorporated. $n^{\frac{1}{2}} \left( \bbhn -  \bb_0 \right)$ and $n^{\frac{1}{2}} \left( \bbhi -  \bb_0 \right)$ are shown to converge to the same limiting distribution \citet{johnson2009induced}, so the result of Theorem 1 remains unchanged for $\bbhn$. For $\bbhl$, the asymptotic representation of $n^{\frac{1}{2}} \left( \bbhl -  \bb_0 \right)$ is different from that for $n^{\frac{1}{2}} \left( \bbhn -  \bb_0 \right)$ (and, consequently, for $n^{\frac{1}{2}} \left( \bbhi -  \bb_0 \right)$) but is shown to be asymptotically normal (Jin et al. 2006). So, the main result still remains the same with a different form of the asymptotic covariance function.

    A direct assessment of the significance level of the proposed tests is difficult due to the complicated nature of the covariance function of the limiting distribution of $\w \left( t, \bz \right)$. So, we propose a perturbed version of the original process, which has the same limiting distribution under the null hypothesis. A similar approach has also been considered in \citet{novak2013goodness}. Define
    \begin{align*}
        S_{\pi} \left(t, \bz, \bb \right) = \sum_{i=1}^{n} \pi_{i} \left( \bz \right) Y_{i} \left( t, \bb \right) 
        &\hspace{1.0cm} 
        E_{\pi} \left(t, \bz, \bb \right) = \frac{S_{\pi} \left(t, \bz, \bb \right) }{S_{0} \left( t, \bb \right)} \\
        f_{\pi} \left( t,\bz \right) = n^{-1} \sum_{i=1}^{n} \Delta_{i} \pi_{i} \left( \bz \right) f_{0} \left( t \right) t \bZ_{i} 
        &\hspace{1.0cm}
        g_{\pi} \left( t,\bz \right) = n^{-1} \sum_{i=1}^{n} \pi_{i} \left( \bz \right) g_{0} \left( t \right) t \bZ_{i} ,
    \end{align*}
    where $f_{0} \left( t \right)$ and $g_{0} \left( t \right)$ are the baseline densities of $T_{i} \exp \left( {\bZ_{i}^{\top} \bb_{0}} \right)$ and $X_{i} \exp \left( {\bZ_{i}^{\top} \bb_{0}} \right)$, respectively. One can obtain estimators, $\widehat{f}_{0} \left( t \right)$ and $\widehat{g}_{0} \left( t \right)$ through the kernel smoothed techniques \citep{diehl1988kernel, novak2013goodness, novak2015regression, silverman2018density}. We assume each Kernel estimate of $\widehat{f}$ and $\widehat{g}$ have bounded variations and converge in probability uniformly to $f_{0}$ and $g_{0}$, respectively. Indeed, we can obtain $\widehat{f}_{\pi} \left( t, \bz \right)$ and $\widehat{g}_{\pi} \left( t, \bz \right)$ by plugging in $\widehat{f}_{0} \left( t \right)$ and $\widehat{g}_{0} \left( t \right)$, respectively. We further assume that $f_{\pi} \left( t, \bz \right)$ and $g_{\pi} \left( t, \bz \right)$ have bounded variances and converges a.s. to $f_{\pi 0 } \left( t, \bz \right)$ and $g_{\pi 0} \left( t, \bz \right)$, respectively.
        
    We also define the following perturbed processes by generating $n$ independent and identically distributed (i.i.d.) positive random variables, $(\phi_{1}, \cdots, \phi_{n})$, with $\mE{\phi_{i}} = \mvar{\phi_{i}} = 1$, $i=1\, \ldots, n$:
    \begin{align*}
        \score_{\pi}^{\phi} \left(t, \bz, \bb \right) &=  \sum_{i=1}^{n} \int_{- \infty}^{t} \psi \left( s, \bb         \right) \left\{ \pi_{i} \left( \bz \right) - E_{\pi} \left( s, \bz, \bb \right) \right\} d\widehat{M}_{i} \left(s, \bb \right) (\phi_{i} - 1) \ \mbox{ and } \\
        \scoreI^{\phi} \left( \bb \right) 
        &= n^{-1} \sum_{i=1}^{n} \sum_{j=1}^{n} \Delta_{i} \left(\bZ_{i} - \bZ_j \right) \Phi \left( \frac{ e_{j} \left( \bb \right) - e_{i} \left( \bb \right) }{r_{ij}} \right) \phi_{i}. 
    \end{align*}

    The following theorem summarizes the asymptotic equivalence between the original process and its perturbed version for the induced smoothed rank-based estimator.
    \begin{restatable}{theorem}{theoremtwo} \label{theorem2}
        Under the assumptions in Appendix and given the observed data, $\left( N_{i} \left( t, \bb \right), Y_{i} \left( t, \bb \right), \bZ_{i} \right)$, $i=1, \ldots, n$, $\w \left( t, \bz \right)$ and $\wh \left(t, \bz \right)$ has the same limiting distribution where 
        \begin{align*}
            \wh \left( t, \bz \right) 
            &=n^{-\frac{1}{2}} \score_{\pi}^{\phi} \left( t, \bz,  \bbhi \right) -n^{\frac{1}{2}} \left( \widehat{f}_{\pi} \left( t, \bz \right) + \int_{0}^{t} \widehat{g}_{\pi} \left( s, \bz\right) d \widehat{\Lambda} \left(s, \bbhi \right) \right)^{\top} \left( \bbhi - \bbhid \right) \\
             -& n^{-\frac{1}{2}} \int_{0}^{t} S_{\pi} \left( s, \bz, \bbhi \right) d \left( \widehat{\Lambda} \left(s, \bbhi \right) - \widehat{\Lambda} \left(s, \bbhid \right) \right) \ \mbox{ and } \
            \bbhid \mbox{ is the solution to } 
            \scoreI^{\phi} \left( \bb \right) = 0.
        \end{align*}
    \end{restatable}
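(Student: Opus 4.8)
The plan is to establish the claim in two stages. First I would derive an asymptotically linear i.i.d.\ representation for the original process $\w(t, \bz)$, decomposing it into three pieces that isolate the contributions of the martingale part, the estimation of $\bb_0$, and the estimation of the cumulative hazard $\Lambda$. Second, conditioning on the observed data, I would show that the three summands defining $\wh(t, \bz)$ reproduce these three pieces term by term, with the centered multipliers $\phi_i - 1$ standing in for the underlying i.i.d.\ contributions. Because $\mE{\phi_i} = \mvar{\phi_i} = 1$ yields $\mE{\phi_i - 1} = 0$ and $\mvar{\phi_i - 1} = 1$, the conditional covariance of $\wh$ will match the covariance of the Gaussian limit of $\w$ from Theorem~\ref{theorem1}, and a conditional multiplier functional central limit theorem will finish the argument.

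For the first stage I would start from $\widehat M_i(t, \bbhi) - M_i(t, \bb_0)$, use the definition of $\widehat\Lambda$ to rewrite $n^{-\frac{1}{2}}\sum_i \pi_i(\bz)\widehat M_i$ in the self-centered form $n^{-\frac{1}{2}}\sum_i \int \{\pi_i(\bz) - E_\pi(s,\bz,\bbhi)\}\, dN_i(s,\bbhi)$ (the at-risk increments cancel because $\sum_i \{\pi_i - E_\pi\} Y_i = S_\pi - E_\pi S_0 = 0$), and then Taylor-expand in $\bb$ about $\bb_0$. Since $\bbhi$ enters both through the indicators $N_i, Y_i$ and through $\widehat\Lambda$, the expansion produces density coefficients $f_{\pi 0}, g_{\pi 0}$ from differentiating the residual indicators. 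Combining this with the linear representation and asymptotic normality of $n^{\frac{1}{2}}(\bbhi - \bb_0)$ cited in Section~\ref{sec:modest} and the uniform consistency of $\widehat\Lambda(\cdot,\bbhi)$ for $\Lambda$, I expect the representation
\begin{align*}
    \w \left( t, \bz \right)
    &= n^{-\frac{1}{2}} \sum_{i=1}^{n} \int_{-\infty}^{t} \psi \left( s, \bb_0 \right) \left\{ \pi_{i} \left( \bz \right) - E_{\pi} \left( s, \bz, \bb_0 \right) \right\} dM_{i} \left( s, \bb_0 \right) \\
    &\quad - \left( f_{\pi 0} \left( t, \bz \right) + \int_{0}^{t} g_{\pi 0} \left( s, \bz \right) d \Lambda \left( s \right) \right)^{\top} n^{\frac{1}{2}} \left( \bbhi - \bb_0 \right) \\
    &\quad - n^{-\frac{1}{2}} \int_{0}^{t} S_{\pi} \left( s, \bz, \bb_0 \right) d \left\{ \widehat{\Lambda} \left( s, \bb_0 \right) - \Lambda \left( s \right) \right\} + o_{p}(1),
\end{align*}
uniformly in $(t, \bz)$, whose three lines correspond exactly to the three summands of $\wh$.

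For the second stage I would condition on the data and treat $(\phi_1, \ldots, \phi_n)$ as the only source of randomness. The term $n^{-\frac{1}{2}}\score_\pi^\phi(t, \bz, \bbhi)$ is a conditionally independent sum weighted by $\phi_i - 1$; since $\widehat M_i(\cdot, \bbhi)$ and $E_\pi(\cdot, \bbhi)$ converge uniformly to $M_i(\cdot, \bb_0)$ and $E_\pi(\cdot, \bb_0)$ and $\mvar{\phi_i - 1} = 1$, its conditional covariance converges to that of the first line above. For the second summand I would show that $n^{\frac{1}{2}}(\bbhi - \bbhid)$, given the data, has the same limiting law as $n^{\frac{1}{2}}(\bbhi - \bb_0)$: expanding the smooth perturbed estimating function $\scoreI^\phi$ about $\bbhi$, its slope matrix is asymptotically the same nonsingular matrix governing $\scoreI$, while the multiplier structure of $\scoreI^\phi(\bbhi)$ reproduces the influence-function average of $n^{\frac{1}{2}}(\bbhi - \bb_0)$; together with $\widehat f_\pi \to f_{\pi 0}$ and $\widehat g_\pi \to g_{\pi 0}$ this matches the $\bb$-estimation drift. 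The third summand mirrors the $\Lambda$-estimation line because $\widehat\Lambda(\cdot,\bbhi) - \widehat\Lambda(\cdot,\bbhid)$ plays the role of $\widehat\Lambda(\cdot,\bbhi) - \Lambda$.

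The hard part will be establishing that the three conditional pieces \emph{jointly} reproduce the full covariance, including the cross-covariances: this is exactly why the same multipliers $\phi_i$ must appear in both $\score_\pi^\phi$ and $\scoreI^\phi$, so that the $\bb$-estimation variability is coupled to the martingale part through the shared index $i$. I would verify the joint conditional convergence by a Cram\'er--Wold device combined with a conditional multiplier central limit theorem, and secure conditional tightness (asymptotic equicontinuity) over the two-dimensional index $(t, \bz)$ from the Donsker/manageability property of the classes generated by the monotone indicators $\pi_i(\bz)$ and the assumed bounded variation of the kernel estimates $\widehat f_\pi, \widehat g_\pi$. A conditional multiplier functional central limit theorem then gives that, in probability given the data, $\wh(t, \bz)$ converges to the same zero-mean Gaussian process as $\w(t, \bz)$, which is the asserted equality of limiting distributions.
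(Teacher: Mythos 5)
Your proposal is correct and follows essentially the same route as the paper's own (sketched) proof: both rely on the asymptotic i.i.d.\ representation of $\w \left( t, \bz \right)$ established for Theorem~\ref{theorem1}, derive a parallel representation for $\wh \left( t, \bz \right)$ conditionally on the data (using $\scoreI^{\phi} ( \bbhid ) = 0$ so that the drift term becomes a centered multiplier sum), match the pieces term by term, and conclude with the same finite-dimensional convergence plus manageability/tightness argument. Your explicit treatment of the joint convergence and cross-covariances --- the reason the same $\phi_{i}$ must appear in both $\score_{\pi}^{\phi}$ and $\scoreI^{\phi}$ --- spells out a point the paper leaves implicit, but it is the same underlying argument.
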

    \begin{proof}
    A sketch of the proof is provided in Appendix.
    \end{proof}

    Note that Theorem~\ref{theorem2} is directly applicable to the other estimators mentioned in Section 2. Specifically, for the non-smooth rank-based estimator, we replace $\bbhi$ and $\bbhid$ with $\bbhn$ and $\bbhnd$, respectively, where $\bbhnd$ is the solution to $\scoreN^{\phi} \left( \bb \right) = 0$ and 
    \begin{align*}
        \scoreN^{\phi} \left( \bb \right) 
        = n^{-1} \sum_{i=1}^{n} \sum_{j=1}^{n} \Delta_{i} \left(\bZ_{i} - \bZ_j \right) I \left( e_{j} \left( \bb \right) \geq  e_{i}\left( \bb \right) \right) \phi_{i}.
    \end{align*}
    Likewise, for the least-squares estimator, we replace $\bbhi$ and $\bbhid$ with $\bbhl$ and $\bbhld$, respectively, where $\bbhld$ is the solution to $\scoreL^{\phi} \left( \bb \right) = 0$,  
    \begin{align*}
        \scoreL^{\phi} \left( \bb, b \right) = \sum_{i=1}^n (\bZ_{i} - \bar{\bZ})\left\{\widehat{T}^{\phi \ast}_{i} \left( b \right) + \bZ_{i}^{\top}\bb \right\} \phi_{i}, \ %\\
        \widehat{T}^{\phi \ast}_{i} \left( \bb \right) = \Delta_{i} \log X_{i} + \left( 1 - \Delta_{i} \right) \left[ \frac{\int_{e_{i} \left( \bb \right)} ud\widehat{F}^{\ast}_{\bb} \left( u \right) }{1 - \widehat{F}^{\ast}_{\bb}\left\{e_{i} \left( \bb \right)\right\}} - \bZ_{i}^{\top}\bb \right]  \mbox{ and } 
    \end{align*}
    $\widehat{F}^{\ast}_{\bb}(\cdot)$ is a perturbed version of $\widehat{F}_{\bb}(\cdot)$.
    
% ---------------------------------------------------------------------------
% ---------------------------------------------------------------------------
\subsection{Specific test forms} \label{subsec:testform}

% ---------------------------------------------------------------------------
\subsubsection{Functional form test} \label{subsubsec:formtest}
    We first consider checking the functional form of a covariate. 
    A general representation of the AFT model with a functional form for the $q^{\text{th}}$ covariate as $g_{q} \left( Z_{iq} \right)$ is as follows:
    \begin{align*}
        \log T_{i} &= g^{\top} \left( \bZ_{i} \right) \bb + \epsilon_{i}
    \end{align*}
    where $g^{\top} \left( \bZ_{i} \right) = \left\{ g_{1}(Z_{i1}), \cdots, g_{p}(Z_{ip}) \right\}$. If provided data satisfy the AFT model assumptions, each function $g_{i}(\cdot)$ for $\bZ_{i}$ is an identity function. Using the analogous arguments in \citet{lin1993checking}, the proposed test statistic for checking the functional form can be represented in the following way:
    \begin{align*}
        \sup_{\bz} \abs{ \wh_{q} \left( \bz \right) }
        \quad \text{where} \quad
        \wh_{q} \left( \bz \right) = \sum_{i=1}^{n}  I \left( Z_{iq} \leq \bz_{q} \right) \widehat{M}_{i} \left( \infty , \bbhi \right).
    \end{align*}
    Note that it is a special case of $\wh \left( t, \bz \right)$ with $t = \infty$ and $\bz_{r} = \infty$ for all $r \neq q (r, q = 1, \ldots, p)$. With this choice of $\bz$, $\pi_{i} \left( \bz \right)$ can be reexpressed as $\pi_{i} \left( \bz \right) =  I \left( \bZ_{i} \leq \bz \right) = \prod_{q=1}^{p} I \left( Z_{iq} \leq \bz \right) = I \left( Z_{iq} \leq \bz_{q} \right)$. Since it checks the functional form of the $q^{\text{th}}$ covariate, functional forms of other covariates are dominated by $\bZ_{r} = \infty$.
    
% ---------------------------------------------------------------------------
\subsubsection{Link function test} \label{subsubsec:linktest}
    The AFT model with a general link function can be written as 
    \begin{align*}
        \log T_{i} &= g \left( \bZ_{i}^{\top} \bb \right) + \epsilon_{i} 
    \end{align*}
    where $g \left( \cdot \right)$ is a link function. If the AFT model assumptions are satisfied, the link function $g \left( \cdot \right)$ is the identity function. Using the analogous arguments in \citet{lin1993checking}, the proposed test statistic for checking the link function is given as follows: 
    \begin{align*} 
        \sup_{\bz} \abs{\wh \left( \bz \right)} \quad \text{where} \quad \wh \left( \bz \right) = \sum_{i=1}^{n} I \left( \bZ_{i} \leq \bz \right) \widehat{M}_{i} \left( \infty, \bbhi \right),
    \end{align*}
    It is a special case of $\wh \left( t, \bz \right)$ with $t = \infty$ for $p>1$. When $p=1$, the test statistic for checking the link function reduces to that for checking the functional form. 

% ---------------------------------------------------------------------------
\subsubsection{Omnibus test}  \label{subsubsec:omnibustest}
    An omnibus test to detect an overall departure from the assumed AFT model can be conducted by using the following test statistic: 
    \begin{align*}
        \sup_{t,z} \abs{\wh \left( t, \bz \right)} \quad \text{where} \quad \wh \left( t, \bz \right)
        = \sum_{i=1}^{n} I \left( \bZ_{i} \leq \bz \right) \widehat{M}_{i} \left( t, \bbhi \right), 
    \end{align*}
    This is also a special case of $\wh \left( t, \bz \right)$ where $t$ and $\bz$ are considered simultaneously in finding the supremum. A similar test statistic with a single parameter stochastic process is considered in \citet{novak2013goodness, novak2015regression} in which the procedure is based on the non-smoothed estimator that solves \eqref{eq:sec:modest:3}. 
    
% ---------------------------------------------------------------------------
\subsubsection{Test procedures}  \label{subsubsec:testprocedures}
    By considering specific forms of $\w \left( t, \bz \right)$ in subsections \ref{subsubsec:formtest} - \ref{subsubsec:omnibustest}, several aspects of model misspecifications including the overall departure from the assumed AFT model, link function and functional form can be evaluated. Here, we consider two types of Kolmogorov-type supremum tests, so-called unstandardized and standardized tests, where 
    \begin{align*}
        \sup_{t, z} \abs{\w \left( t, \bz \right)} \qquad \text{and} \qquad \sup_{t, z} \abs{\frac{ \w \left( t, \bz \right) }{\sqrt{ \widehat{\text{V}} \left( \wh \left( t, \bz \right) \right) } }},
    \end{align*}
    where $\widehat{\text{V}} \left( \wh \left( t, \bz \right) \right)$ can be obtained from the sample variance of $\wh \left( t, \bz \right)$. The specific test procedure based on the unstandardized test statistics is as follows: 
    \begin{enumerate}
        \item Calculate the observed statistics $\w \left( t, \bz \right)$ and an absolute supremum of $\w \left( t, \bz \right)$, $W_{\text{obs}} = \underset{t, z} {\sup} \abs{\w \left( t, \bz \right)} $
        
        \item Generate $K$ approximated paths, $\left\{ \wh^{(l)} \left( t, \bz \right) \right\}_{l=1}^{K}$ and let $\wh_{\text{app}}^{(l)} = \underset{t, z}{\sup} \abs{\wh \left( t, \bz \right)}$ for $l = 1, \cdots, K$
        
        \item Let $l_{0}$ be the number of approximated paths satisfying $\left\{ W_{\text{obs}} \leq W_{\text{app}}^{(l)} \right\}_{l=1}^{K}$, $i.e.$, $l_{0} = \sum_{l=1}^{K} I \left( W_{\text{obs}} \leq W_{\text{app}}^{(l)} \right) $
        
        \item Consider the $p$-value as the ratio of how the observed statistic is embedded in the distribution of approximated paths.  $i.e.$ $\text{$p$-value} = \frac{l_{0}}{K}$
        
    \end{enumerate}
    
    The testing procedure based on the standardized test statistics can be performed in the same way except that $W_{\text{obs}}$ and $\wh_{\text{app}}^{(l)}$ in Steps 1 and 2 are replaced by their standardized counterparts. In the subsequent simulation studies and real data analysis sections, results from using both unstandardized and standardized test procedures are provided.

% ---------------------------------------------------------------------------
% ---------------------------------------------------------------------------
% ---------------------------------------------------------------------------
\section{Simulation Study} \label{sec:sim}

% ---------------------------------------------------------------------------
% ---------------------------------------------------------------------------
\subsection{Simulation Scenario 1} \label{subsec:sim1}
    To assess the finite sample performance of the proposed model-checking procedures, we conducted extensive simulation experiments under different sets of covariates, censoring rates, and sample sizes. In Simulation \hypertarget{scenario1}{Scenario 1}, we generated $T$ from an AFT model with linear and quadratic terms for the covariate $Z$.
    \begin{align} \tag{\ref{subsec:sim1}.1}
        \log T_{i} = - \beta_{0} - \beta_{1} Z_{i} - \gamma Z_{i}^{2} + \epsilon_{i}, \label{model:sim1}
    \end{align}
    where $Z_{i}$ is a normal random variable with mean $2$ and standard error $1$. We set $\beta_{0} = - 4$ and $\beta_{1} = 1$. $\gamma$ is set to increase by 0.1 from 0 to 0.5. The error term is randomly generated from the standard normal distribution. We consider two censoring rates (20\% and 40\%) with three different sample sizes ($N = 100, 300$, and $500$). The censoring time is independently generated from a log-normal distribution with mean $\tau$ and standard error $1$ where $\tau$ is determined to achieve the desired censoring rates around $20\%$ or $40\%$. For the test statistic, we consider both unstandardized and standardized versions. We considered two model misspecification aspects: the link function and functional form of covariates, as well as omnibus tests. We also included the test statistic proposed by \citet{novak2013goodness}, which is based on the non-smoothed estimator for regression coefficients. The null model assumed was a standard AFT model with a linear term for $Z$. When $\gamma = 0$, we assessed the type \Romannum{1} error rate. When $\gamma \neq 0$, the null model was misspecified by not accounting for the quadratic term, and we evaluated the power of the proposed test statistics. For each configuration, we generated 500 approximated sample paths from the null model to calculate $p$-values. We replicated the simulations 1,000 times. The induced smoothed and non-smoothed estimators for the regression coefficients were obtained using the \textbf{R} package \textbf{aftgee} \citep{chiou2014fitting}. The aforementioned model-checking procedures were implemented in the \textbf{R} package \textbf{afttest}, which is available in CRAN \citep{bae2022afttest}.
    
    In Simulation \hyperlink{scenario1}{Scenario 1}, we first assess the type \Romannum{1} error rate when $\gamma = 0$. When $\gamma \neq 0$, the assumed AFT model is misspecified by not accounting for the quadratic term and we assess the power of the proposed statistics in detecting this misspecification.
    \begin{table}[htp]
    \fontsize{8}{8}\selectfont
    \caption{\label{tab:sim1:result} Results for the omnibus (omni), link function (link), and functional form (form) tests under Simulation scenario 1. Monotone non-smooth method (mns) and monotone induced smoothed method (mis) are considered under the sample sizes of $n = 100, 300$, and $500$ with two different censoring rates of 20\% and 40\%. Each number in a cell represents the rejection ratio based on 1,000 replications for the standardized statistic (bold) or unstandardized statistic.} 
    \centering
    \begin{tabular}[htp]{cccccccccccccc}
        \toprule
        \multicolumn{2}{c}{censoring} & \multicolumn{6}{c}{20\%} & \multicolumn{6}{c}{40\%} \\
        \cmidrule(l{3pt}r{3pt}){1-2} \cmidrule(l{3pt}r{3pt}){3-8} \cmidrule(l{3pt}r{3pt}){9-14}
        \multicolumn{2}{c}{n} & \multicolumn{2}{c}{100} & \multicolumn{2}{c}{300} & \multicolumn{2}{c}{500} & \multicolumn{2}{c}{100} & \multicolumn{2}{c}{300} & \multicolumn{2}{c}{500} \\
        \cmidrule(l{3pt}r{3pt}){1-2} \cmidrule(l{3pt}r{3pt}){3-4} \cmidrule(l{3pt}r{3pt}){5-6} \cmidrule(l{3pt}r{3pt}){7-8} \cmidrule(l{3pt}r{3pt}){9-10} \cmidrule(l{3pt}r{3pt}){11-12} \cmidrule(l{3pt}r{3pt}){13-14}
        $\gamma$ & test & mns & mis & mns & mis & mns & mis & mns & mis & mns & mis & mns & mis\\
        \midrule
        \textbf{} & \textbf{} & \textbf{0.011} & \textbf{0.009} & \textbf{0.016} & \textbf{0.016} & \textbf{0.015} & \textbf{0.015} & \textbf{0.009} & \textbf{0.010} & \textbf{0.015} & \textbf{0.015} & \textbf{0.017} & \textbf{0.020}\\
        
         & \multirow{-2}{*}{omni} & 0.004 & 0.005 & 0.006 & 0.008 & 0.006 & 0.006 & 0.006 & 0.006 & 0.009 & 0.011 & 0.006 & 0.006\\
        \cmidrule(l{3pt}r{3pt}){2-2} \cmidrule(l{3pt}r{3pt}){3-4} \cmidrule(l{3pt}r{3pt}){5-6} \cmidrule(l{3pt}r{3pt}){3-4} \cmidrule(l{3pt}r{3pt}){5-6} \cmidrule(l{3pt}r{3pt}){7-8} \cmidrule(l{3pt}r{3pt}){9-10} \cmidrule(l{3pt}r{3pt}){11-12} \cmidrule(l{3pt}r{3pt}){13-14}
        
        \textbf{} & \textbf{} & \textbf{0.026} & \textbf{0.023} & \textbf{0.028} & \textbf{0.027} & \textbf{0.024} & \textbf{0.024} & \textbf{0.020} & \textbf{0.022} & \textbf{0.025} & \textbf{0.029} & \textbf{0.026} & \textbf{0.028}\\
        
         & \multirow{-2}{*}{link} & 0.011 & 0.012 & 0.014 & 0.010 & 0.008 & 0.010 & 0.011 & 0.011 & 0.014 & 0.019 & 0.011 & 0.010\\ 
        \cmidrule(l{3pt}r{3pt}){2-2} \cmidrule(l{3pt}r{3pt}){3-4} \cmidrule(l{3pt}r{3pt}){5-6} \cmidrule(l{3pt}r{3pt}){3-4} \cmidrule(l{3pt}r{3pt}){5-6} \cmidrule(l{3pt}r{3pt}){7-8} \cmidrule(l{3pt}r{3pt}){9-10} \cmidrule(l{3pt}r{3pt}){11-12} \cmidrule(l{3pt}r{3pt}){13-14}
        
        \textbf{} & \textbf{} & \textbf{0.024} & \textbf{0.025} & \textbf{0.027} & \textbf{0.028} & \textbf{0.025} & \textbf{0.026} & \textbf{0.023} & \textbf{0.020} & \textbf{0.029} & \textbf{0.027} & \textbf{0.029} & \textbf{0.028}\\ 
        
        \multirow{-6}{*}{0} & \multirow{-2}{*}{form} & 0.011 & 0.011 & 0.012 & 0.010 & 0.011 & 0.010 & 0.012 & 0.012 & 0.017 & 0.018 & 0.010 & 0.011\\
        \cmidrule(l{3pt}r{3pt}){1-14}
        
        \textbf{} & \textbf{} & \textbf{0.035} & \textbf{0.033} & \textbf{0.098} & \textbf{0.097} & \textbf{0.186} & \textbf{0.184} & \textbf{0.016} & \textbf{0.018} & \textbf{0.072} & \textbf{0.077} & \textbf{0.127} & \textbf{0.131}\\
        
         & \multirow{-2}{*}{omni} & 0.006 & 0.006 & 0.015 & 0.013 & 0.028 & 0.026 & 0.007 & 0.009 & 0.025 & 0.022 & 0.032 & 0.035\\
        \cmidrule(l{3pt}r{3pt}){2-2} \cmidrule(l{3pt}r{3pt}){3-4} \cmidrule(l{3pt}r{3pt}){5-6} \cmidrule(l{3pt}r{3pt}){3-4} \cmidrule(l{3pt}r{3pt}){5-6} \cmidrule(l{3pt}r{3pt}){7-8} \cmidrule(l{3pt}r{3pt}){9-10} \cmidrule(l{3pt}r{3pt}){11-12} \cmidrule(l{3pt}r{3pt}){13-14}
        
        \textbf{} & \textbf{} & \textbf{0.101} & \textbf{0.100} & \textbf{0.208} & \textbf{0.208} & \textbf{0.319} & \textbf{0.320} & \textbf{0.051} & \textbf{0.057} & \textbf{0.177} & \textbf{0.181} & \textbf{0.256} & \textbf{0.260}\\
        
         & \multirow{-2}{*}{link} & 0.013 & 0.014 & 0.022 & 0.022 & 0.040 & 0.038 & 0.015 & 0.018 & 0.029 & 0.030 & 0.040 & 0.038\\
        \cmidrule(l{3pt}r{3pt}){2-2} \cmidrule(l{3pt}r{3pt}){3-4} \cmidrule(l{3pt}r{3pt}){5-6} \cmidrule(l{3pt}r{3pt}){3-4} \cmidrule(l{3pt}r{3pt}){5-6} \cmidrule(l{3pt}r{3pt}){7-8} \cmidrule(l{3pt}r{3pt}){9-10} \cmidrule(l{3pt}r{3pt}){11-12} \cmidrule(l{3pt}r{3pt}){13-14}
        
        \textbf{} & \textbf{} & \textbf{0.098} & \textbf{0.101} & \textbf{0.214} & \textbf{0.208} & \textbf{0.324} & \textbf{0.320} & \textbf{0.054} & \textbf{0.055} & \textbf{0.173} & \textbf{0.182} & \textbf{0.258} & \textbf{0.256}\\
        
        \multirow{-6}{*}{0.1} & \multirow{-2}{*}{form} & 0.011 & 0.012 & 0.024 & 0.022 & 0.038 & 0.034 & 0.016 & 0.016 & 0.028 & 0.028 & 0.036 & 0.038\\
        \cmidrule(l{3pt}r{3pt}){1-14}
        
        \textbf{} & \textbf{} & \textbf{0.090} & \textbf{0.094} & \textbf{0.378} & \textbf{0.387} & \textbf{0.698} & \textbf{0.697} & \textbf{0.038} & \textbf{0.034} & \textbf{0.222} & \textbf{0.226} & \textbf{0.479} & \textbf{0.478}\\
        
         & \multirow{-2}{*}{omni} & 0.010 & 0.008 & 0.054 & 0.054 & 0.149 & 0.132 & 0.011 & 0.012 & 0.049 & 0.048 & 0.101 & 0.096\\
        \cmidrule(l{3pt}r{3pt}){2-2} \cmidrule(l{3pt}r{3pt}){3-4} \cmidrule(l{3pt}r{3pt}){5-6} \cmidrule(l{3pt}r{3pt}){3-4} \cmidrule(l{3pt}r{3pt}){5-6} \cmidrule(l{3pt}r{3pt}){7-8} \cmidrule(l{3pt}r{3pt}){9-10} \cmidrule(l{3pt}r{3pt}){11-12} \cmidrule(l{3pt}r{3pt}){13-14}
        
        \textbf{} & \textbf{} & \textbf{0.224} & \textbf{0.223} & \textbf{0.591} & \textbf{0.595} & \textbf{0.830} & \textbf{0.833} & \textbf{0.123} & \textbf{0.121} & \textbf{0.428} & \textbf{0.428} & \textbf{0.653} & \textbf{0.651}\\
        
         & \multirow{-2}{*}{link} & 0.020 & 0.025 & 0.085 & 0.088 & 0.197 & 0.188 & 0.019 & 0.024 & 0.070 & 0.064 & 0.128 & 0.122\\
        \cmidrule(l{3pt}r{3pt}){2-2} \cmidrule(l{3pt}r{3pt}){3-4} \cmidrule(l{3pt}r{3pt}){5-6} \cmidrule(l{3pt}r{3pt}){3-4} \cmidrule(l{3pt}r{3pt}){5-6} \cmidrule(l{3pt}r{3pt}){7-8} \cmidrule(l{3pt}r{3pt}){9-10} \cmidrule(l{3pt}r{3pt}){11-12} \cmidrule(l{3pt}r{3pt}){13-14}
        
        \textbf{} & \textbf{} & \textbf{0.228} & \textbf{0.233} & \textbf{0.589} & \textbf{0.598} & \textbf{0.835} & \textbf{0.835} & \textbf{0.117} & \textbf{0.118} & \textbf{0.428} & \textbf{0.429} & \textbf{0.650} & \textbf{0.651}\\
        
        \multirow{-6}{*}{0.2} & \multirow{-2}{*}{form} & 0.021 & 0.022 & 0.086 & 0.090 & 0.193 & 0.197 & 0.023 & 0.024 & 0.071 & 0.068 & 0.124 & 0.122\\
        \cmidrule(l{3pt}r{3pt}){1-14}
        
        \textbf{} & \textbf{} & \textbf{0.174} & \textbf{0.182} & \textbf{0.701} & \textbf{0.733} & \textbf{0.954} & \textbf{0.947} & \textbf{0.051} & \textbf{0.052} & \textbf{0.401} & \textbf{0.396} & \textbf{0.739} & \textbf{0.729}\\
        
         & \multirow{-2}{*}{omni} & 0.016 & 0.016 & 0.143 & 0.169 & 0.368 & 0.366 & 0.016 & 0.016 & 0.086 & 0.078 & 0.180 & 0.178\\
        \cmidrule(l{3pt}r{3pt}){2-2} \cmidrule(l{3pt}r{3pt}){3-4} \cmidrule(l{3pt}r{3pt}){5-6} \cmidrule(l{3pt}r{3pt}){3-4} \cmidrule(l{3pt}r{3pt}){5-6} \cmidrule(l{3pt}r{3pt}){7-8} \cmidrule(l{3pt}r{3pt}){9-10} \cmidrule(l{3pt}r{3pt}){11-12} \cmidrule(l{3pt}r{3pt}){13-14}
        
        \textbf{} & \textbf{} & \textbf{0.372} & \textbf{0.364} & \textbf{0.844} & \textbf{0.861} & \textbf{0.978} & \textbf{0.978} & \textbf{0.161} & \textbf{0.158} & \textbf{0.628} & \textbf{0.631} & \textbf{0.863} & \textbf{0.868}\\
        
         & \multirow{-2}{*}{link} & 0.035 & 0.035 & 0.202 & 0.222 & 0.460 & 0.454 & 0.031 & 0.034 & 0.109 & 0.113 & 0.228 & 0.229\\
        \cmidrule(l{3pt}r{3pt}){2-2} \cmidrule(l{3pt}r{3pt}){3-4} \cmidrule(l{3pt}r{3pt}){5-6} \cmidrule(l{3pt}r{3pt}){3-4} \cmidrule(l{3pt}r{3pt}){5-6} \cmidrule(l{3pt}r{3pt}){7-8} \cmidrule(l{3pt}r{3pt}){9-10} \cmidrule(l{3pt}r{3pt}){11-12} \cmidrule(l{3pt}r{3pt}){13-14}
        
        \textbf{} & \textbf{} & \textbf{0.366} & \textbf{0.361} & \textbf{0.840} & \textbf{0.855} & \textbf{0.978} & \textbf{0.981} & \textbf{0.167} & \textbf{0.159} & \textbf{0.637} & \textbf{0.633} & \textbf{0.867} & \textbf{0.875}\\
        
        \multirow{-6}{*}{0.3} & \multirow{-2}{*}{form} & 0.036 & 0.034 & 0.200 & 0.228 & 0.464 & 0.450 & 0.036 & 0.032 & 0.110 & 0.112 & 0.233 & 0.235\\
        \cmidrule(l{3pt}r{3pt}){1-14}
        
        \textbf{} & \textbf{} & \textbf{0.272} & \textbf{0.275} & \textbf{0.894} & \textbf{0.904} & \textbf{0.995} & \textbf{0.996} & \textbf{0.078} & \textbf{0.070} & \textbf{0.560} & \textbf{0.572} & \textbf{0.889} & \textbf{0.899}\\
        
         & \multirow{-2}{*}{omni} & 0.026 & 0.027 & 0.266 & 0.288 & 0.664 & 0.676 & 0.018 & 0.018 & 0.116 & 0.119 & 0.294 & 0.292\\
        \cmidrule(l{3pt}r{3pt}){2-2} \cmidrule(l{3pt}r{3pt}){3-4} \cmidrule(l{3pt}r{3pt}){5-6} \cmidrule(l{3pt}r{3pt}){3-4} \cmidrule(l{3pt}r{3pt}){5-6} \cmidrule(l{3pt}r{3pt}){7-8} \cmidrule(l{3pt}r{3pt}){9-10} \cmidrule(l{3pt}r{3pt}){11-12} \cmidrule(l{3pt}r{3pt}){13-14}
        
        \textbf{} & \textbf{} & \textbf{0.508} & \textbf{0.521} & \textbf{0.956} & \textbf{0.960} & \textbf{0.998} & \textbf{0.999} & \textbf{0.223} & \textbf{0.214} & \textbf{0.784} & \textbf{0.785} & \textbf{0.958} & \textbf{0.957}\\
        
         & \multirow{-2}{*}{link} & 0.050 & 0.052 & 0.352 & 0.378 & 0.746 & 0.756 & 0.040 & 0.036 & 0.166 & 0.168 & 0.358 & 0.357\\
        \cmidrule(l{3pt}r{3pt}){2-2} \cmidrule(l{3pt}r{3pt}){3-4} \cmidrule(l{3pt}r{3pt}){5-6} \cmidrule(l{3pt}r{3pt}){3-4} \cmidrule(l{3pt}r{3pt}){5-6} \cmidrule(l{3pt}r{3pt}){7-8} \cmidrule(l{3pt}r{3pt}){9-10} \cmidrule(l{3pt}r{3pt}){11-12} \cmidrule(l{3pt}r{3pt}){13-14}
        
        \textbf{} & \textbf{} & \textbf{0.507} & \textbf{0.511} & \textbf{0.955} & \textbf{0.956} & \textbf{0.999} & \textbf{0.999} & \textbf{0.222} & \textbf{0.221} & \textbf{0.784} & \textbf{0.785} & \textbf{0.959} & \textbf{0.957}\\
        
        \multirow{-6}{*}{0.4} & \multirow{-2}{*}{form} & 0.053 & 0.054 & 0.357 & 0.372 & 0.744 & 0.757 & 0.040 & 0.038 & 0.170 & 0.168 & 0.357 & 0.360\\
        \cmidrule(l{3pt}r{3pt}){1-14}
        
        \textbf{} & \textbf{} & \textbf{0.387} & \textbf{0.394} & \textbf{0.975} & \textbf{0.975} & \textbf{1.000} & \textbf{1.000} & \textbf{0.096} & \textbf{0.092} & \textbf{0.698} & \textbf{0.704} & \textbf{0.960} & \textbf{0.960}\\
        
         & \multirow{-2}{*}{omni} & 0.044 & 0.043 & 0.461 & 0.476 & 0.874 & 0.897 & 0.020 & 0.019 & 0.168 & 0.168 & 0.418 & 0.422\\
        \cmidrule(l{3pt}r{3pt}){2-2} \cmidrule(l{3pt}r{3pt}){3-4} \cmidrule(l{3pt}r{3pt}){5-6} \cmidrule(l{3pt}r{3pt}){3-4} \cmidrule(l{3pt}r{3pt}){5-6} \cmidrule(l{3pt}r{3pt}){7-8} \cmidrule(l{3pt}r{3pt}){9-10} \cmidrule(l{3pt}r{3pt}){11-12} \cmidrule(l{3pt}r{3pt}){13-14}
        
        \textbf{} & \textbf{} & \textbf{0.627} & \textbf{0.637} & \textbf{0.989} & \textbf{0.988} & \textbf{1.000} & \textbf{1.000} & \textbf{0.277} & \textbf{0.282} & \textbf{0.879} & \textbf{0.875} & \textbf{0.981} & \textbf{0.985}\\
        
         & \multirow{-2}{*}{link} & 0.076 & 0.079 & 0.562 & 0.564 & 0.922 & 0.935 & 0.044 & 0.046 & 0.232 & 0.227 & 0.503 & 0.514\\
        \cmidrule(l{3pt}r{3pt}){2-2} \cmidrule(l{3pt}r{3pt}){3-4} \cmidrule(l{3pt}r{3pt}){5-6} \cmidrule(l{3pt}r{3pt}){3-4} \cmidrule(l{3pt}r{3pt}){5-6} \cmidrule(l{3pt}r{3pt}){7-8} \cmidrule(l{3pt}r{3pt}){9-10} \cmidrule(l{3pt}r{3pt}){11-12} \cmidrule(l{3pt}r{3pt}){13-14}
        
        \textbf{} & \textbf{} & \textbf{0.627} & \textbf{0.631} & \textbf{0.990} & \textbf{0.990} & \textbf{1.000} & \textbf{1.000} & \textbf{0.266} & \textbf{0.278} & \textbf{0.874} & \textbf{0.880} & \textbf{0.986} & \textbf{0.985}\\
        
        \multirow{-6}{*}{0.5} & \multirow{-2}{*}{form} & 0.073 & 0.076 & 0.558 & 0.572 & 0.915 & 0.928 & 0.043 & 0.046 & 0.236 & 0.230 & 0.513 & 0.514\\
        \bottomrule
    \end{tabular}
\end{table}
    Table \ref{tab:sim1:result} presents the simulation results for the omnibus, link function, and functional form tests under Simulation \hyperlink{scenario1}{Scenario 1}. When the censoring rate is 20\% and $\gamma = 0$, the type \Romannum{1} error rates are reasonably well controlled for all the test statistics considered. When $\gamma > 0$ and fixed, the powers tend to increase as the sample sizes increase. For increasing values of $\gamma$, the corresponding rejection ratios also increase. In general, standardized tests produce higher power compared to their unstandardized counterparts. This implies that a departure from the assumed AFT model can be more easily detected by the standardized test procedure than its unstandardized version. For example, when $\gamma = 0.2$ with the 20\% censoring and sample size 100, checking the functional form using the based on and induced smoothed method, the rejection ratio for the unstandardized and standardized tests are 0.022 and 0.233, respectively. For the increased censoring rate of 40\%, overall findings remain similar with smaller powers compared to those with the same sample sizes. The results for the non-smooth test are, in general, similar to those for the proposed test. We also observe that the results of the link function test and functional form test are very similar. As discussed in \ref{subsubsec:linktest}, this is expected because we consider only one continuous covariate.

    % % \input{code/figure_sim1_mns_unstd}
    % % \input{code/figure_sim1_mns_std}
    % \input{code/figure_sim1_mis_unstd}
    % \input{code/figure_sim1_mis_std}
    % \input{code/figure_sim1_mns}
    \begin{figure}[htp]
    \centering
    \begin{subfigure}[b]{0.45\textwidth}
        \centering
        \captionsetup{justification=raggedright,singlelinecheck = false}
	    \includegraphics[scale=0.39]{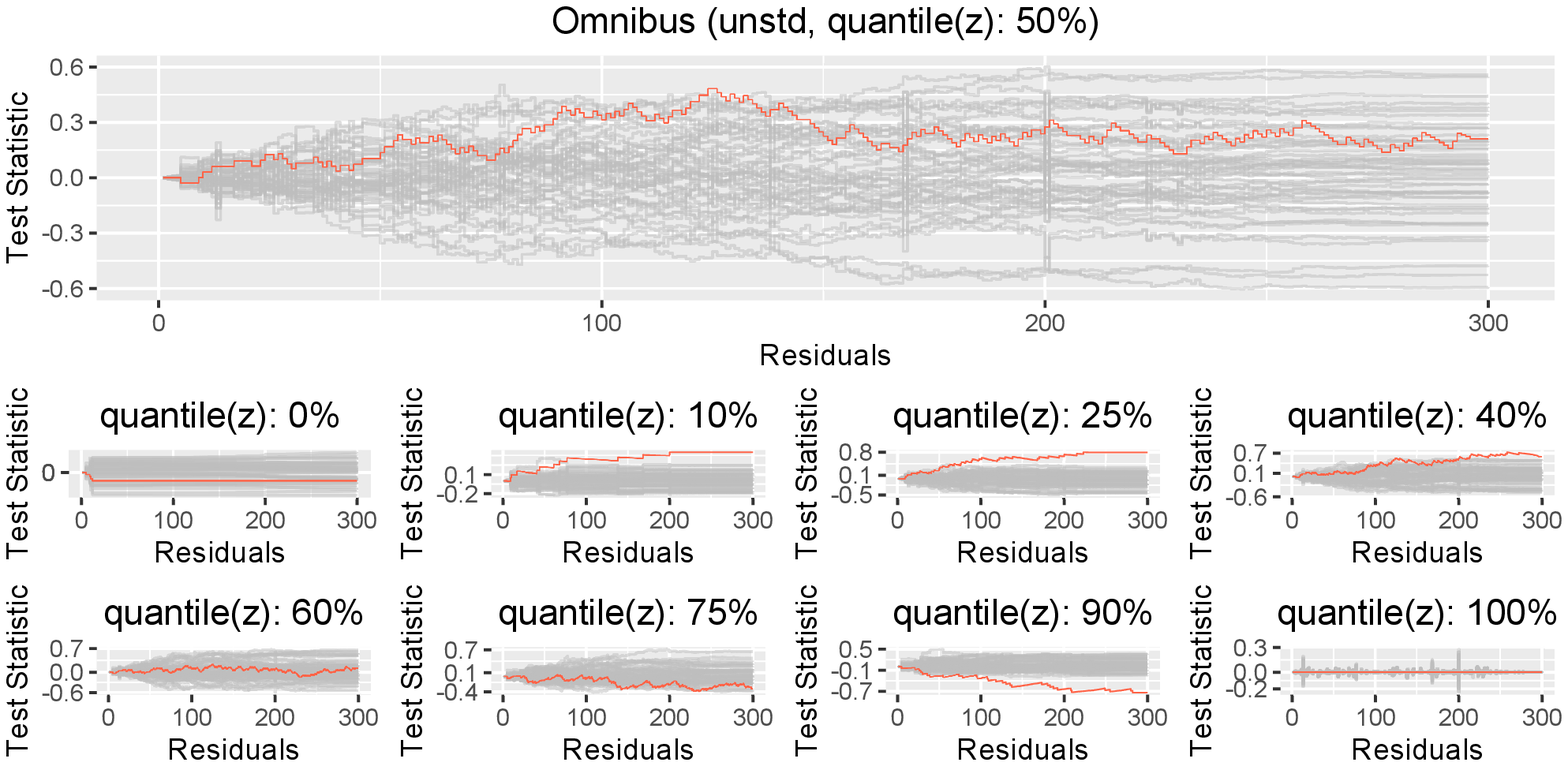} \\
	    \caption{afttestplot: unstandardized omnibus test with \texttt{mis} for model \eqref{model:sim1}} \label{fig:sim1_omni_mis_unstd} \vspace{0em}
    \end{subfigure}
    \hfill
    \begin{subfigure}[b]{0.45\textwidth}
        \centering
        \captionsetup{justification=raggedright,singlelinecheck = false}
	    \includegraphics[scale=0.39]{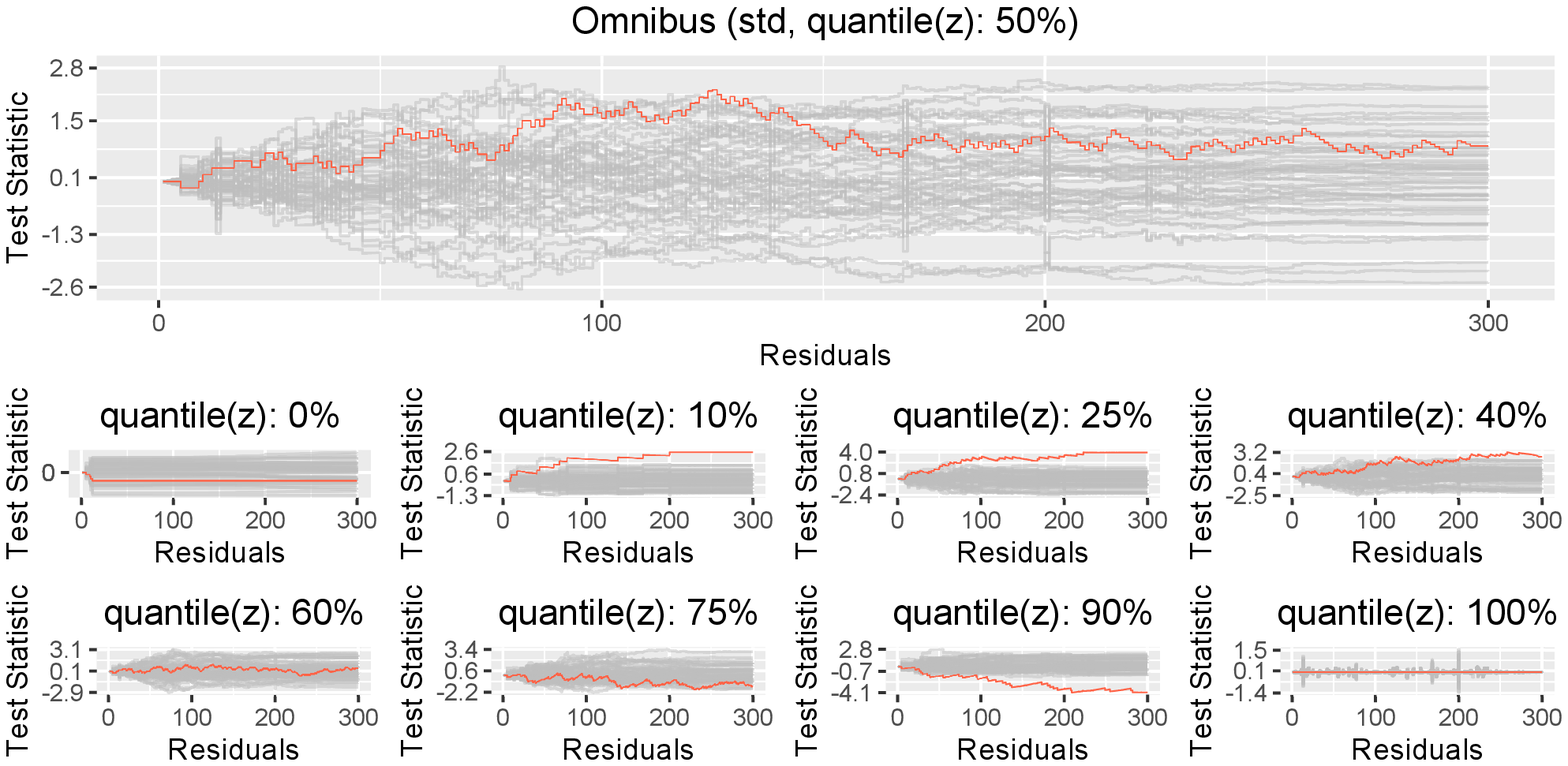} \\
	    \caption{afttestplot: standardized omnibus test with \texttt{mis} for model \eqref{model:sim1}} \label{fig:sim1_omni_mis_std} \vspace{0em}
    \end{subfigure}
    \\
    \begin{subfigure}[b]{0.45\textwidth}
        \centering
        \captionsetup{justification=raggedright,singlelinecheck = false}
	    \includegraphics[scale=0.39]{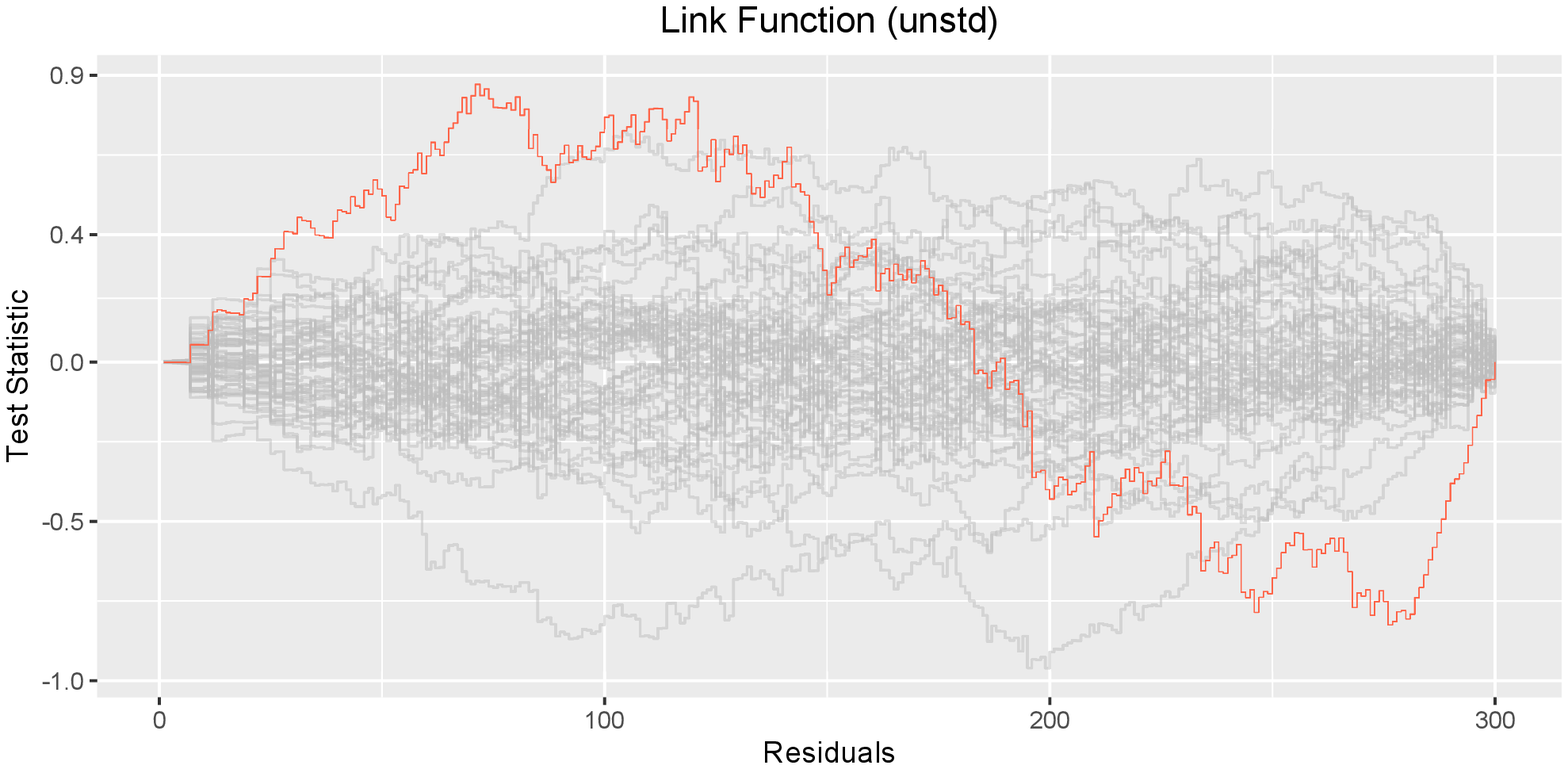} \\
	    \caption{afttestplot: unstandardized link function test with \texttt{mis} for model \eqref{model:sim1}} \label{fig:sim1_link_mis_std} \vspace{0em}
    \end{subfigure}
    \hfill
    \begin{subfigure}[b]{0.45\textwidth}
        \centering
        \captionsetup{justification=raggedright,singlelinecheck = false}
	    \includegraphics[scale=0.39]{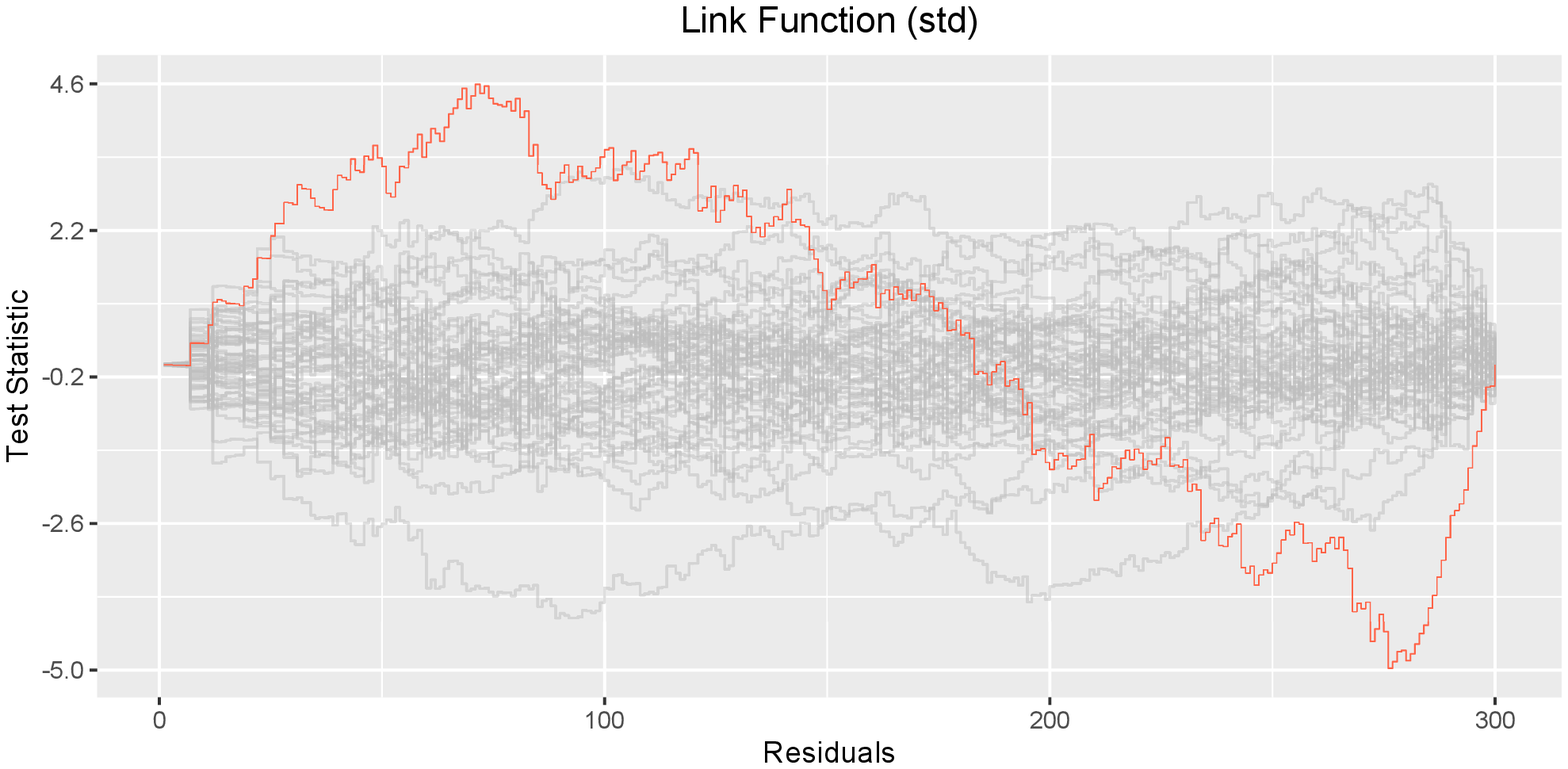} \\
	    \caption{afttestplot: standardized link function test with \texttt{mis} for model \eqref{model:sim1}} \label{fig:sim1_link_mis_std} \vspace{0em}
    \end{subfigure}
    \\
    \begin{subfigure}[b]{0.45\textwidth}
        \centering
        \captionsetup{justification=raggedright,singlelinecheck = false}
	    \includegraphics[scale=0.39]{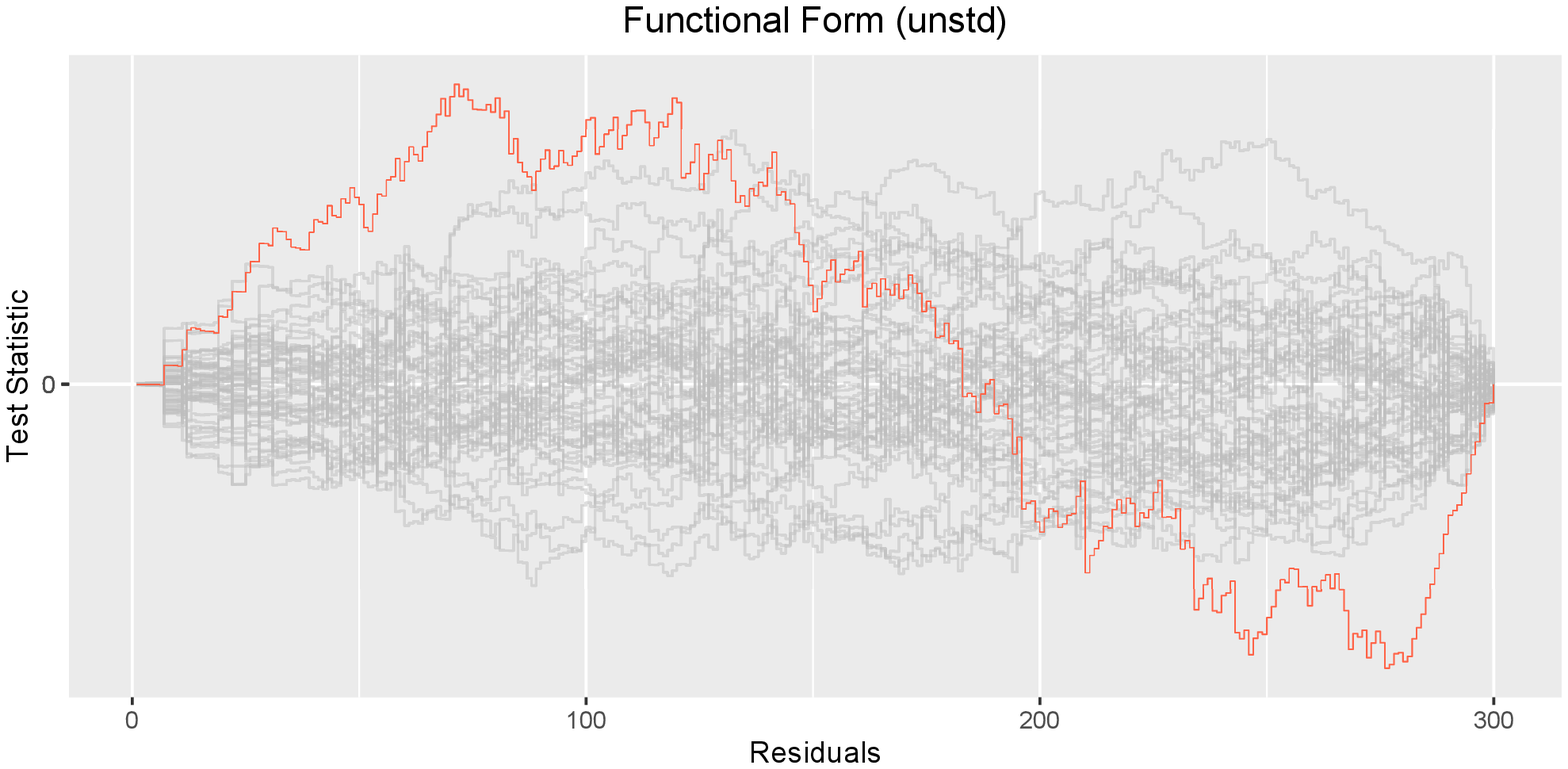} \\
	    \caption{afttestplot: unstandardized functional form test with \texttt{mis} for model \eqref{model:sim1}} \label{fig:sim1_form_mis_unstd} \vspace{0em}
    \end{subfigure}
    \hfill
    \begin{subfigure}[b]{0.45\textwidth}
        \centering
        \captionsetup{justification=raggedright,singlelinecheck = false}
	    \includegraphics[scale=0.39]{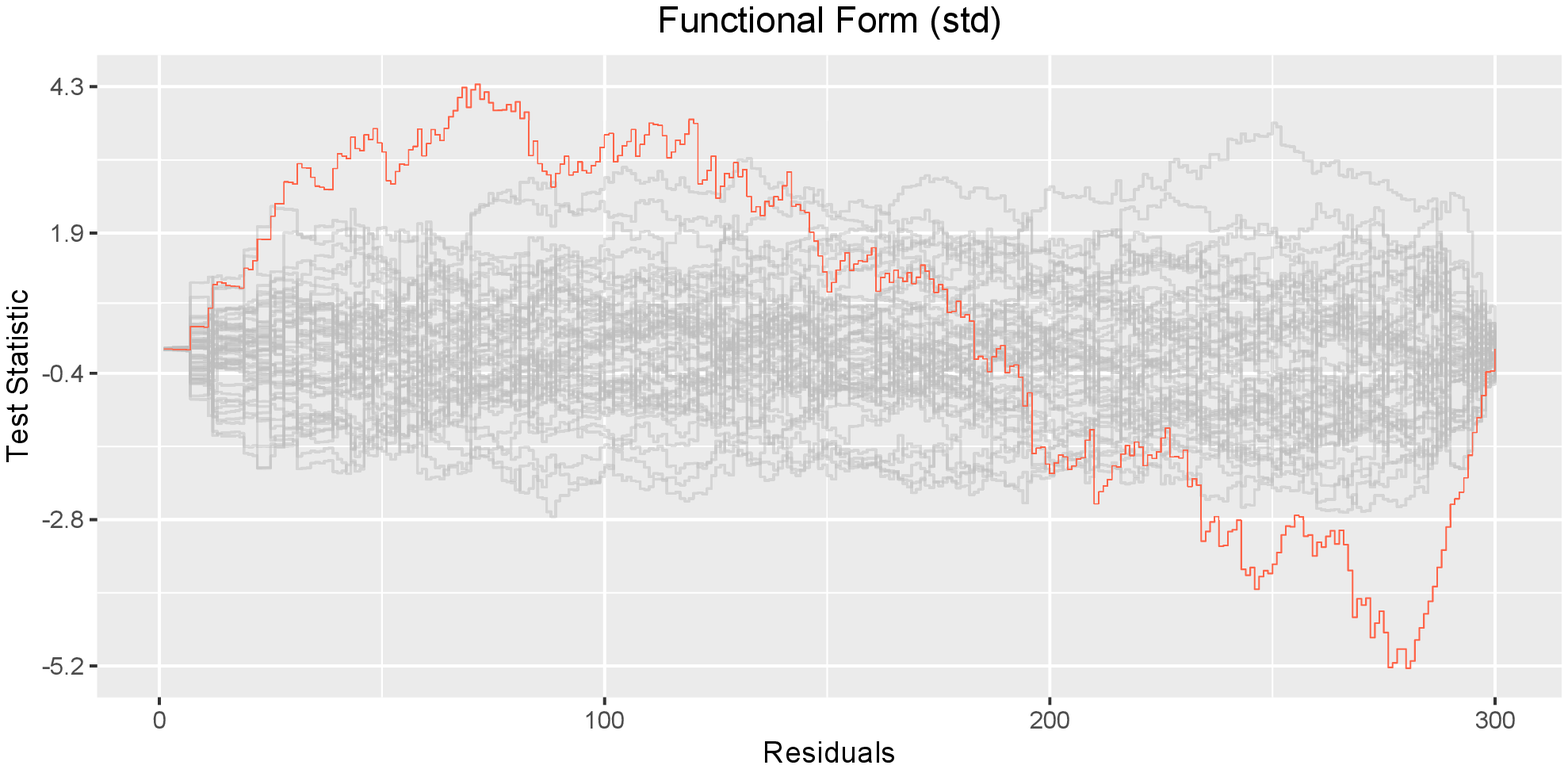} \\
	    \caption{afttestplot: standardized functional form test with \texttt{mis} for model \eqref{model:sim1}} \label{fig:sim1_form_mis_std} \vspace{0em}
    \end{subfigure}
    \caption{\label{fig:sim1:mis} Plots of sample paths based on the proposed unstandardized and standardized test procedures for Simulation scenario 1 with $\gamma = 0.3$, 20\% censoring rate and the sample size of 300 (induced-smoothed). $x$-axis indicates the rank of the log transformed residuals. The sample path for the test statistic (red) overlays by 50 approximated sample paths under the null (grey) are displayed. Top panel is the plot for the omnibus test under the different quantiles of covariates. Bottom two panels are the plots for testing the link function (left) and functional form (right)}
\end{figure}
    Figure \ref{fig:sim1:mis} present plots of sample paths for the unstandardized and standardized test procedures, respectively, for Simulation \hyperlink{scenario1}{Scenario 1}. Both test procedures are based on a 20\% censoring rate, a sample size of 300, and a $\gamma = 0.3$. The plots show the test statistic's sample path based on the observed data in red, with 50 simulated sample paths under the null hypothesis in grey. The plots are divided into three panels, showing the omnibus, link function, and functional form tests. We observe some departure from the null at the lower and upper ranks of log-transformed residuals when testing the link function and functional form using the unstandardized and standardized test procedure in Figure \ref{fig:sim1:mis} ((a), (c), and (e)). These departures become more obvious in Figure \ref{fig:sim1:mis}  ((b), (d), and (f)), where the standardized test procedure is considered.
    
% ---------------------------------------------------------------------------
% ---------------------------------------------------------------------------
\subsection{Simulation scenario 2} \label{subsec:sim2}
    In Simulation \hypertarget{scenario2}{Scenario 2}, we consider a setting similar to that in Simulation \hyperlink{scenario1}{Scenario 1} but  with the addition of an extra covariate. Specifically, 
    \begin{align} \tag{\ref{subsec:sim2}.1}
        \log T_{i} = - \beta_{0} - \beta_{1} Z_{1i} - \beta_{2} Z_{2i} - \gamma Z^{2}_{2i} + \epsilon_{i}, \label{model:sim2}
    \end{align}
    where $Z_{1i} \in \left\{ 0, 1 \right\}$ is Bernoulli with probability $0.5$ and $Z_{2i}$ is the normal random variable with mean $2$ and standard error $1$. Each parameter is set to be $\beta_{0} = - 4$, $\beta_{1} = 1$, $\beta_{2} = 1$ and $\gamma$ varies from 0 to 0.5 with 0.1 increment as in Simulation \hyperlink{scenario1}{Scenario 1}. 
    \begin{table}[htp]
    \fontsize{8}{8}\selectfont
    \caption{\label{tab:sim2:result} Results for the omnibus (omni), link function (link), and functional form (form) tests under Simulation scenario 2. Monotone non-smooth method (mns) and monotone induced smoothed method (mis) are considered under the sample sizes of $n = 100, 300$, and $500$ with two different censoring rates of 20\% and 40\%. Each number in a cell represents the rejection ratio based on 1,000 replications for the standardized statistic (bold) or unstandardized statistic.}
    \centering
    \begin{tabular}[htp]{cccccccccccccc}
        \toprule
        \multicolumn{2}{c}{censoring} & \multicolumn{6}{c}{20\%} & \multicolumn{6}{c}{40\%} \\
        \cmidrule(l{3pt}r{3pt}){1-2} \cmidrule(l{3pt}r{3pt}){3-8} \cmidrule(l{3pt}r{3pt}){9-14}
        \multicolumn{2}{c}{n} & \multicolumn{2}{c}{100} & \multicolumn{2}{c}{300} & \multicolumn{2}{c}{500} & \multicolumn{2}{c}{100} & \multicolumn{2}{c}{300} & \multicolumn{2}{c}{500} \\
        \cmidrule(l{3pt}r{3pt}){1-2} \cmidrule(l{3pt}r{3pt}){3-4} \cmidrule(l{3pt}r{3pt}){5-6} \cmidrule(l{3pt}r{3pt}){7-8} \cmidrule(l{3pt}r{3pt}){9-10} \cmidrule(l{3pt}r{3pt}){11-12} \cmidrule(l{3pt}r{3pt}){13-14}
        $\gamma$ & test & mns & mis & mns & mis & mns & mis & mns & mis & mns & mis & mns & mis\\
        \midrule
        \textbf{} & \textbf{} & \textbf{0.005} & \textbf{0.007} & \textbf{0.012} & \textbf{0.011} & \textbf{0.009} & \textbf{0.009} & \textbf{0.002} & \textbf{0.002} & \textbf{0.006} & \textbf{0.006} & \textbf{0.013} & \textbf{0.014}\\
        
         & \multirow{-2}{*}{omni} & 0.002 & 0.002 & 0.004 & 0.004 & 0.004 & 0.004 & 0.002 & 0.002 & 0.004 & 0.006 & 0.007 & 0.006\\
        \cmidrule(l{3pt}r{3pt}){2-2} \cmidrule(l{3pt}r{3pt}){3-4} \cmidrule(l{3pt}r{3pt}){5-6} \cmidrule(l{3pt}r{3pt}){3-4} \cmidrule(l{3pt}r{3pt}){5-6} \cmidrule(l{3pt}r{3pt}){7-8} \cmidrule(l{3pt}r{3pt}){9-10} \cmidrule(l{3pt}r{3pt}){11-12} \cmidrule(l{3pt}r{3pt}){13-14}
        
        \textbf{} & \textbf{} & \textbf{0.020} & \textbf{0.018} & \textbf{0.032} & \textbf{0.032} & \textbf{0.034} & \textbf{0.035} & \textbf{0.011} & \textbf{0.010} & \textbf{0.032} & \textbf{0.035} & \textbf{0.035} & \textbf{0.034}\\
        
         & \multirow{-2}{*}{link} & 0.006 & 0.008 & 0.006 & 0.006 & 0.009 & 0.010 & 0.008 & 0.006 & 0.007 & 0.008 & 0.011 & 0.013\\
        \cmidrule(l{3pt}r{3pt}){2-2} \cmidrule(l{3pt}r{3pt}){3-4} \cmidrule(l{3pt}r{3pt}){5-6} \cmidrule(l{3pt}r{3pt}){3-4} \cmidrule(l{3pt}r{3pt}){5-6} \cmidrule(l{3pt}r{3pt}){7-8} \cmidrule(l{3pt}r{3pt}){9-10} \cmidrule(l{3pt}r{3pt}){11-12} \cmidrule(l{3pt}r{3pt}){13-14}
        
        \textbf{} & \textbf{} & \textbf{0.024} & \textbf{0.023} & \textbf{0.024} & \textbf{0.024} & \textbf{0.022} & \textbf{0.022} & \textbf{0.014} & \textbf{0.015} & \textbf{0.024} & \textbf{0.021} & \textbf{0.021} & \textbf{0.020}\\
        
        \multirow{-6}{*}{0} & \multirow{-2}{*}{form} & 0.005 & 0.005 & 0.002 & 0.004 & 0.006 & 0.006 & 0.007 & 0.005 & 0.007 & 0.007 & 0.009 & 0.007\\
        \cmidrule(l{3pt}r{3pt}){1-14}
        
        \textbf{} & \textbf{} & \textbf{0.018} & \textbf{0.018} & \textbf{0.084} & \textbf{0.086} & \textbf{0.188} & \textbf{0.186} & \textbf{0.002} & \textbf{0.002} & \textbf{0.029} & \textbf{0.031} & \textbf{0.091} & \textbf{0.092}\\
        
         & \multirow{-2}{*}{omni} & 0.003 & 0.002 & 0.010 & 0.012 & 0.021 & 0.018 & 0.005 & 0.004 & 0.009 & 0.010 & 0.023 & 0.020\\
        \cmidrule(l{3pt}r{3pt}){2-2} \cmidrule(l{3pt}r{3pt}){3-4} \cmidrule(l{3pt}r{3pt}){5-6} \cmidrule(l{3pt}r{3pt}){3-4} \cmidrule(l{3pt}r{3pt}){5-6} \cmidrule(l{3pt}r{3pt}){7-8} \cmidrule(l{3pt}r{3pt}){9-10} \cmidrule(l{3pt}r{3pt}){11-12} \cmidrule(l{3pt}r{3pt}){13-14}
        
        \textbf{} & \textbf{} & \textbf{0.070} & \textbf{0.068} & \textbf{0.212} & \textbf{0.208} & \textbf{0.328} & \textbf{0.321} & \textbf{0.034} & \textbf{0.032} & \textbf{0.153} & \textbf{0.154} & \textbf{0.252} & \textbf{0.245}\\
        
         & \multirow{-2}{*}{link} & 0.007 & 0.007 & 0.014 & 0.015 & 0.037 & 0.033 & 0.008 & 0.007 & 0.016 & 0.015 & 0.035 & 0.035\\
        \cmidrule(l{3pt}r{3pt}){2-2} \cmidrule(l{3pt}r{3pt}){3-4} \cmidrule(l{3pt}r{3pt}){5-6} \cmidrule(l{3pt}r{3pt}){3-4} \cmidrule(l{3pt}r{3pt}){5-6} \cmidrule(l{3pt}r{3pt}){7-8} \cmidrule(l{3pt}r{3pt}){9-10} \cmidrule(l{3pt}r{3pt}){11-12} \cmidrule(l{3pt}r{3pt}){13-14}
        
        \textbf{} & \textbf{} & \textbf{0.088} & \textbf{0.093} & \textbf{0.180} & \textbf{0.176} & \textbf{0.302} & \textbf{0.300} & \textbf{0.034} & \textbf{0.038} & \textbf{0.152} & \textbf{0.146} & \textbf{0.234} & \textbf{0.235}\\
        
        \multirow{-6}{*}{0.1} & \multirow{-2}{*}{form} & 0.007 & 0.006 & 0.012 & 0.010 & 0.028 & 0.025 & 0.006 & 0.009 & 0.015 & 0.015 & 0.032 & 0.032\\
        \cmidrule(l{3pt}r{3pt}){1-14}
        
        \textbf{} & \textbf{} & \textbf{0.050} & \textbf{0.048} & \textbf{0.346} & \textbf{0.336} & \textbf{0.648} & \textbf{0.648} & \textbf{0.007} & \textbf{0.004} & \textbf{0.125} & \textbf{0.122} & \textbf{0.324} & \textbf{0.328}\\
        
         & \multirow{-2}{*}{omni} & 0.005 & 0.007 & 0.033 & 0.034 & 0.116 & 0.112 & 0.006 & 0.006 & 0.025 & 0.024 & 0.072 & 0.070\\
        \cmidrule(l{3pt}r{3pt}){2-2} \cmidrule(l{3pt}r{3pt}){3-4} \cmidrule(l{3pt}r{3pt}){5-6} \cmidrule(l{3pt}r{3pt}){3-4} \cmidrule(l{3pt}r{3pt}){5-6} \cmidrule(l{3pt}r{3pt}){7-8} \cmidrule(l{3pt}r{3pt}){9-10} \cmidrule(l{3pt}r{3pt}){11-12} \cmidrule(l{3pt}r{3pt}){13-14}
        
        \textbf{} & \textbf{} & \textbf{0.151} & \textbf{0.146} & \textbf{0.543} & \textbf{0.550} & \textbf{0.760} & \textbf{0.752} & \textbf{0.068} & \textbf{0.064} & \textbf{0.330} & \textbf{0.330} & \textbf{0.569} & \textbf{0.565}\\
        
         & \multirow{-2}{*}{link} & 0.015 & 0.016 & 0.053 & 0.053 & 0.150 & 0.150 & 0.012 & 0.011 & 0.038 & 0.040 & 0.094 & 0.090\\
        \cmidrule(l{3pt}r{3pt}){2-2} \cmidrule(l{3pt}r{3pt}){3-4} \cmidrule(l{3pt}r{3pt}){5-6} \cmidrule(l{3pt}r{3pt}){3-4} \cmidrule(l{3pt}r{3pt}){5-6} \cmidrule(l{3pt}r{3pt}){7-8} \cmidrule(l{3pt}r{3pt}){9-10} \cmidrule(l{3pt}r{3pt}){11-12} \cmidrule(l{3pt}r{3pt}){13-14}
        
        \textbf{} & \textbf{} & \textbf{0.198} & \textbf{0.202} & \textbf{0.565} & \textbf{0.558} & \textbf{0.784} & \textbf{0.779} & \textbf{0.080} & \textbf{0.087} & \textbf{0.390} & \textbf{0.384} & \textbf{0.625} & \textbf{0.620}\\
        
        \multirow{-6}{*}{0.2} & \multirow{-2}{*}{form} & 0.014 & 0.013 & 0.048 & 0.050 & 0.164 & 0.161 & 0.013 & 0.011 & 0.041 & 0.035 & 0.097 & 0.095\\
        \cmidrule(l{3pt}r{3pt}){1-14}
        
        \textbf{} & \textbf{} & \textbf{0.104} & \textbf{0.109} & \textbf{0.694} & \textbf{0.707} & \textbf{0.941} & \textbf{0.942} & \textbf{0.011} & \textbf{0.011} & \textbf{0.222} & \textbf{0.206} & \textbf{0.574} & \textbf{0.574}\\
        
         & \multirow{-2}{*}{omni} & 0.009 & 0.010 & 0.088 & 0.096 & 0.297 & 0.288 & 0.010 & 0.007 & 0.040 & 0.040 & 0.140 & 0.132\\
        \cmidrule(l{3pt}r{3pt}){2-2} \cmidrule(l{3pt}r{3pt}){3-4} \cmidrule(l{3pt}r{3pt}){5-6} \cmidrule(l{3pt}r{3pt}){3-4} \cmidrule(l{3pt}r{3pt}){5-6} \cmidrule(l{3pt}r{3pt}){7-8} \cmidrule(l{3pt}r{3pt}){9-10} \cmidrule(l{3pt}r{3pt}){11-12} \cmidrule(l{3pt}r{3pt}){13-14}
        
        \textbf{} & \textbf{} & \textbf{0.273} & \textbf{0.258} & \textbf{0.807} & \textbf{0.814} & \textbf{0.955} & \textbf{0.955} & \textbf{0.092} & \textbf{0.089} & \textbf{0.486} & \textbf{0.489} & \textbf{0.763} & \textbf{0.757}\\
        
         & \multirow{-2}{*}{link} & 0.024 & 0.022 & 0.137 & 0.143 & 0.370 & 0.364 & 0.018 & 0.018 & 0.062 & 0.058 & 0.180 & 0.168\\
        \cmidrule(l{3pt}r{3pt}){2-2} \cmidrule(l{3pt}r{3pt}){3-4} \cmidrule(l{3pt}r{3pt}){5-6} \cmidrule(l{3pt}r{3pt}){3-4} \cmidrule(l{3pt}r{3pt}){5-6} \cmidrule(l{3pt}r{3pt}){7-8} \cmidrule(l{3pt}r{3pt}){9-10} \cmidrule(l{3pt}r{3pt}){11-12} \cmidrule(l{3pt}r{3pt}){13-14}
        
        \textbf{} & \textbf{} & \textbf{0.342} & \textbf{0.345} & \textbf{0.857} & \textbf{0.853} & \textbf{0.971} & \textbf{0.973} & \textbf{0.115} & \textbf{0.110} & \textbf{0.580} & \textbf{0.582} & \textbf{0.849} & \textbf{0.842}\\
        
        \multirow{-6}{*}{0.3} & \multirow{-2}{*}{form} & 0.023 & 0.023 & 0.158 & 0.169 & 0.453 & 0.444 & 0.020 & 0.021 & 0.064 & 0.062 & 0.188 & 0.184\\
        \cmidrule(l{3pt}r{3pt}){1-14}
        
        \textbf{} & \textbf{} & \textbf{0.176} & \textbf{0.172} & \textbf{0.876} & \textbf{0.873} & \textbf{0.993} & \textbf{0.993} & \textbf{0.012} & \textbf{0.014} & \textbf{0.355} & \textbf{0.353} & \textbf{0.765} & \textbf{0.765}\\
        
         & \multirow{-2}{*}{omni} & 0.015 & 0.013 & 0.160 & 0.168 & 0.513 & 0.518 & 0.007 & 0.005 & 0.062 & 0.060 & 0.218 & 0.212\\
        \cmidrule(l{3pt}r{3pt}){2-2} \cmidrule(l{3pt}r{3pt}){3-4} \cmidrule(l{3pt}r{3pt}){5-6} \cmidrule(l{3pt}r{3pt}){3-4} \cmidrule(l{3pt}r{3pt}){5-6} \cmidrule(l{3pt}r{3pt}){7-8} \cmidrule(l{3pt}r{3pt}){9-10} \cmidrule(l{3pt}r{3pt}){11-12} \cmidrule(l{3pt}r{3pt}){13-14}
        
        \textbf{} & \textbf{} & \textbf{0.355} & \textbf{0.358} & \textbf{0.922} & \textbf{0.920} & \textbf{0.994} & \textbf{0.997} & \textbf{0.121} & \textbf{0.124} & \textbf{0.628} & \textbf{0.636} & \textbf{0.878} & \textbf{0.876}\\
        
         & \multirow{-2}{*}{link} & 0.031 & 0.035 & 0.245 & 0.253 & 0.605 & 0.605 & 0.023 & 0.023 & 0.100 & 0.101 & 0.268 & 0.272\\
        \cmidrule(l{3pt}r{3pt}){2-2} \cmidrule(l{3pt}r{3pt}){3-4} \cmidrule(l{3pt}r{3pt}){5-6} \cmidrule(l{3pt}r{3pt}){3-4} \cmidrule(l{3pt}r{3pt}){5-6} \cmidrule(l{3pt}r{3pt}){7-8} \cmidrule(l{3pt}r{3pt}){9-10} \cmidrule(l{3pt}r{3pt}){11-12} \cmidrule(l{3pt}r{3pt}){13-14}
        
        \textbf{} & \textbf{} & \textbf{0.462} & \textbf{0.462} & \textbf{0.962} & \textbf{0.965} & \textbf{0.999} & \textbf{0.998} & \textbf{0.158} & \textbf{0.156} & \textbf{0.748} & \textbf{0.738} & \textbf{0.943} & \textbf{0.948}\\
        
        \multirow{-6}{*}{0.4} & \multirow{-2}{*}{form} & 0.040 & 0.040 & 0.318 & 0.342 & 0.734 & 0.736 & 0.024 & 0.021 & 0.111 & 0.109 & 0.306 & 0.310\\
        \cmidrule(l{3pt}r{3pt}){1-14}
        
        \textbf{} & \textbf{} & \textbf{0.242} & \textbf{0.239} & \textbf{0.958} & \textbf{0.957} & \textbf{0.999} & \textbf{1.000} & \textbf{0.016} & \textbf{0.015} & \textbf{0.460} & \textbf{0.452} & \textbf{0.878} & \textbf{0.869}\\
        
         & \multirow{-2}{*}{omni} & 0.020 & 0.018 & 0.270 & 0.276 & 0.730 & 0.751 & 0.010 & 0.010 & 0.086 & 0.090 & 0.320 & 0.314\\
        \cmidrule(l{3pt}r{3pt}){2-2} \cmidrule(l{3pt}r{3pt}){3-4} \cmidrule(l{3pt}r{3pt}){5-6} \cmidrule(l{3pt}r{3pt}){3-4} \cmidrule(l{3pt}r{3pt}){5-6} \cmidrule(l{3pt}r{3pt}){7-8} \cmidrule(l{3pt}r{3pt}){9-10} \cmidrule(l{3pt}r{3pt}){11-12} \cmidrule(l{3pt}r{3pt}){13-14}
        
        \textbf{} & \textbf{} & \textbf{0.438} & \textbf{0.440} & \textbf{0.970} & \textbf{0.973} & \textbf{0.999} & \textbf{0.999} & \textbf{0.151} & \textbf{0.136} & \textbf{0.730} & \textbf{0.730} & \textbf{0.940} & \textbf{0.945}\\
        
         & \multirow{-2}{*}{link} & 0.047 & 0.043 & 0.358 & 0.370 & 0.797 & 0.810 & 0.024 & 0.022 & 0.142 & 0.137 & 0.388 & 0.392\\
        \cmidrule(l{3pt}r{3pt}){2-2} \cmidrule(l{3pt}r{3pt}){3-4} \cmidrule(l{3pt}r{3pt}){5-6} \cmidrule(l{3pt}r{3pt}){3-4} \cmidrule(l{3pt}r{3pt}){5-6} \cmidrule(l{3pt}r{3pt}){7-8} \cmidrule(l{3pt}r{3pt}){9-10} \cmidrule(l{3pt}r{3pt}){11-12} \cmidrule(l{3pt}r{3pt}){13-14}
        
        \textbf{} & \textbf{} & \textbf{0.573} & \textbf{0.569} & \textbf{0.992} & \textbf{0.994} & \textbf{1.000} & \textbf{1.000} & \textbf{0.194} & \textbf{0.208} & \textbf{0.834} & \textbf{0.836} & \textbf{0.984} & \textbf{0.982}\\
        
        \multirow{-6}{*}{0.5} & \multirow{-2}{*}{form} & 0.065 & 0.058 & 0.521 & 0.528 & 0.908 & 0.921 & 0.024 & 0.023 & 0.156 & 0.166 & 0.458 & 0.458\\
        \bottomrule
    \end{tabular}
\end{table} 
    Table \ref{tab:sim2:result} presents the results for the omnibus, link function, and functional form tests under Simulation \hyperlink{scenario2}{Scenario 2}. Overall, the findings are similar to those under Simulation \hyperlink{scenario1}{Scenario 1}, with the type \Romannum{1} error rates all under the nominal level $\alpha = 0.05$. Powers increase as the sample size or $\gamma$ increase. In most cases, the standardized version produces higher power than its unstandardized counterpart. The powers under Simulation \hyperlink{scenario2}{Scenario 2} are, in general, slightly smaller than those under Simulation \hyperlink{scenario1}{Scenario 1}. In addition to the results in Simulation \hyperlink{scenario1}{Scenario 1}, we observe that the powers of the link function and functional form tests are different in the presence of an added binary covariate, with the latter tending to be higher than the former.

    \begin{figure}[htp]
    \centering
    \begin{subfigure}[b]{0.45\textwidth}
        \centering
        \captionsetup{justification=raggedright,singlelinecheck = false}
	    \includegraphics[scale=0.39]{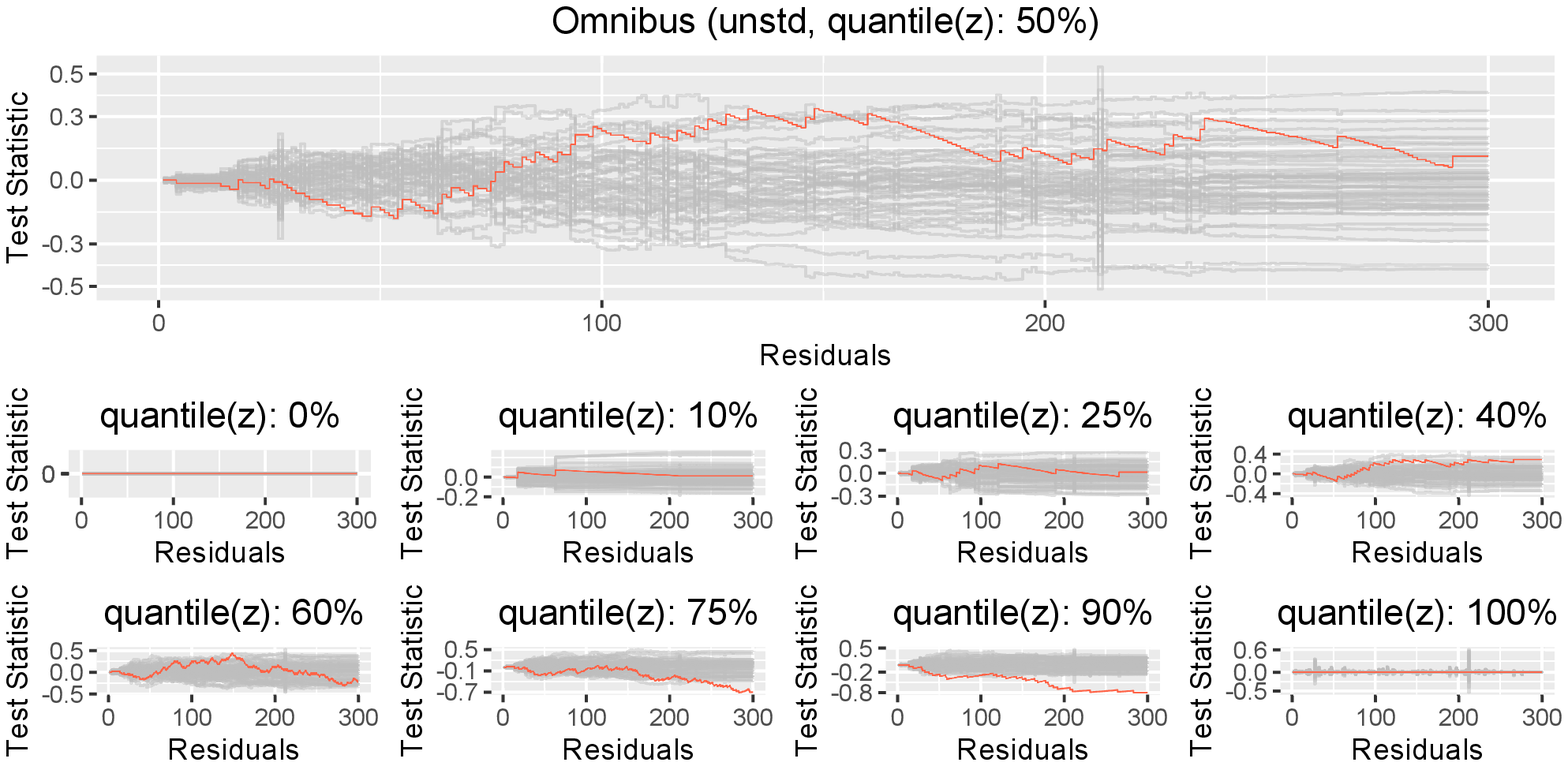} \\
	    \caption{afttestplot: unstandardized omnibus test with \texttt{mis} for model \eqref{model:sim2}} \label{fig:sim2_omni_mis_unstd} \vspace{0em}
    \end{subfigure}
    \hfill
    \begin{subfigure}[b]{0.45\textwidth}
        \centering
        \captionsetup{justification=raggedright,singlelinecheck = false}
	    \includegraphics[scale=0.39]{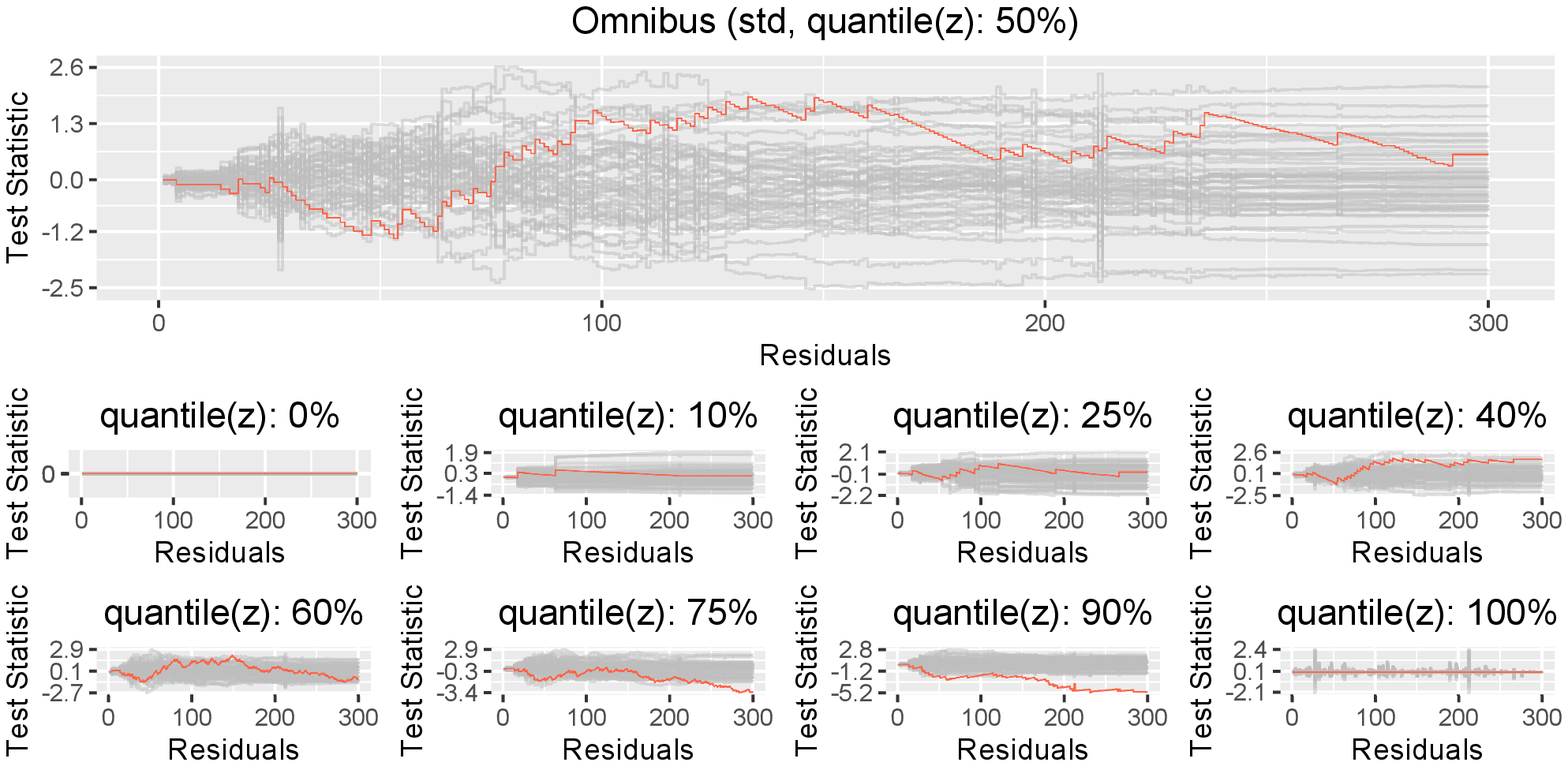} \\
	    \caption{afttestplot: standardized omnibus test with \texttt{mis} for model \eqref{model:sim2}} \label{fig:sim2_omni_mis_std} \vspace{0em}
    \end{subfigure}
    \\
    \begin{subfigure}[b]{0.45\textwidth}
        \centering
        \captionsetup{justification=raggedright,singlelinecheck = false}
	    \includegraphics[scale=0.39]{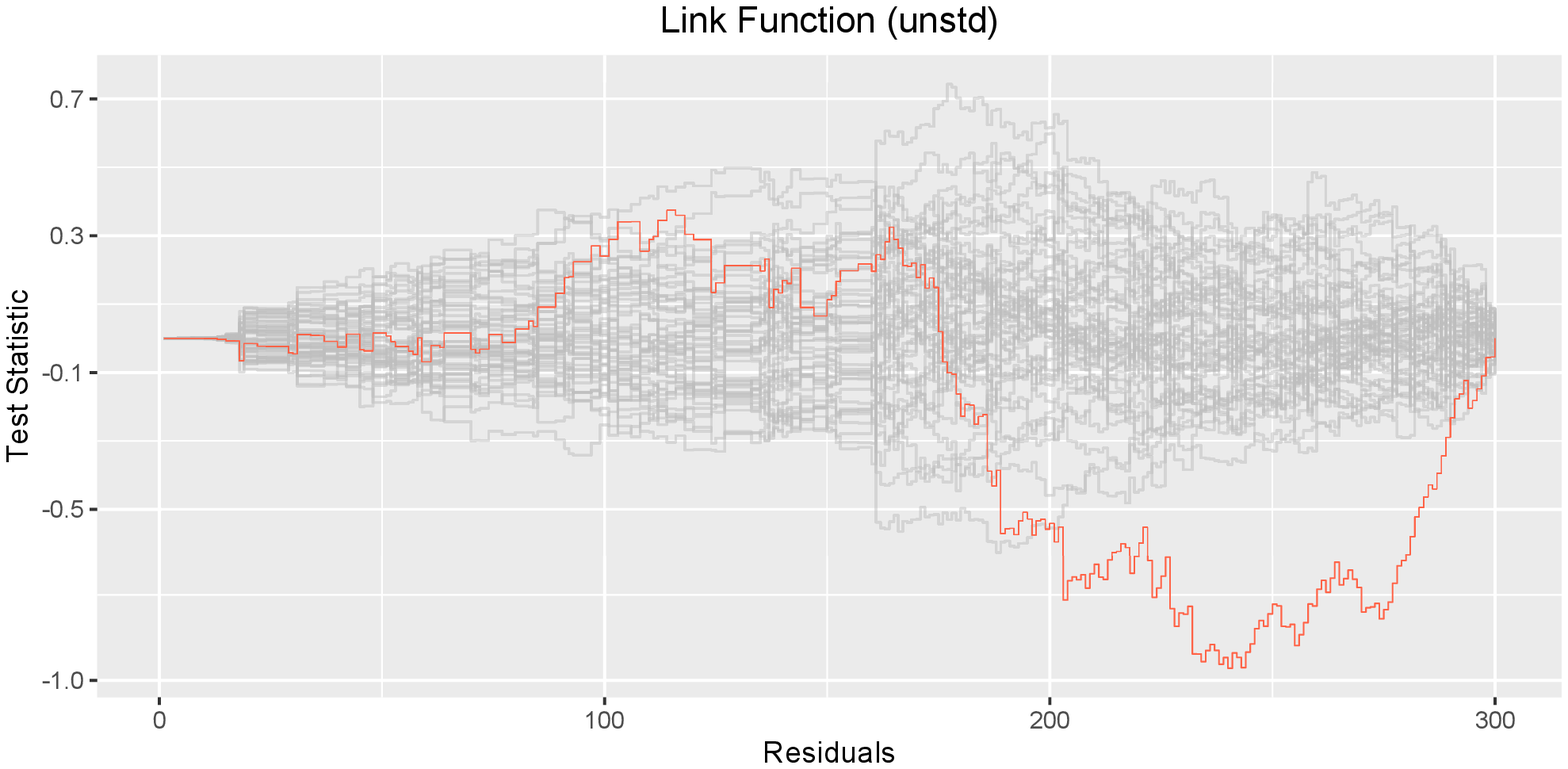} \\
	    \caption{afttestplot: unstandardized link function test with \texttt{mis} for model \eqref{model:sim2}} \label{fig:sim2_link_mis_std} \vspace{0em}
    \end{subfigure}
    \hfill
    \begin{subfigure}[b]{0.45\textwidth}
        \centering
        \captionsetup{justification=raggedright,singlelinecheck = false}
	    \includegraphics[scale=0.39]{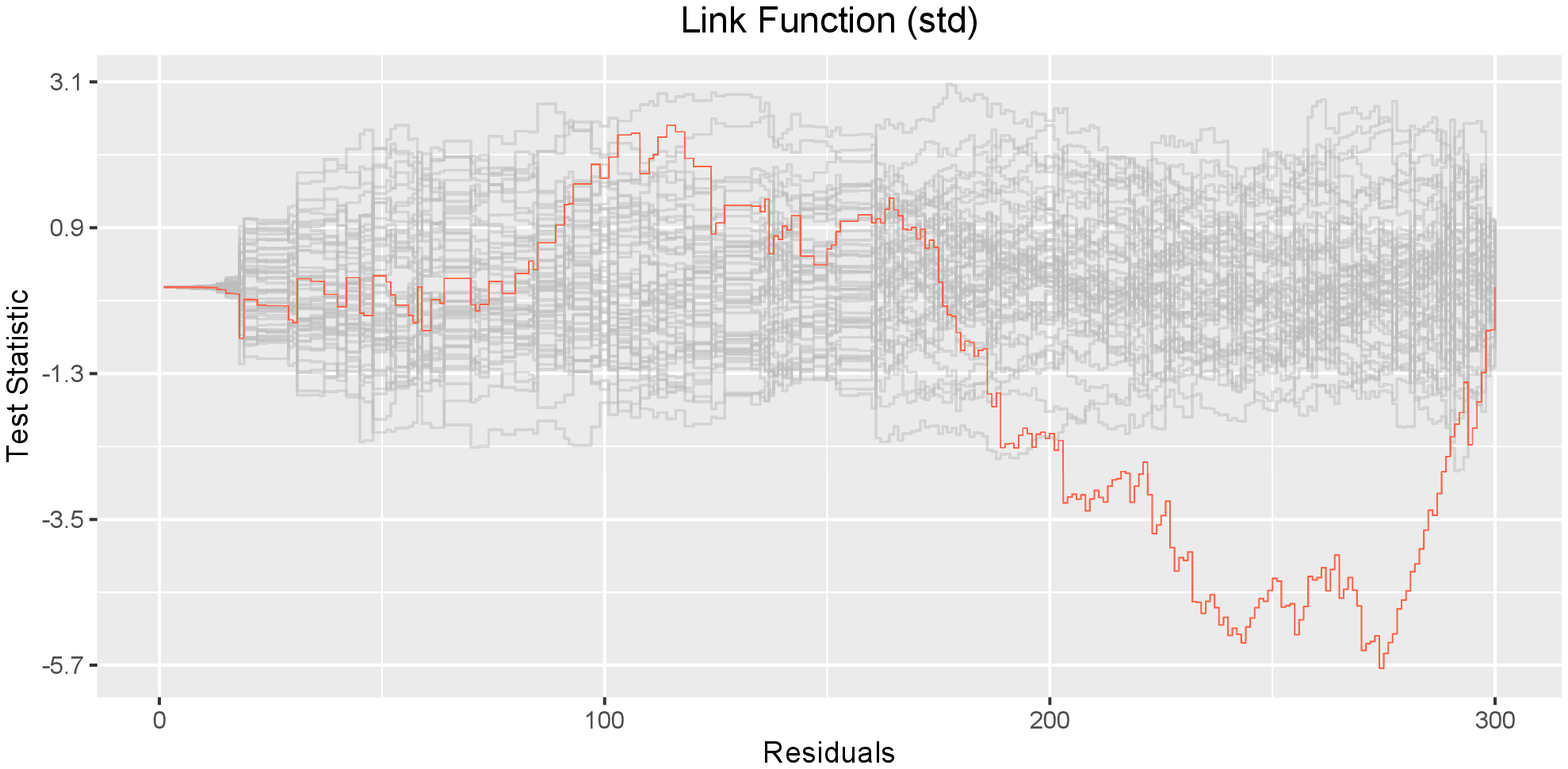} \\
	    \caption{afttestplot: standardized link function test with \texttt{mis} for model \eqref{model:sim2}} \label{fig:sim2_link_mis_std} \vspace{0em}
    \end{subfigure}
    \\
    \begin{subfigure}[b]{0.45\textwidth}
        \centering
        \captionsetup{justification=raggedright,singlelinecheck = false}
	    \includegraphics[scale=0.39]{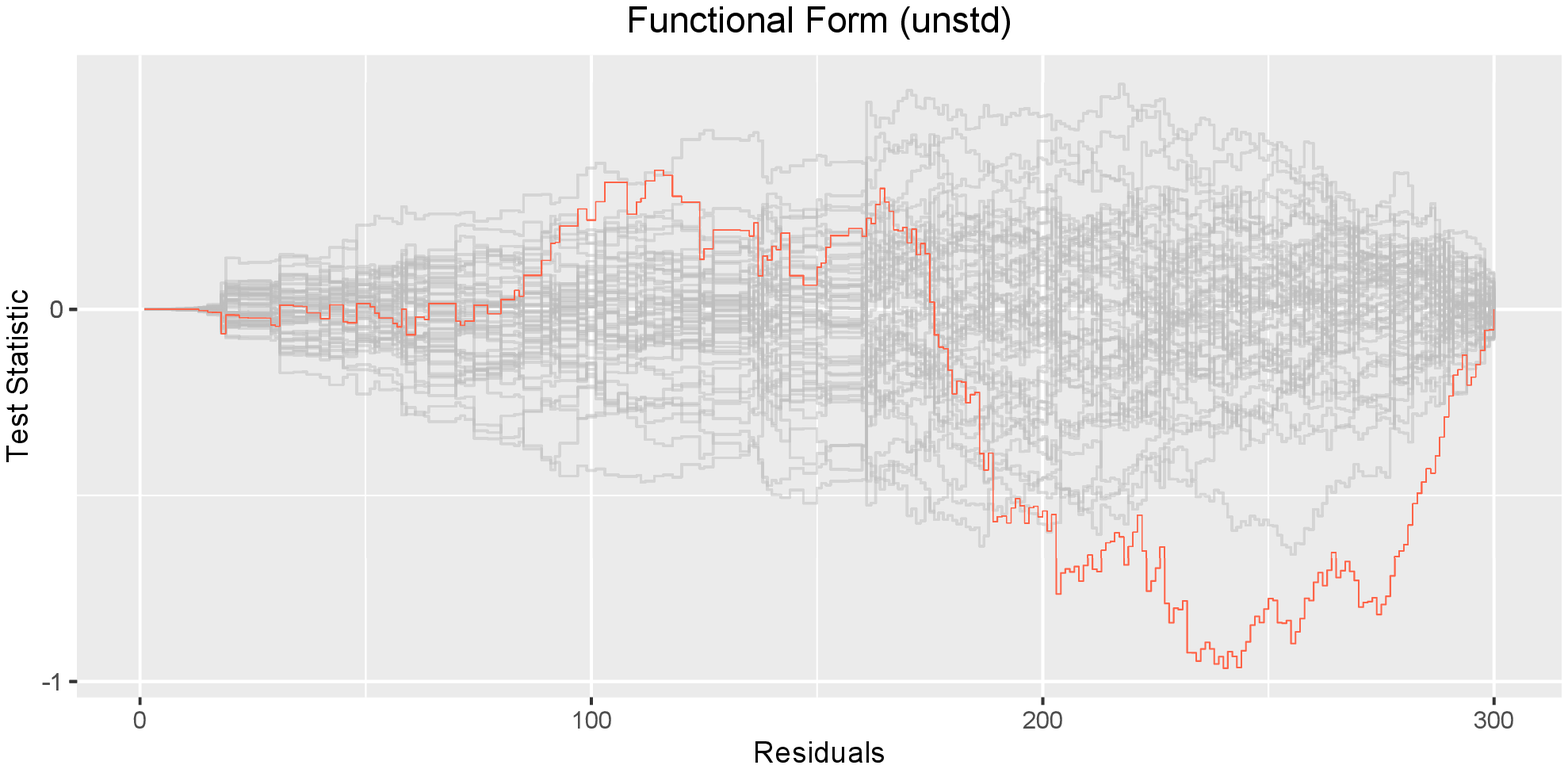} \\
	    \caption{afttestplot: unstandardized functional form test with \texttt{mis} for model \eqref{model:sim2}} \label{fig:sim2_form_mis_unstd} \vspace{0em}
    \end{subfigure}
    \hfill
    \begin{subfigure}[b]{0.45\textwidth}
        \centering
        \captionsetup{justification=raggedright,singlelinecheck = false}
	    \includegraphics[scale=0.39]{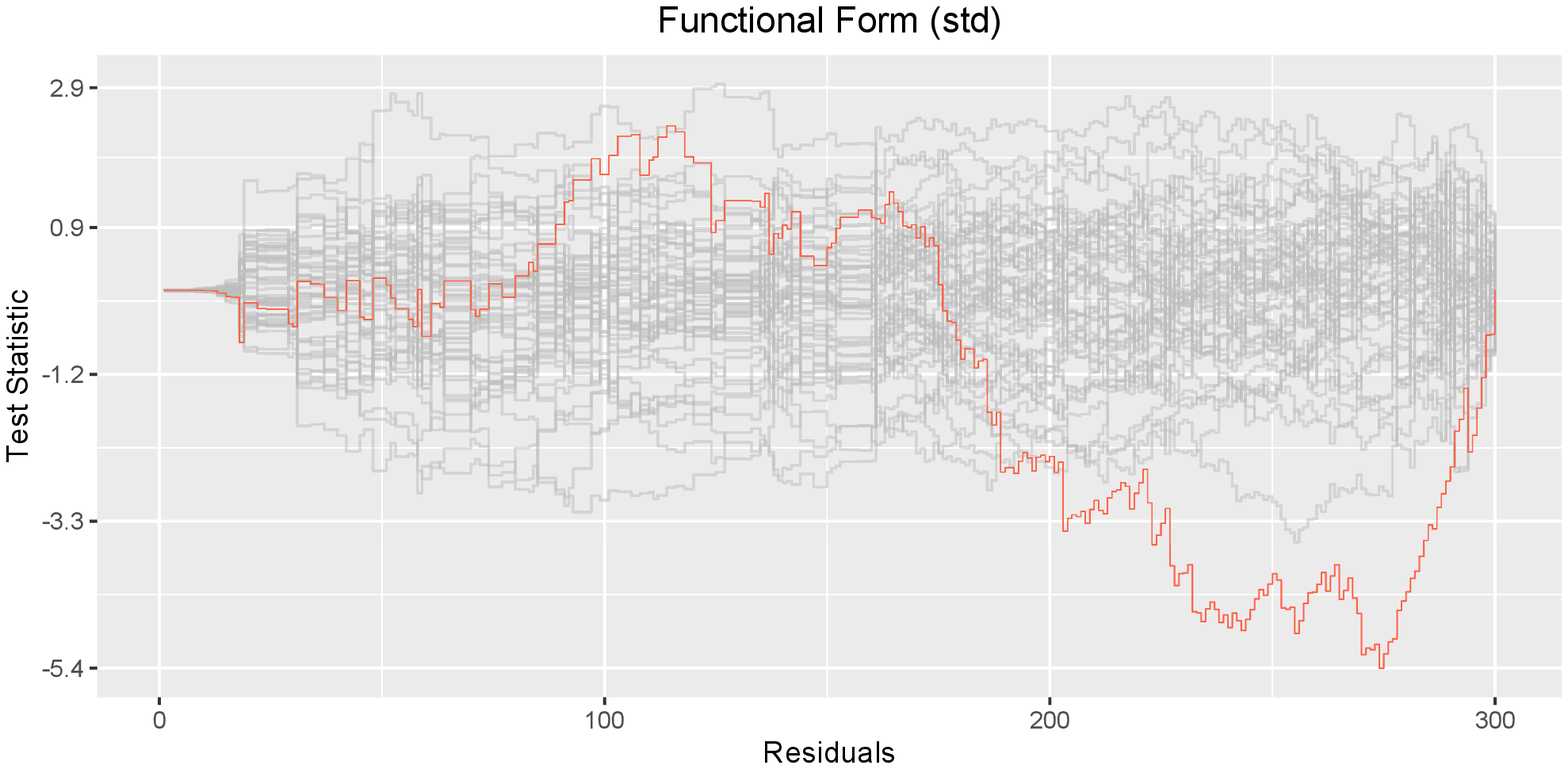} \\
	    \caption{afttestplot: standardized functional form test with \texttt{mis} for model \eqref{model:sim2}} \label{fig:sim2_form_mis_std} \vspace{0em}
    \end{subfigure}
    \caption{\label{fig:sim2:mis} Plots of sample paths based on the proposed unstandardized and standardized test procedures for Simulation scenario 2 with $\gamma = 0.3$, 20\% censoring rate and the sample size of 300 (induced-smoothed). $x$-axis indicates the rank of the log transformed residuals. The sample path for the test statistic (red) overlays by 50 approximated sample paths under the null (grey) are displayed. Top panel is the plot for the omnibus test under the different quantiles of covariates. Bottom two panels are the plots for testing the link function (left) and functional form (right)}
\end{figure}
    Figures \ref{fig:sim2:mis} present plots of sample paths for the unstandardized and standardized test procedures for Simulation \hyperlink{scenario2}{Scenario 2}, respectively, using a 20\% censoring rate, a sample size of 300, and a $\gamma = 0.3$. The plots show the test statistic's sample path based on the observed data in red, with 50 simulated sample paths under the null hypothesis in grey. The plots for the link function and functional form tests in both test procedure exhibit departures at higher ranks of log-transformed residuals, which is more obvious in the standardized version.
    
% ---------------------------------------------------------------------------
% ---------------------------------------------------------------------------
% ---------------------------------------------------------------------------
\section{Analysis of primary biliary cirrhosis (PBC) data} \label{sec:realdata}
    We illustrate our proposed model-checking procedures using the well-known primary biliary cirrhosis (PBC) study data \citep{fleming2011counting}. The PBC dataset consists of data on 418 patients diagnosed with PBC, a chronic liver disease that specifically affects the bile ducts. The dataset includes comprehensive information on each patient, including demographic variables such as age and sex, laboratory measurements such as serum bilirubin and albumin levels, and survival data such as the duration from diagnosis to liver transplant or death. The primary objective of the study was to investigate the natural course of PBC and to identify the prognostic factors that influence patient survival. For a more detailed description of the data and variables, please refer to \citet{fleming2011counting}. The AFT model has also been considered by several authors \citep{jin2003rank, ding2015estimating}. For example, \citet{jin2003rank} considered five covariates: age (\texttt{age}), edema (\texttt{edem}), bilirubin (\texttt{bili}), protime (\texttt{prot}) and albumin (\texttt{albu}). We also consider the same set of covariates for the AFT model and fit the model using the induced smoothed rank-based estimating functions with the Gehan-type weight. We then investigate several model-checking aspects using the proposed test procedures. Specifically, we conduct an omnibus test, a link function test, and a functional form test for each covariate using the standardized tests with 1,000 sample paths using both induced smoothing and non-smoothed methods.
    
    We first fit the model without transforming covariates using the standardized test procedure for \texttt{age}, \texttt{edem}, \texttt{bili}, \texttt{protime}, and \texttt{albu} (referred to as "\hypertarget{model:pbc01}{\hyperlink{model:pbc01}{Model1}}: Na"ive model"). The first row in Table 3 shows the results from \hyperlink{model:pbc01}{Model1}. \hyperlink{model:pbc01}{Model1} is not a viable model because the $p$-values for the standardized omnibus tests, standardized link function tests, and standardized functional form tests for \texttt{bilirubin} are below the pre-specified $\alpha = 0.05$. The plot for testing the functional form of \texttt{bili} in Figure \ref{fig:pbc01:mis:std} clearly shows a departure from the null.
    
    As the next model, we considered a model with log-transformed \texttt{bilirubin} (\texttt{logbili}), \texttt{prot}, \texttt{albu}, \texttt{age}, and \texttt{edem} (referred to as "\hypertarget{model:pbc02}{\hyperlink{model:pbc02}{Model 2}}: log-transformed \texttt{bilirubin} model"). The second row in Table 3 shows the test results for \hyperlink{model:pbc02}{Model 2}. The $p$-values for the omnibus, link function, and functional form tests indicate that the model is valid. In Figure \ref{fig:pbc02:mis:std}, sample paths based on the observed data in each plot (red), including that for \texttt{logbili}, are all covered by the 50 simulated sample paths generated under the null distribution. Therefore, we selected \hyperlink{model:pbc02}{Model 2} as our final model.
    
    We fitted Models \hyperlink{model:pbc01}{1} and \hyperlink{model:pbc02}{2} using the \textbf{R} package \textbf{aftgee} \citep{chiou2014fitting}. To ensure numerical stability in parameter estimation, we used the standardized covariates as considered in \citet{jin2003rank}. The parameter estimates of the five covariates, along with the corresponding 95\% point-wise confidence intervals in parentheses, based on the induced smoothing method, are as follows: -0.574 (-0.730, -0.417), -0.247 (-0.424, -0.069), 0.194 (0.046, 0.342), -0.269 (-0.407, -0.131), and -0.940 (-1.483, -0.397) for each of covariate, respectively.
    
    \begin{table}
    \caption{\label{tab:real:pbc} The results of omnibus test, link function test, and functional forms of covariates from Model 1 and Model 2. Both monotone non-induced smoothed and monotone induced smoothed methods with standardized test statistics are considered.}
    \centering
    \begin{tabular}[t]{ccccccccccccc}
        \toprule
        \multicolumn{1}{c}{ } & \multicolumn{12}{c}{P-value} \\
        \cmidrule(l{3pt}r{3pt}){2-13}
        \multicolumn{1}{c}{ } & \multicolumn{2}{c}{omni} & \multicolumn{2}{c}{link} & \multicolumn{2}{c}{\texttt{bili}} & \multicolumn{2}{c}{\texttt{prot}} & \multicolumn{2}{c}{\texttt{albu}} & \multicolumn{2}{c}{\texttt{age}} \\
        \cmidrule(l{3pt}r{3pt}){2-3} \cmidrule(l{3pt}r{3pt}){4-5} \cmidrule(l{3pt}r{3pt}){6-7} \cmidrule(l{3pt}r{3pt}){8-9} \cmidrule(l{3pt}r{3pt}){10-11} \cmidrule(l{3pt}r{3pt}){12-13}
        model & mns & mis & mns & mis & mns & mis & mns & mis & mns & mis & mns & mis\\
        \midrule
        
        \hyperlink{model:pbc01}{Model 1} & {0.020} & {0.030} & {0.000} & {0.010} & {0.000} & {0.000} & {0.725} & {0.655} & {0.790} & {0.800} & {0.580} & {0.660} \\

        \cmidrule(l{3pt}r{3pt}){1-1} \cmidrule(l{3pt}r{3pt}){2-3} \cmidrule(l{3pt}r{3pt}){4-5} \cmidrule(l{3pt}r{3pt}){6-7} \cmidrule(l{3pt}r{3pt}){8-9} \cmidrule(l{3pt}r{3pt}){10-11} \cmidrule(l{3pt}r{3pt}){12-13} 
        
        \hyperlink{model:pbc02}{Model 2} & {0.640} & {0.595} & {0.165} & {0.180} & {0.440} & {0.400} & {0.555} & {0.485} & {0.800} & {0.805} & {0.880} & {0.820} \\
        
        \bottomrule
    \end{tabular}
\end{table}
    \begin{figure}[htp]
    \centering
    \begin{subfigure}[b]{0.45\textwidth}
        \centering
        \captionsetup{justification=raggedright,singlelinecheck = false}
	    \includegraphics[scale=0.39]{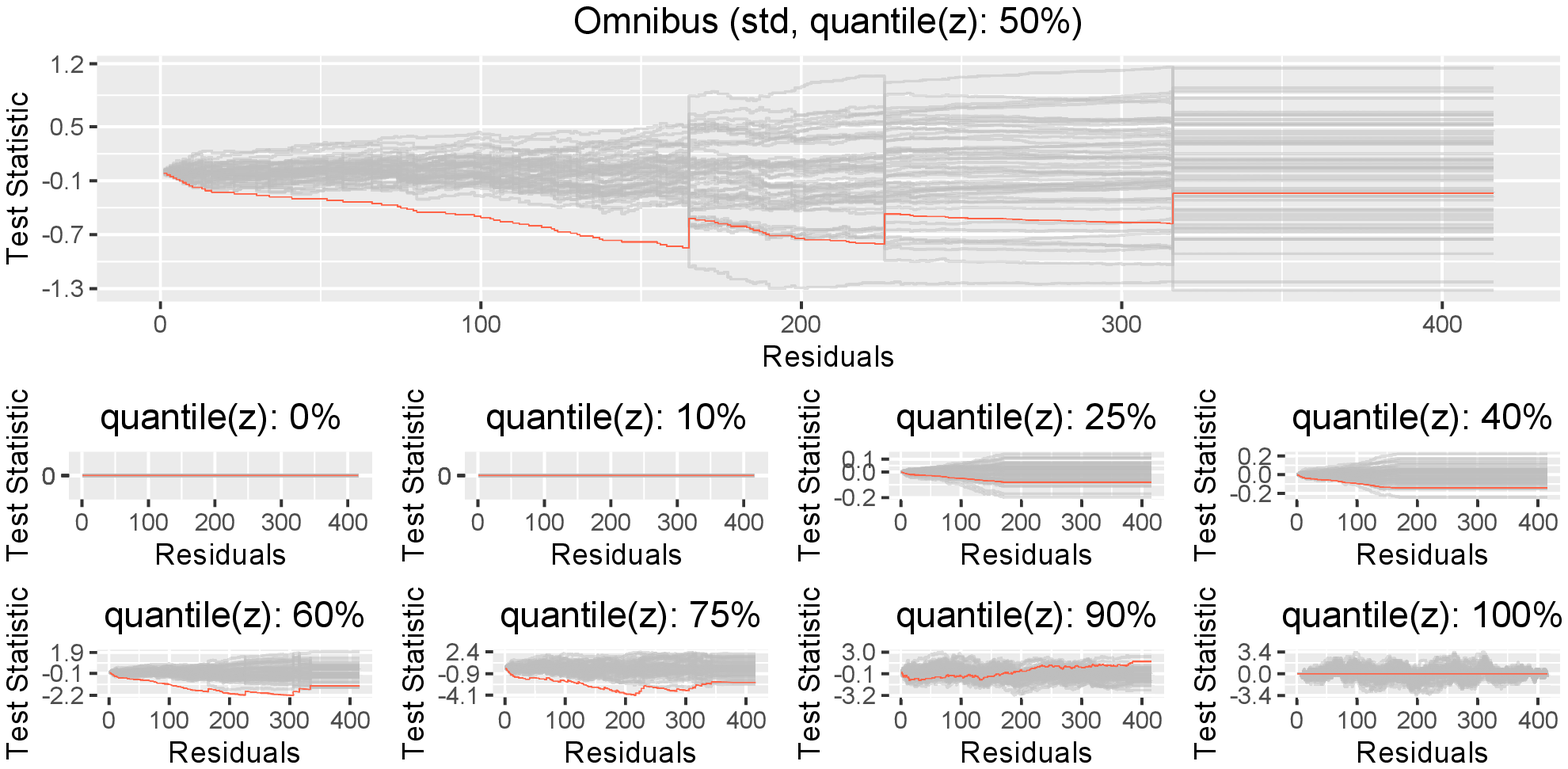} \\
	    \caption{afttestplot: pbc standardized omni function test} \label{fig:pbc01_omni_mis_std} \vspace{0em}
    \end{subfigure}
    \hfill
    \begin{subfigure}[b]{0.45\textwidth}
        \centering
        \captionsetup{justification=raggedright,singlelinecheck = false}
	    \includegraphics[scale=0.39]{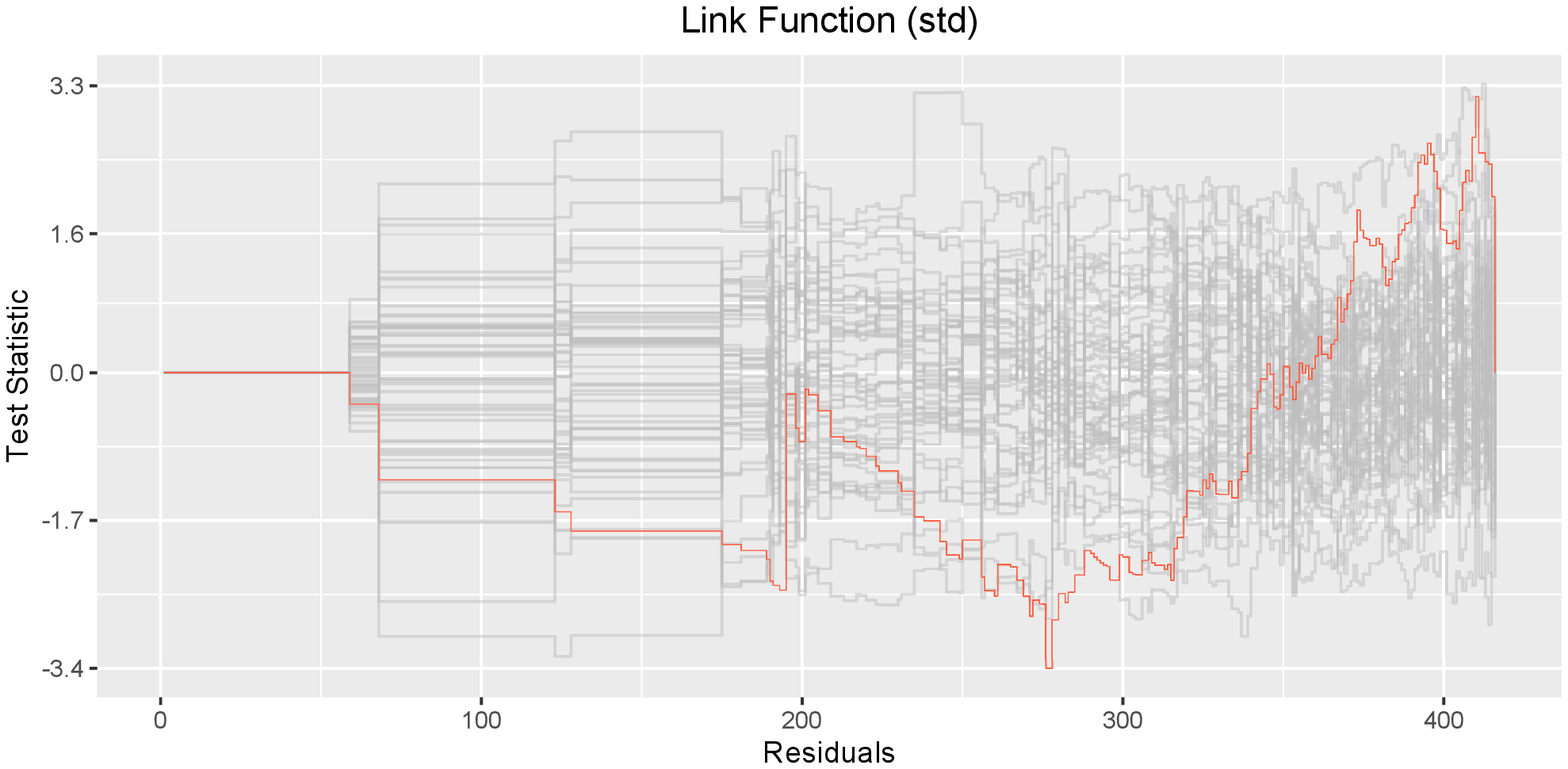} \\
	    \caption{afttestplot: standardized link function test} \label{fig:pbc01_link_mis_std} \vspace{0em}
    \end{subfigure}
    \\
    \begin{subfigure}[b]{0.45\textwidth}
        \centering
        \captionsetup{justification=raggedright,singlelinecheck = false}
	    \includegraphics[scale=0.39]{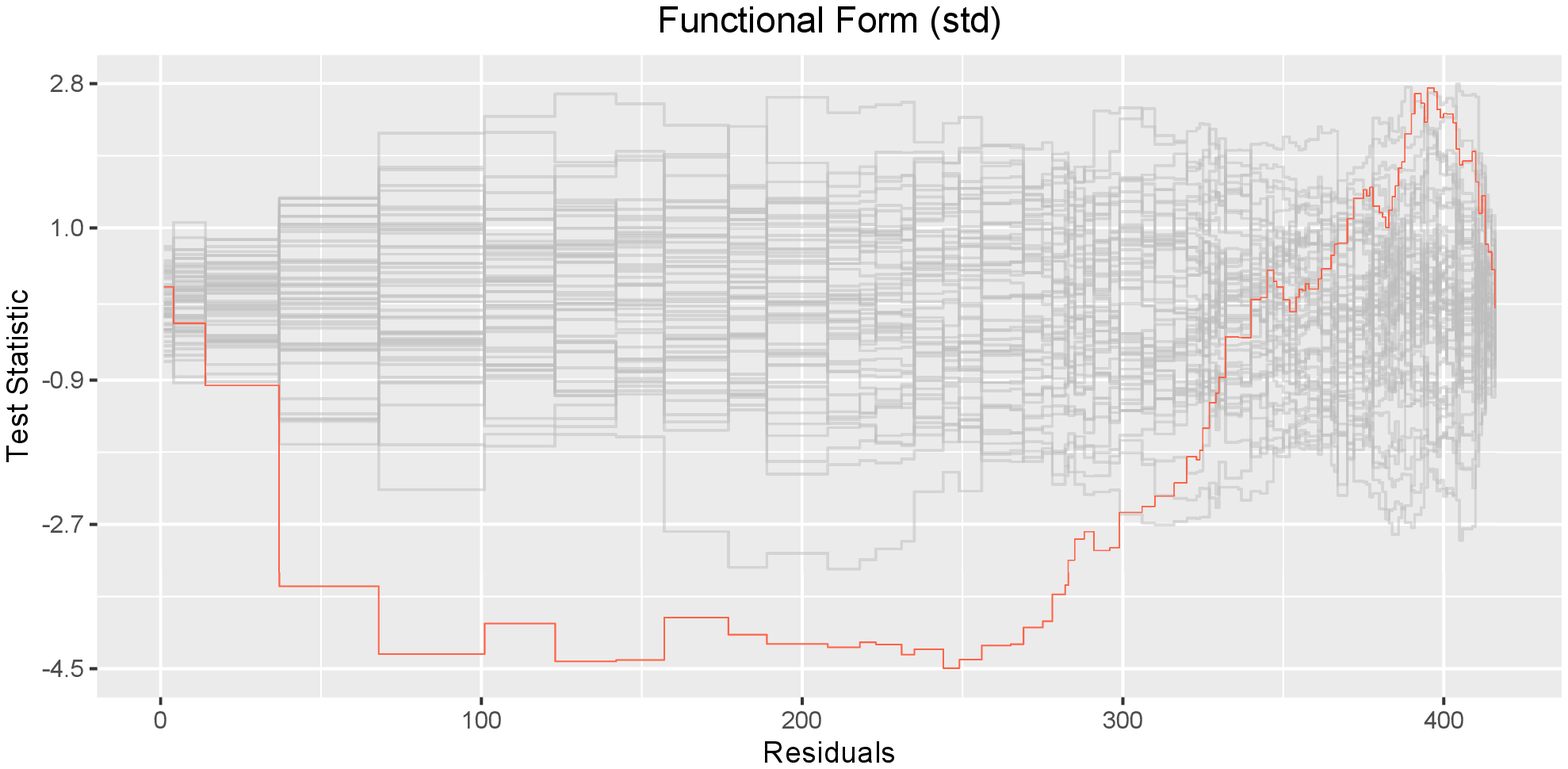} \\
	    \caption{afttestplot: standardized form function test (\texttt{bili})} \label{fig:pbc01_form1_mis_std} \vspace{0em}
    \end{subfigure}
    \hfill
    \begin{subfigure}[b]{0.45\textwidth}
        \centering
        \captionsetup{justification=raggedright,singlelinecheck = false}
	    \includegraphics[scale=0.39]{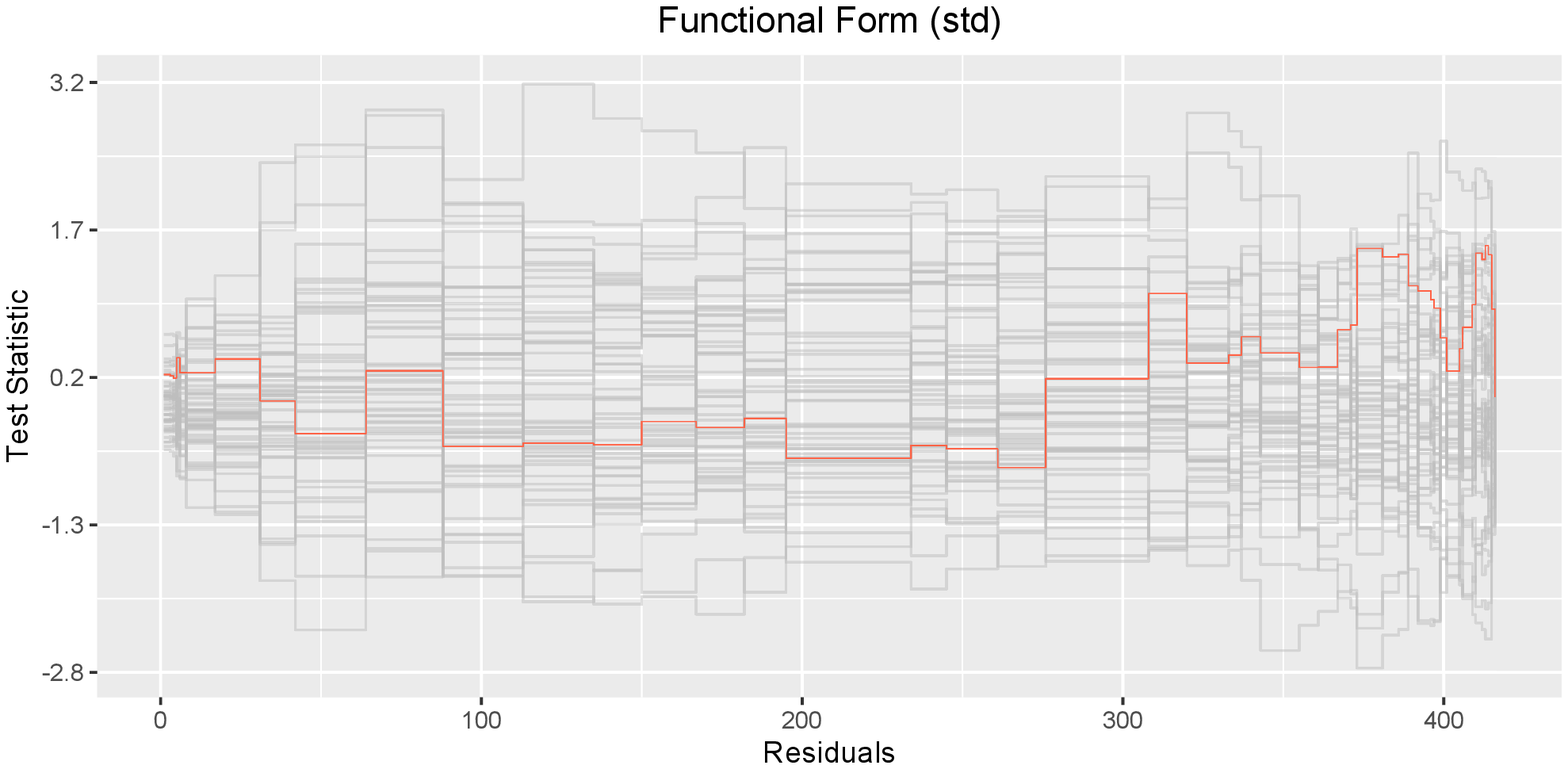} \\
	    \caption{afttestplot: standardized form function test (\texttt{prot})} \label{fig:pbc01_form2_mis_std} \vspace{0em}
    \end{subfigure}
    \\
    \begin{subfigure}[b]{0.45\textwidth}
        \centering
        \captionsetup{justification=raggedright,singlelinecheck = false}
	    \includegraphics[scale=0.39]{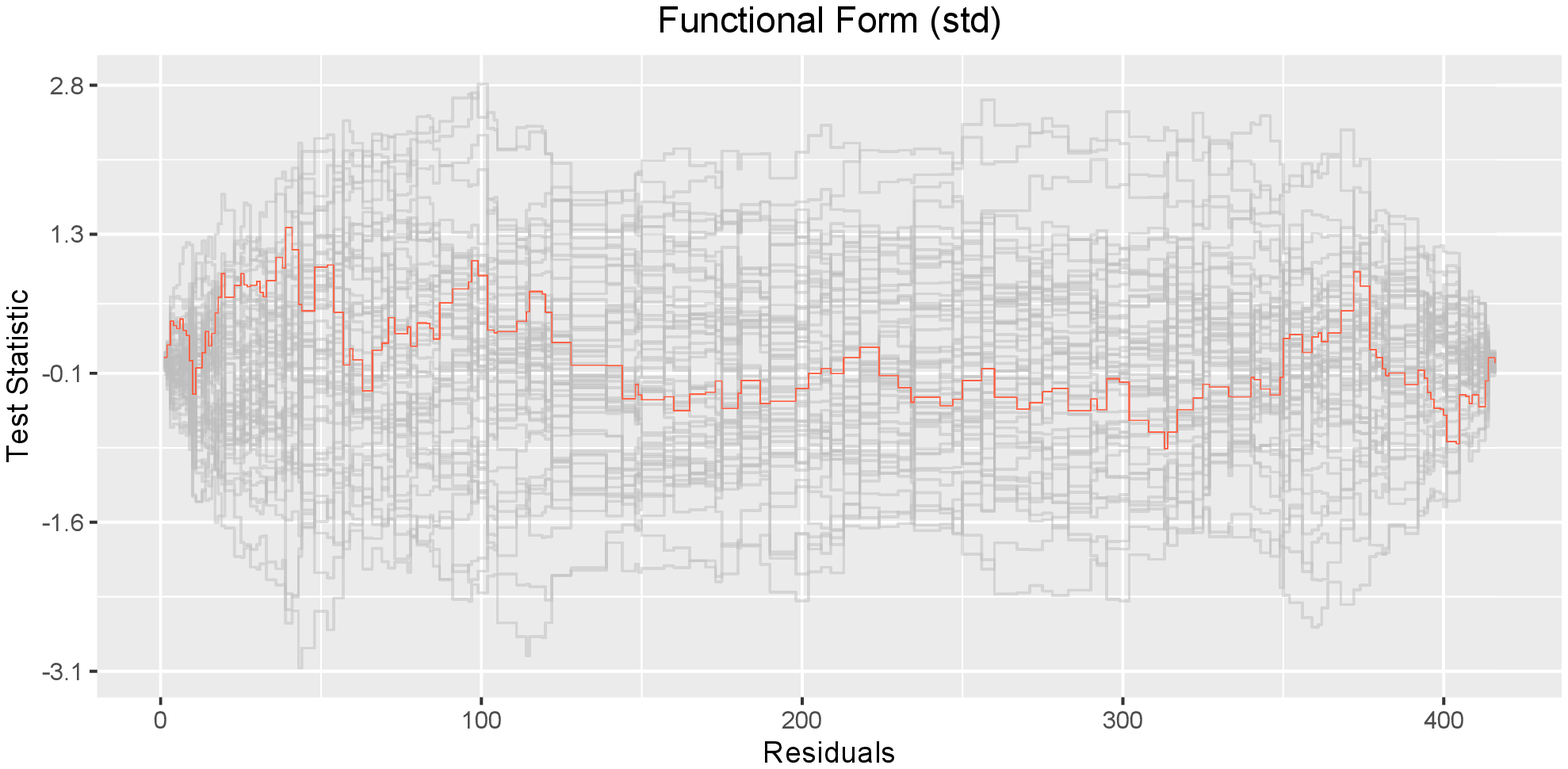} \\
	    \caption{afttestplot: standardized form function test (\texttt{albu})} \label{fig:pbc01_form3_mis_std} \vspace{0em}
    \end{subfigure}
    \hfill
    \begin{subfigure}[b]{0.45\textwidth}
        \centering
        \captionsetup{justification=raggedright,singlelinecheck = false}
	    \includegraphics[scale=0.39]{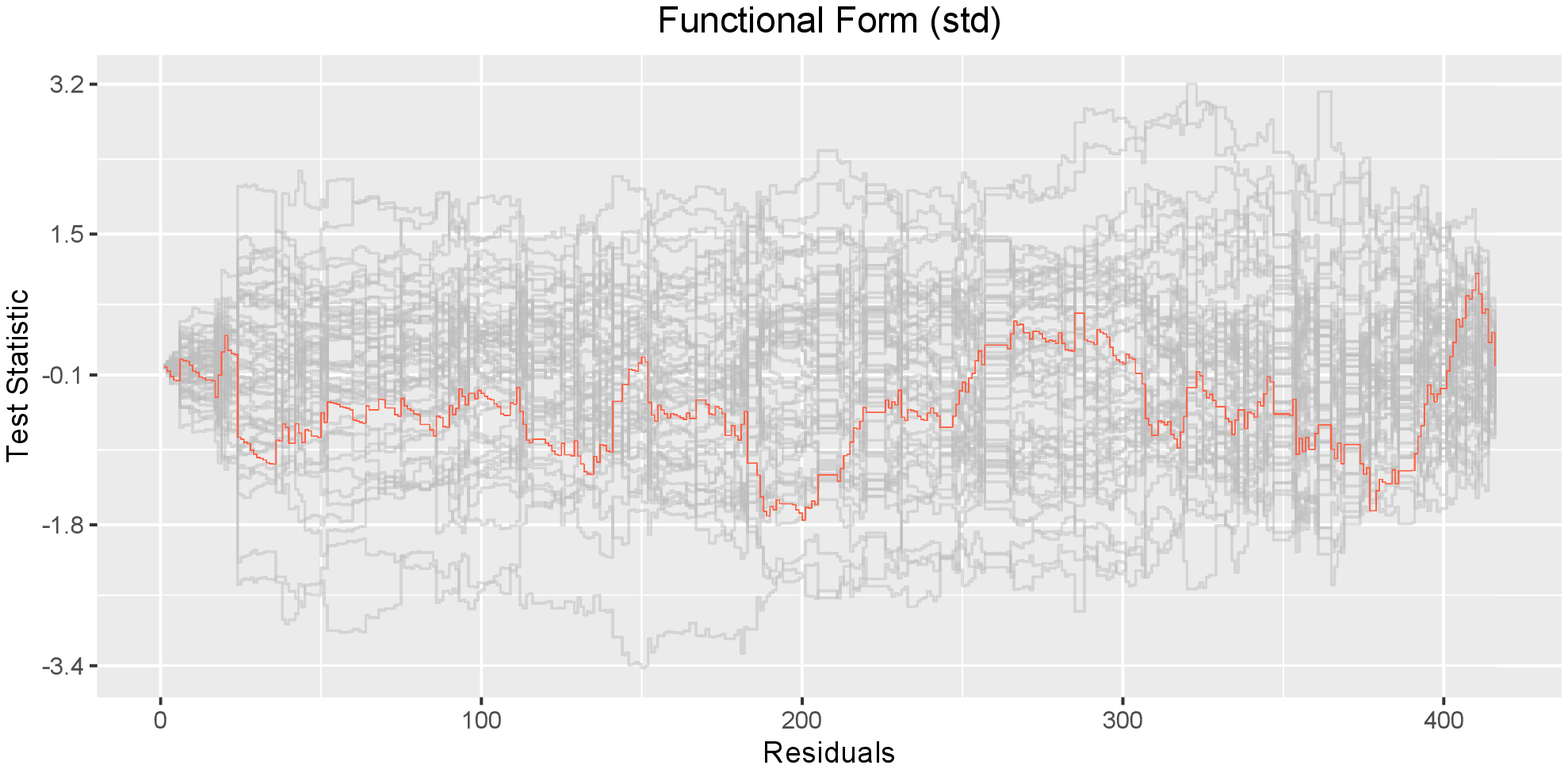} \\
	    \caption{afttestplot: standardized form function test (\texttt{age})} \label{fig:pbc01_form4_mis_std} \vspace{0em}
    \end{subfigure}
    \caption{\label{fig:pbc01:mis:std} Plots of sample paths based on the proposed standardized test procedures based on induced smoothing estimator for Model 1. $x$-axis indicates the rank of the log transformed residuals. The sample path for the test statistic (red) overlays by 50 approximated sample paths under the null (grey) are displayed. Panel (a) is the plot for the omnibus test under the different quantiles of covariates. Panel (b) is the link function test and panels (c)-(f) are functional form tests for \texttt{bili}, \texttt{prot}, \texttt{albu} and \texttt{age}, respectively.}
\end{figure}

    \begin{figure}[htp]
    \centering
    \begin{subfigure}[b]{0.45\textwidth}
        \centering
        \captionsetup{justification=raggedright,singlelinecheck = false}
	    \includegraphics[scale=0.39]{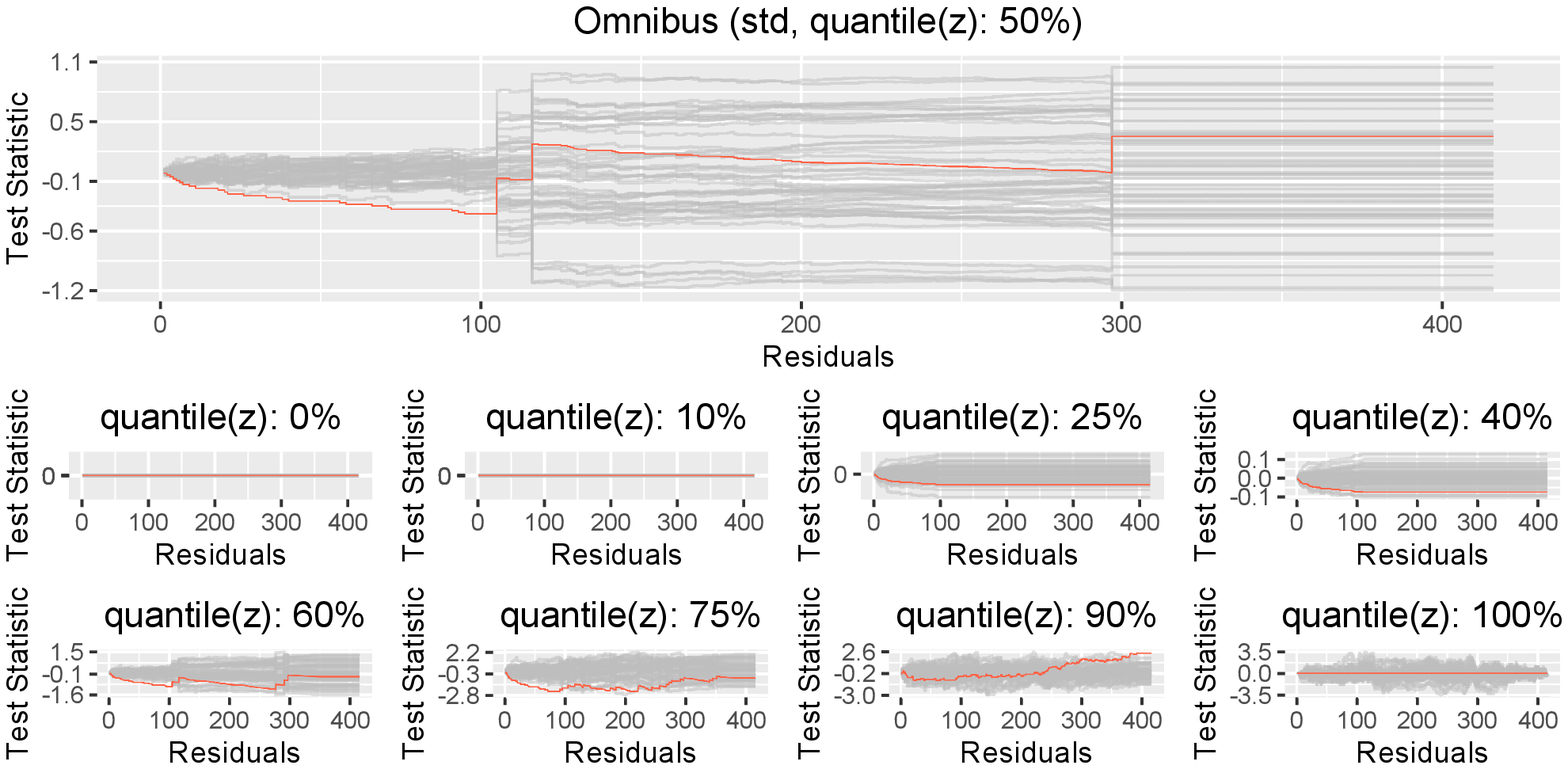} \\
	    \caption{afttestplot: pbc standardized omni function test} \label{fig:pbc02_omni_mis_std} \vspace{0em}
    \end{subfigure}
    \hfill
    \begin{subfigure}[b]{0.45\textwidth}
        \centering
        \captionsetup{justification=raggedright,singlelinecheck = false}
	    \includegraphics[scale=0.39]{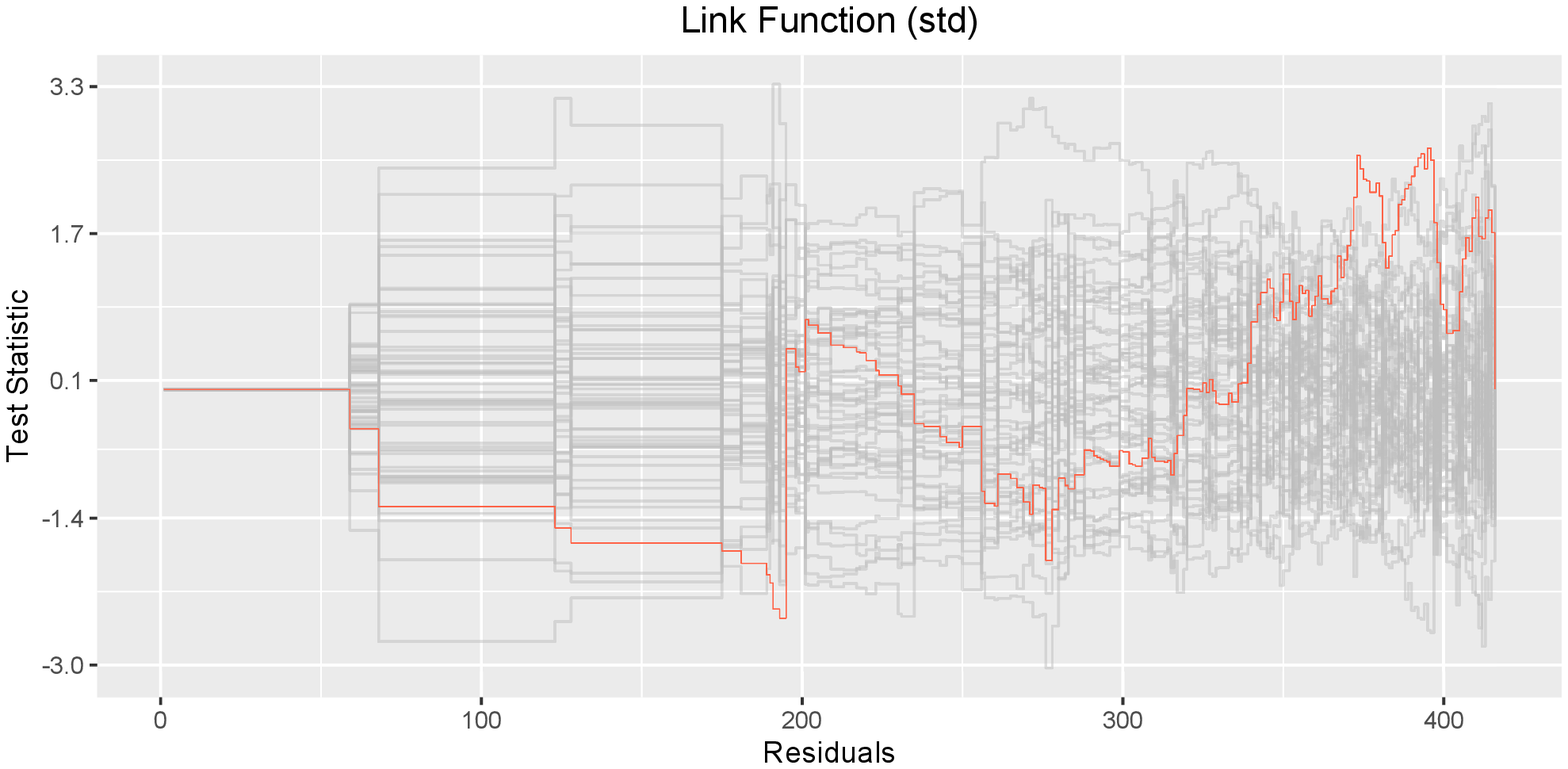} \\
	    \caption{afttestplot: standardized link function test} \label{fig:pbc02_link_mis_std} \vspace{0em}
    \end{subfigure}
    \\
    \begin{subfigure}[b]{0.45\textwidth}
        \centering
        \captionsetup{justification=raggedright,singlelinecheck = false}
	    \includegraphics[scale=0.39]{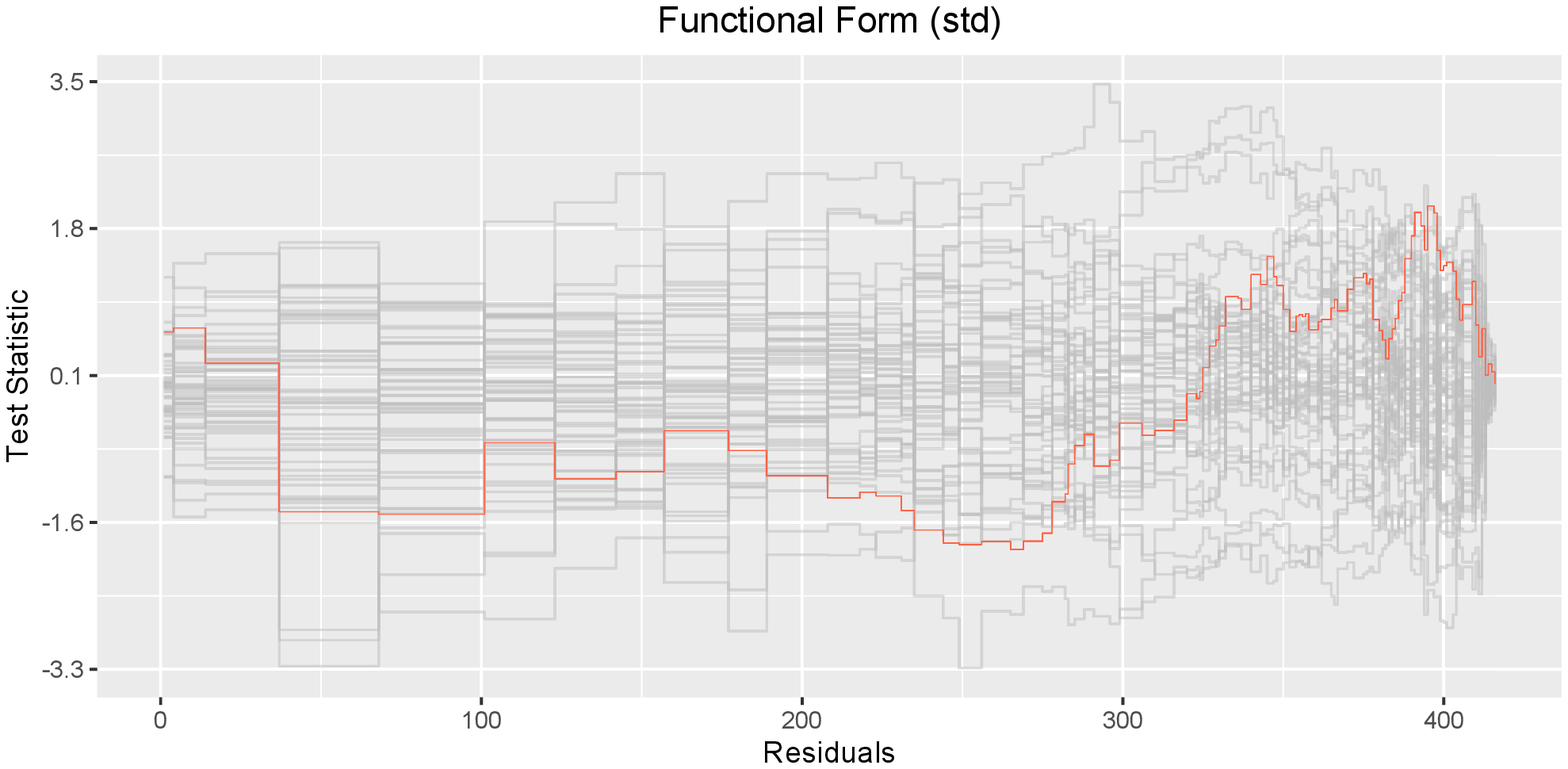} \\
	    \caption{afttestplot: standardized form function test (\texttt{logbili})} \label{fig:pbc02_form1_mis_std} \vspace{0em}
    \end{subfigure}
    \hfill
    \begin{subfigure}[b]{0.45\textwidth}
        \centering
        \captionsetup{justification=raggedright,singlelinecheck = false}
	    \includegraphics[scale=0.39]{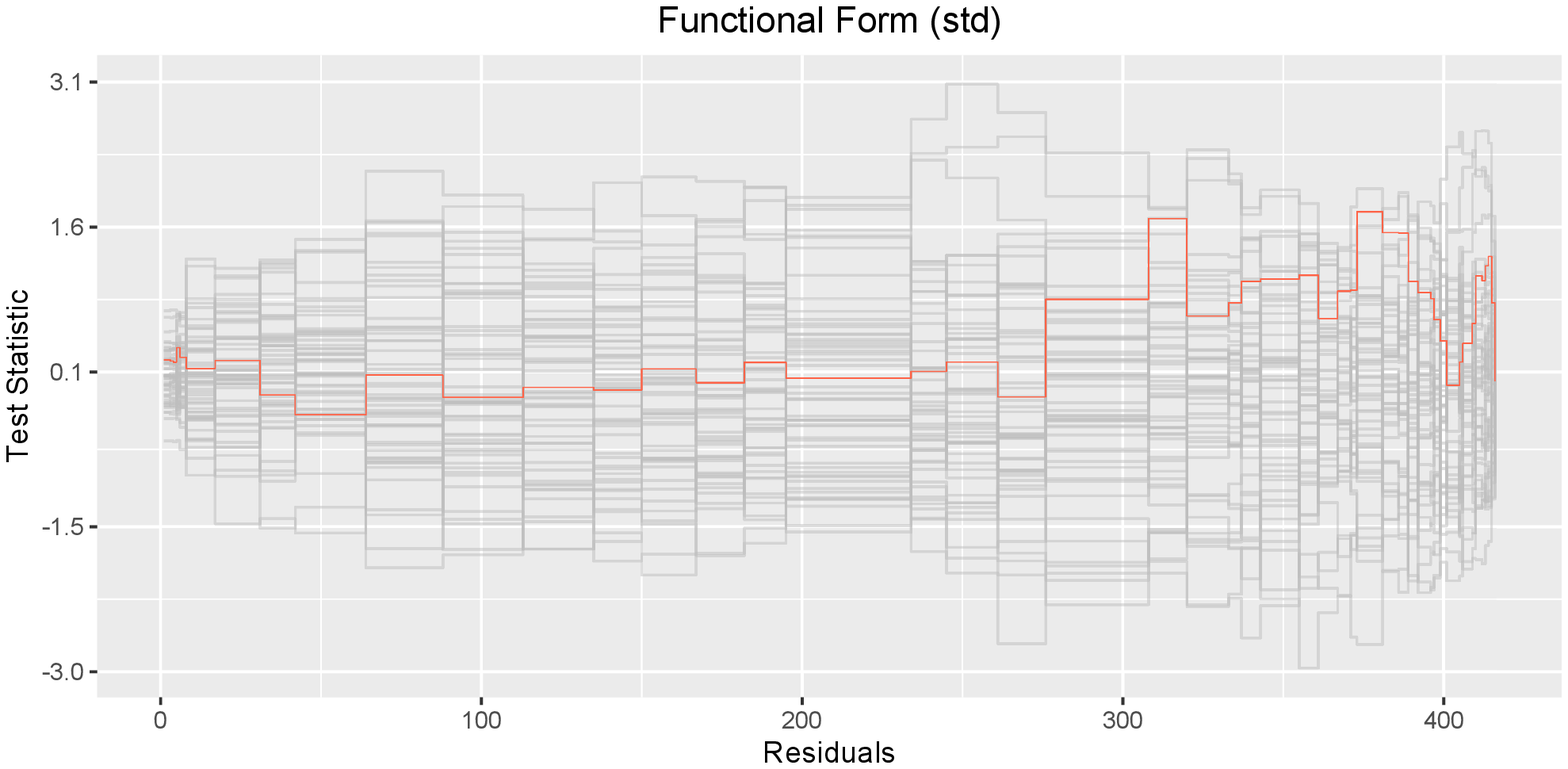} \\
	    \caption{afttestplot: standardized form function test (\texttt{prot})} \label{fig:pbc02_form2_mis_std} \vspace{0em}
    \end{subfigure}
    \\
    \begin{subfigure}[b]{0.45\textwidth}
        \centering
        \captionsetup{justification=raggedright,singlelinecheck = false}
	    \includegraphics[scale=0.39]{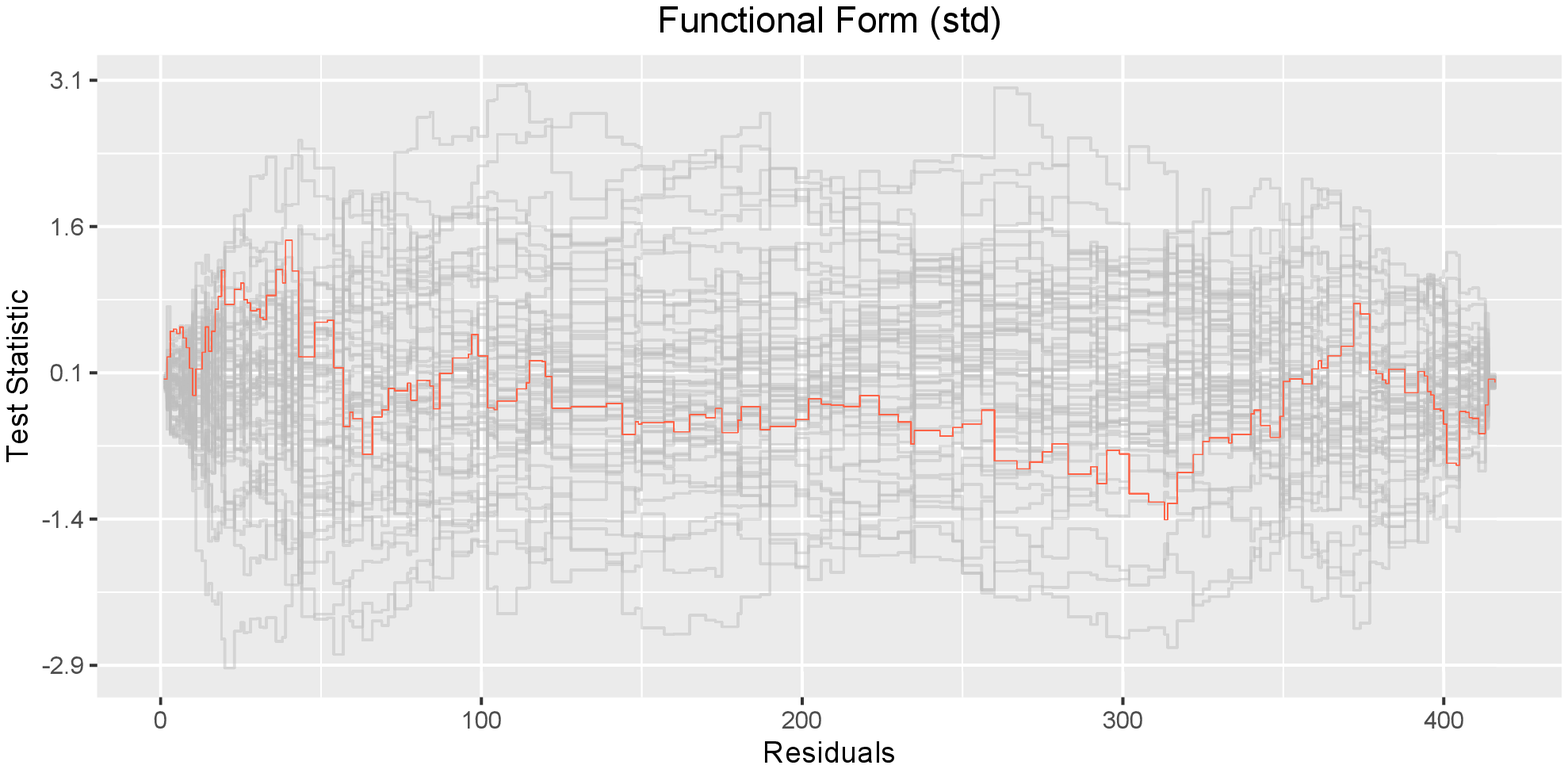} \\
	    \caption{afttestplot: standardized form function test (\texttt{albu})} \label{fig:pbc02_form3_mis_std} \vspace{0em}
    \end{subfigure}
    \hfill
    \begin{subfigure}[b]{0.45\textwidth}
        \centering
        \captionsetup{justification=raggedright,singlelinecheck = false}
	    \includegraphics[scale=0.39]{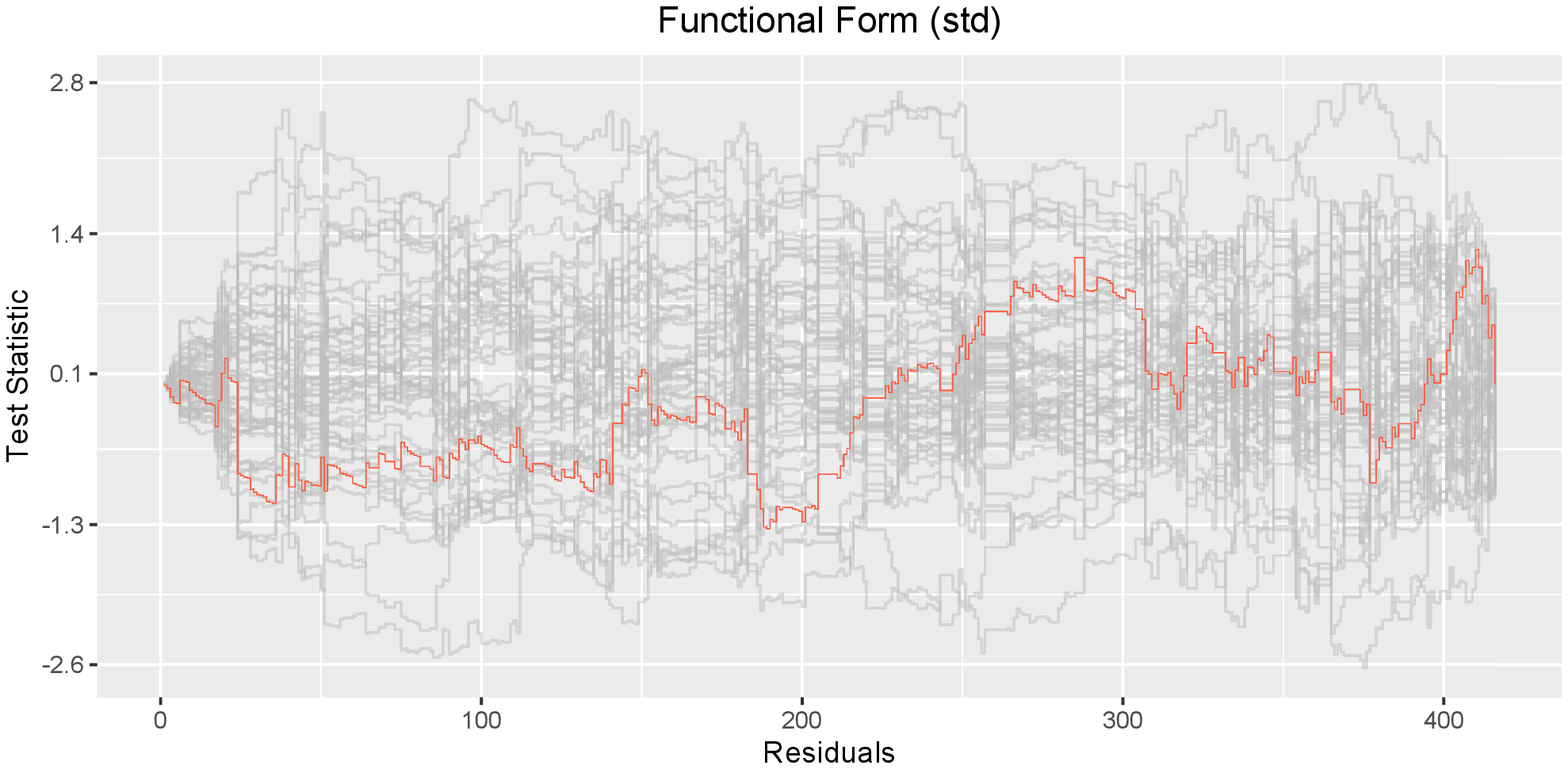} \\
	    \caption{afttestplot: standardized form function test (\texttt{age})} \label{fig:pbc02_form4_mis_std} \vspace{0em}
    \end{subfigure}
    \caption{\label{fig:pbc02:mis:std} Plots of sample paths based on the proposed standardized test procedures based on induced smoothing estimator for Model 2. $x$-axis indicates the rank of the log transformed residuals. The sample path for the test statistic (red) overlays by 50 approximated sample paths under the null (grey) are displayed. Panel (a) is the plot for the omnibus test under the different quantiles of covariates. Panel (b) is the link function test and panels (c)-(f) are functional form tests for \texttt{logbili}, \texttt{prot}, \texttt{albu} and \texttt{age}, respectively.}
\end{figure}
    
% ---------------------------------------------------------------------------
% ---------------------------------------------------------------------------
% ---------------------------------------------------------------------------
\section{Conclusion} \label{sec:conclusion}
    In this article, we propose AFT model-checking methods, including an omnibus test, a link function test, and a functional form test. These test statistics are specific forms of a weighted summation of martingale residuals. Although \citet{novak2013goodness, novak2015regression} proposed an omnibus test procedure based on a non-smooth rank-based estimator with Gehan-type weight, little research has been conducted on the omnibus test and other model-checking aspects, such as checking the link function and functional form of each covariate. This article elaborates on detailed model-checking techniques that encompass several important aspects of model-checking based on the induced smoothed estimator for model parameters, which is computationally more efficient than its non-smooth counterpart while maintaining the same asymptotic properties. The asymptotic properties of the proposed test statistics are rigorously established. Results from extensive simulation experiments show that the proposed test procedures maintain a Type \Romannum{1} error rate and are reasonably powerful in detecting departures from the assumed AFT model with practical sample sizes and settings considered. We also illustrate our proposed methods by analyzing the PBC data. Our concluding model backs up the final AFT model considered by several researchers with the same set of covariates. To make the proposed model-checking procedures more convenient to use, we have implemented all the aforementioned procedures in the \textbf{R} package \textbf{afttest} \citet{bae2022afttest} 
    
    The proposed procedures in this article could be extended in several directions. Currently, the proposed methods only consider time-invariant covariates, but time-varying covariates are frequently encountered in many biomedical studies. Semiparametric AFT models that accommodate time-varying covariates have been investigated \citep{lin1995semiparametric}. \citet{novak2013goodness,novak2015regression} also considered the incorporation of time-varying covariates to some extent under the non-smoothed estimation framework. However, the induced smoothing method has not yet been developed to accommodate time-varying covariates for a semiparametric AFT model. Therefore, extending the induced smoothing method to fit semiparametric AFT models with time-varying covariates and developing model-checking procedures for the induced smoothed estimator would be a natural next research direction.
    
    The proposed method in this article considers the setting with univariate failure time data assuming independence among subjects. However, extending our proposed methods to the setting with clustered failure time data is straightforward with the marginal model approach. Rank-based estimation methods and their induced smoothed versions are currently available \citep{johnson2009induced, chiou2015semiparametric}, but the corresponding model-checking procedures are still underdeveloped. Therefore, developing model-checking procedures for the induced smoothed estimator in the clustered failure time data setting is another possible future research direction.

% ---------------------------------------------------------------------------
% ---------------------------------------------------------------------------
% ---------------------------------------------------------------------------
\section*{Code Availability} \label{sec:computation}
    This paper presents the results obtained using version 4.3.0 of the statistical computing environment \textbf{R} and version 4.3.0 of the \textbf{afttest} package. The \textbf{afttest} package comprises two primary functions, namely \texttt{afttest} and \texttt{afttestplot}, which provide both nonsmooth (\texttt{mns}) and induced-smoothed (\texttt{mis}) based outcomes. All the packages used in this study are available from the Comprehensive \textbf{R} Archive Network (CRAN). The most recent source codes for the package and its analysis can be accessed via the following links: \url{https://github.com/WoojungBae/afttest} and \url{https://github.com/WoojungBae/afttest\_analysis}.

% ---------------------------------------------------------------------------
% ---------------------------------------------------------------------------
% ---------------------------------------------------------------------------
\section*{Acknowledgements}
    The first two authors have made equal contributions to this paper. 

% ---------------------------------------------------------------------------
% ---------------------------------------------------------------------------
% ---------------------------------------------------------------------------
\bibliographystyle{agsm}
\bibliography{ref}

\clearpage
% ---------------------------------------------------------------------------
% ---------------------------------------------------------------------------
% ---------------------------------------------------------------------------
\setcounter{section}{1}
\renewcommand*{\thesection}{\Alph{section}}
\section{Appendix} \label{sec:appendix}
    We prove Theorems 1 and 2 for the induced smoothed rank-based estimator, $\bbhi$. As mentioned in Section 3, these results are also applicable to the other estimators such as the non-smooth rank-based estimator, $\bbhn$, and least-squares estimator, $\bbhl$. For $\bbhn$, the results remain identical mainly due to the asymptotic equivalence of the limiting distributions of $\bbhi$ and $\bbhn$ \citep{johnson2009induced}. For $\bbhl$, the asymptotic covariance function for $\w\left( t, \bz \right)$ will have a different form but the main results in Theorems 1 and 2 remain the same. 
    
    We assume the following regularity conditions. Note that these conditions are also specified in \citet{lin1998accelerated} and \citet{novak2013goodness, novak2015regression}.
    \begin{itemize}
        \item[C1] $\left( N_{i}, Z_{i} \right)$ are bounded.
        
        \item[C2] $\left( N_{i}, C_{i}, Z_{i} \right)$ are i.i.d.
        
        \item[C3] $\psi\left( t, \bb_{0}\right), E \left( t, \bb_{0}\right), E_{\pi}\left( t, \bb_{0}\right)$, and $n^{-1} S_{\pi}\left( t, \bb_{0}\right) $ converge almost surely to $\xi, \mathscr{E}, \mathscr{E}_{\pi}$, and $\mathscr{S}_{\pi}$, respectively.
        
        \item[C4] $C_{i}\mbox{exp}(\boldsymbol{Z}^{\top}_{i} \beta_{0})$ have a uniformly bounded density and $\Lambda_{0}(\cdot)$ has a bounded second derivative.
        
    \end{itemize}
    
% ---------------------------------------------------------------------------
% ---------------------------------------------------------------------------
\subsection{Proof of Theorem~\ref{theorem1}} \label{subsec:prooftheorem1}
    We prove Theorem~\ref{theorem1} by first showing that $\w \left( t, \bz ; \bbhi \right)$ follows asymptotically a zero-mean Gaussian process. By the asymptotic expansion of $\bbhi$ \citep{johnson2009induced}, it can be shown that 
    \begin{align*}
        \w \left( t, \bz ; \bbhi \right)
        & = n^{-\frac{1}{2}} \sum_{i=1}^{n} \pi_{i} \left( \bz \right) \widehat{M}_{i} \left( t, \bbhi \right) \\
        & = n^{-\frac{1}{2}} \sum_{i=1}^{n} \pi_{i} \left( \bz \right) M_{i} \left( t, \bb_{0} \right) + n^{-\frac{1}{2}} \sum_{i=1}^{n} \pi_{i} \left( \bz \right) \left\{ \widehat{M}_{i} \left( t, \bbhi \right) - M_{i} \left( t, \bb_{0} \right) \right\} \\
        & = n^{-\frac{1}{2}} \sum_{i=1}^{n} \pi_{i} \left( \bz \right) M_{i} \left( t, \bb_{0} \right) + n^{-\frac{1}{2}} \sum_{i=1}^{n} \pi_{i} \left( \bz \right) \left\{ N_{i} \left( t, \bbhi \right) - N_{i} \left( t, \bb_{0} \right) \right\} \\
        & \hspace{2em} - n^{-\frac{1}{2}} \sum_{i} \pi_{i} \left( \bz \right) \left( \int_{0}^{t} Y_{i} \left( s , \bbhi \right) d \widehat{\Lambda}_{0} \left( s , \bbhi \right) - \int_{0}^{t} Y_{i} \left( s , \bb_{0} \right) d \Lambda_{0} \left( s \right) \right) \\
        & = n^{-\frac{1}{2}} \sum_{i=1}^{n} \pi_{i} \left( \bz \right) M_{i} \left( t, \bb_{0} \right) + n^{-\frac{1}{2}} \sum_{i=1}^{n} \pi_{i} \left( \bz \right) \left\{ N_{i} \left( t, \bbhi \right) - N_{i} \left( t, \bb_{0} \right) \right\} \\
        & \hspace{2em} 
        - n^{-\frac{1}{2}} \sum_{i} \pi_{i} \left( \bz \right) \int_{0}^{t} Y_{i} \left( s , \bb_{0} \right) d \left( \widehat{\Lambda}_{0} \left( s , \bbhi \right) - d \Lambda_{0} \left( s \right) \right) \\
        & \hspace{2em} 
        - n^{-\frac{1}{2}} \sum_{i} \pi_{i} \left( \bz \right) \int_{0}^{t} \left( Y_{i} \left( s , \bbhi \right) - Y_{i} \left( s , \bb_{0} \right) d \Lambda_{0} \left( s \right) \right) 
    \end{align*}

    Applying the results in \citet[][Lemma 6.1; p.53]{novak2013goodness}, and \citet[][Lemma 10; p.38]{novak2015regression}, we obtain
    \begin{align*}
        \w \left( t, \bz ; \bbhi \right)
        & = n^{-\frac{1}{2}} \sum_{i=1}^{n} \pi_{i} \left( \bz \right) M_{i} \left( t, \bb_{0} \right) + n^{-\frac{1}{2}} \sum_{i=1}^{n} \pi_{i} \left( \bz \right) \left\{ N_{i} \left( t, \bbhi \right) - N_{i} \left( t, \bb_{0} \right) \right\} \\
        & \hspace{2em} 
        - n^{-\frac{1}{2}} \sum_{i} \pi_{i} \left( \bz \right) \int_{0}^{t} Y_{i} \left( s , \bb_{0} \right) d \left( \widehat{\Lambda}_{0} \left( s , \bbhi \right) - d \Lambda_{0} \left( s \right) \right) \\
        & \hspace{2em} 
        - n^{-\frac{1}{2}} \sum_{i} \pi_{i} \left( \bz \right) \int_{0}^{t} \left( Y_{i} \left( s , \bbhi \right) - Y_{i} \left( s , \bb_{0} \right) d \Lambda_{0} \left( s \right) \right) \\
        & = n^{-\frac{1}{2}} \sum_{i=1}^{n} \pi_{i} \left( \bz \right) M_{i} \left( t, \bb_{0} \right) - n^{-\frac{1}{2}} f_{\pi}^{\top} \left( t,\bz \right) \left( \bbhi - \bb_{0} \right) \\
        & \hspace{2em} 
        - n^{-\frac{1}{2}} \sum_{i} \pi_{i} \left( \bz \right) \int_{0}^{t} Y_{i} \left( s , \bb_{0} \right) d \left( \widehat{\Lambda}_{0} \left( s , \bbhi \right) - d \Lambda_{0} \left( s \right) \right) \\
        & \hspace{2em} 
        - n^{-\frac{1}{2}} \sum_{i} \pi_{i} \left( \bz \right) \int_{0}^{t} \left( Y_{i} \left( s , \bbhi \right) - Y_{i} \left( s , \bb_{0} \right) d \Lambda_{0} \left( s \right) \right) 
    \end{align*}
    Using the arguments in \citet[][Theorems 2 and 4; p.616-617]{lin1998accelerated}, \citet[][Lemma 6.1; p.53]{novak2013goodness}, and \citet[][Lemma 10; p.38]{novak2015regression}, we have
    \begin{align*}
        n^{\frac{1}{2}} \left( \widehat{\Lambda}_{0} \left( s , \bbhi \right) - \Lambda_{0} \left( s \right) \right) 
        & = n^{\frac{1}{2}} \left( \widehat{\Lambda}_{0} \left( s , \bb_{0} \right) - \Lambda_{0} \left( s \right) \right) + \boldsymbol{\kappa}^{\top} \left( t \right) n^{\frac{1}{2}} \left( \bbhi - \bb_{0} \right) + o_{p} \left( 1 \right) \\
        & = n^{\frac{1}{2}} \left( \sum_{i=1}^{n} \int_{0}^{t} \frac{J \left( s \right)}{ S_{0} \left( s, \bb_{0} \right) } d M_{i} \left( s,\bb_{0} \right) \right) + \boldsymbol{\kappa}^{\top} \left( t \right) n^{-\frac{1}{2}} \boldsymbol{\Omega}^{-1} \scoreI \left( \bb_{0} \right) 
        + o_{p} \left( 1 \right)
    \end{align*}
    where $\boldsymbol{\kappa} \left( t \right) = - \int_{0}^{t} \mathscr{E} \left( s \right) d \left( \lambda_{0} \left( s \right) s \right)$ and $\boldsymbol{\Omega} = \int_{0}^{\infty} \xi \left( t \right) \mE{ Y_{1} \left( t, \bb_{0} \right) \left( Z_{1} - \mathscr{E} \left( t \right) \right)^{\otimes 2} } d \left( \lambda_{0} \left( t \right) t \right)$. Then,
    \begin{align*}
        \w \left( t, \bz ; \bbhi \right)
        & = n^{-\frac{1}{2}} \sum_{i=1}^{n} \pi_{i} \left( \bz \right) M_{i} \left( t, \bb_{0} \right) 
        - n^{-\frac{1}{2}} \left( f_{\pi} \left( t,\bz \right) + \int_{0}^{t} g_{\pi} \left( s, \bz \right) d \Lambda_{0} \left( s \right) \right)^{\top} \left( \bbhi - \bb_{0} \right) \\
        & \hspace{2em} 
        - n^{-\frac{1}{2}} \sum_{i} \int_{0}^{t} \frac{S_{\pi} \left( s, z, \bb_{0} \right)}{S_{0} \left( s,\bb_{0} \right)} d M_{i} \left( t, \bb_{0} \right)
        - n^{-\frac{1}{2}} \sum_{i} \int_{0}^{t} S_{\pi} \left( s, z, \bb_{0} \right) d \boldsymbol{\kappa}^{\top} \left( t \right) n^{-\frac{1}{2}} \boldsymbol{\Omega}^{-1} \scoreI \left( \bb_{0} \right) 
        + o_{p} \left( 1 \right) \\
        & = n^{-\frac{1}{2}} \sum_{i} \int_{0}^{t} \left( \pi_{i} \left( \bz \right) - E_{\pi} \left( s, \bz, \bb \right) \right) d M_{i} \left( t, \bb_{0} \right) \\
        & \hspace{2em} 
        - n^{-\frac{1}{2}} \left( f_{\pi} \left( t,\bz \right) + \int_{0}^{t} g_{\pi} \left( s, \bz \right) d \Lambda_{0} \left( s \right) + \sum_{i} \int_{0}^{t} n^{-1} S_{\pi} \left( s, z, \bb_{0} \right) d \boldsymbol{\kappa}^{\top} \left( t \right) \right)^{\top} \boldsymbol{\Omega}^{-1} \scoreI \left( \bb_{0} \right) 
        + o_{p} \left( 1 \right)  
    \end{align*}

    Denote
    \begin{equation*}
        V_{t} = n^{-\frac{1}{2}} \sum_{i} M_{i} \left( t,\bb_{0} \right), %\\ 
        V_{Z} = n^{-\frac{1}{2}} \sum_{i} \bZ_{i} M_{i} \left( t, \bb_{0} \right), \mbox{ and } %\\
        V_{\pi} = n^{-\frac{1}{2}} \sum_{i} \pi_{i} \left( \bz \right) M_{i} \left( t,\bb_{0} \right).
    \end{equation*}
    For fixed $t$ and $z$, each of the processes is a sum of i.i.d. quantities having a zero mean. The multivariate central limit theorem establishes the finite-dimensional convergence of $\left( V_{t}, V_{Z}, V_{\pi} \right)$. $M_{i} \left( t, \bb_{0} \right)$, $\bZ_{i} M_{i} \left(t,\bb_{0}\right)$, and $\pi_{i}\left(\bz\right) M_{i} \left( t, \bb_{0} \right)$, $i =1, \ldots, n$ are manageable \cite[p38]{pollard1990empirical} since they can be expressed as sums and products of monotone functions. Then, it follows from the functional central limit theorem \citep{pollard1990empirical}) that $\left( V_{t}, V_{Z}, V_{\pi} \right)$ is tight and converges weakly to a zero-mean Gaussian process, denoted as $\left( W_{t}, W_{Z}, W_{\pi} \right)$. By the Skorohod-Dudley-Wichura theorem \cite[p47]{shorack2009empirical}, there exists an equivalent process $\left( V_{t}, V_{Z}, V_{\pi} \right)$ in an alternative probability space that the weak convergence is strengthened to almost sure convergence. 
    Combining these results with the almost sure convergence results of $\psi\left( t, \bb_{0}\right)$, $E \left( t, \bb_{0}\right)$, $E_{\pi}\left( t, \bb_{0}\right)$, and $n^{-1} S_{\pi}\left( t, \bb_{0}\right)$ to continuous functions $\xi$, $\mathscr{E}$, $\mathscr{E}_{\pi}$, and $\mathscr{S}_{\pi}$, respectively, we have the following almost sure convergence results:
    \begin{align*}
        & (1) \quad n^{-\frac{1}{2}} \int_{0}^{t} \sum_{i} \pi_{i} \left( \bz \right) dM_{i} \left( s,\bb_{0} \right) \quad \text{converges to} \quad \int_{0}^{t} d W_{\pi}; \\
        & (2) \quad n^{-\frac{1}{2}} \int_{0}^{t} \sum_{i} E_{\pi} \left( t, \bz, \bb_{0} \right) dM_{i} \left( s,\bb_{0} \right) \quad \text{converges to} \quad \int_{0}^{t} \mathscr{E}_{\pi} \left( s,\bb_{0} \right) d W_{Z} \left( s \right); \\
        & (3) \quad n^{-\frac{1}{2}} \left( f_{\pi} \left(t,\bZ \right) + \int_{0}^{t} g_{\pi} \left(s, \bZ \right) d \Lambda_{0} \left( s \right) + \int_{0}^{t} n^{-1} S_{\pi} \left( s, z, \bb_{0} \right) d \boldsymbol{\kappa}^{\top} \left( s \right) \right)^{\top} \boldsymbol{\Omega}^{-1} \scoreI \left( \bb_{0} \right) \quad \text{converges to} \quad \\
        & \hspace{6em} \left( f_{\pi 0} \left( t \right) + \int_{0}^{t} g_{\pi 0} \left( s \right) d \Lambda_{0} \left( s \right) + \int_{0}^{t} \mathscr{S}_{\pi} \left( s, z, \bb_{0} \right) d \boldsymbol{\kappa}^{\top} \left( s \right) \right)^{\top} \left( \int_{0}^{\infty} \xi \left( s \right) d W_{Z} - \int_{0}^{\infty} \xi \left( s \right) \mathscr{E} \left( s,\bb_{0} \right) d W_{t} \right).
    \end{align*}
    In summary, as $n \to \infty$, $\w \left( t, z; \bbhi \right)$ converges weakly to 
    \begin{align*}
        \int_{0}^{t} dW_{Z} 
        - \int_{0}^{t} \mathscr{E} \left( s, \bb_{0} \right) dW_{t} - \boldsymbol{\nu}^{\top} \left( t \right) \left(\int_{0}^{\infty} \xi \left( s \right) dW_{Z} 
        - \int_{0}^{\infty} \xi \left( s \right) \mathscr{E} \left( s, \bb_{0} \right) dW_{t}  \right)
    \end{align*}
    where $\boldsymbol{\nu} \left( t \right) = f_{\pi 0} \left( t,\bz \right) + \int_{0}^{t} g_{\pi 0} \left( s, \bz \right) d \Lambda_{0} \left( s \right) + \int_{0}^{t}  \mathscr{S}_{\pi} \left( s, z, \bb_{0} \right) d \boldsymbol{\kappa}^{\top} \left( t \right)$ is a Gaussian process with the zero-mean and the covariance function being
    \begin{align*} 
        \sigma \left( t_{1}, t_{2} \right) 
        = & E \Bigg[ \left\{ \int_{0}^{t_{1}} \left( \pi_{1}\left( \bz \right) - \mathscr{E}_{\pi} \left( s, \bb_{0} \right) \right) d M_{1} \left( s, \bb_{0} \right) - \boldsymbol{\nu}^{\top} \left( t_{1} \right) \boldsymbol{\Omega}^{-1} \int_{0}^{\infty}\xi \left( s \right) \left( Z_1 - \mathscr{E} \left( s, \bb_{0} \right) \right) d M_{1}\left( s, \bb_{0} \right) \right\} \\ 
        & \times \left\{ \int_{0}^{t_{2}} \left( \pi_{1}\left( \bz \right) - \mathscr{E}_{\pi} \left( s, \bb_{0} \right) \right) d M_{1} \left( s, \bb_{0} \right) - \boldsymbol{\nu}^{\top} \left( t_{2} \right) \boldsymbol{\Omega}^{-1} \int_{0}^{\infty} \xi \left( s \right) \left( Z_1 - \mathscr{E}\left( s, \bb_{0} \right) \right) d M_{1} \left( s, \bb_{0} \right) \right\} \Bigg].
    \end{align*}
    
    Second, we claim that the omnibus test based on $\sup_{t, z} \abs{\w \left( t, \bz \right)}$ is consistent against the general alternative that there do not exist a constant vector $\bb$ and a function $\lambda_{0} \left( \cdot \right)$ such that $\lambda(t | \bZ) = \lambda_{0} \left( t \cdot g \left( \bZ \right) \right) g \left( \bZ \right)$, generalized hazard function, for almost all $t > 0$ and $z$ generated by the random vector $Z$. 
    
    We first decompose $n^{-\frac{1}{2}}\w \left( t, \bz ; \bbhi \right)$ into two parts: 
        \begin{align} 
       	& n^{-\frac{1}{2}} \w \left( t, \bz ; \bbhi \right) \nonumber \\
       	& = n^{-1} \sum_{i=1}^{n} \pi_{i} \left( \bz \right) \widehat{M}_{i} \left( t, \bbhi \right) \nonumber \\
       	& = n^{-1} \sum_{i=1}^{n} \ell \left( \bZ_{i} \right) I \left( \bZ_{i} < z \right) N_{i} \left( t, \bbhi \right) - n^{-1} \sum_{i=1}^{n} \ell \left( \bZ_{i} \right) I \left( \bZ_{i} < z \right) \int_{0}^{t} Y_{i} \left(s, \bbhi \right) d\widehat{\Lambda}_{0} \left(s, \bbhi \right). \label{eq:app:2.0} %\\
        \end{align}
        Then, it follows from the Strong Law of Large Numbers (SLLN) and \citet[][Theorem 2; p.616]{lin1998accelerated}, \eqref{eq:app:2.0} is asymptotically equivalent to  
        \begin{equation}
            \nE \left[ \ell \left( \bZ_{i} \right) I \left( \bZ_{i} < z \right) N_{i} \left( t, \bbhi \right) \right] - \nE \left[ \ell \left( \bZ_{i} \right) I \left( \bZ_{i} < z \right) \int_{0}^{t} Y_{i} \left(s, \bbhi \right) d \widehat{\Lambda}_{0} \left(s, \bbhi \right) \right] \label{eq:app:2.1}
        \end{equation}
        Let $H \left( \bz \right)$ denote the distribution of $\bZ_{i}$. Then, the first term of \eqref{eq:app:2.1} can rewritten as
        \begin{align*}
            & \nE \left[ \ell \left( \bZ_{i} \right) I \left( \bZ_{i} < z \right) N_{i} \left( t, \bbhi \right) \right] \\
            & = \nE \left[ \nE \left[ \ell \left( \bz \right) I \left( \bz < z \right) N_{i} \left( t, \bbhi \right) | \bz \right] \right] \\
            & = \int_{0}^{z} \ell \left( \bz \right) \nE \left[ N_{i} \left( t, \bbhi \right) | \bz \right] d H \left( \bz \right) \\
            & = \int_{0}^{z} \int_{0}^{t} \ell \left( \bz \right) Y_{i} \left(s, \bbhi \right) \lambda_{0} \left( s \cdot g \left( \bz ; \bbhi \right) \right) g \left( \bz ; \bbhi \right) d H \left( \bz \right) ds 
        \end{align*}
        The last equality above comes from $\lambda(t | \bZ) = \lambda_{0} \left( t \cdot g \left( \bZ \right) \right) g \left( \bZ \right)$. Likewise, the second term of \eqref{eq:app:2.1} can be written as 
        \begin{align*}
        	& \nE \left[ \ell \left( \bZ_{i} \right) I \left( \bZ_{i} < z \right) \int_{0}^{t} Y_{i} \left(s, \bbhi \right) d \widehat{\Lambda}_{0} \left( s, \bbhi \right) \right] \\
        	& = \int_{0}^{z} \ell \left( \bz \right) \int_{0}^{t} Y_{i} \left( s,\bbhi \right) d \widehat{\Lambda} \left( s,\bbhi \right) d H\left( \bz \right) \\
        	& = \int_{0}^{z} \int_{0}^{t} \ell \left( \bz \right) Y_{i} \left( s,\bbhi \right) d \widehat{\Lambda} \left( s,\bbhi \right) d H\left( \bz \right) \\
        	& = \int_{0}^{z} \int_{0}^{t} \ell \left( \bz \right) Y_{i} \left( s,\bbhi \right) \widehat{\lambda}_{0} \left( s \cdot \exp \left( - \bz^{\top} \bbhi \right) \right) \exp \left( - \bz^{\top} \bbhi \right) d H \left( \bz \right) ds. 
        \end{align*}
        Combining these two results, \eqref{eq:app:2.1} reduces to
        \begin{align} \nonumber 
            & \int_{0}^{z} \int_{0}^{t} \ell \left( \bz \right) Y_{i} \left(s, \bbhi \right) \left\{ \lambda_{0} \left( s \cdot g \left( \bz ; \bbhi \right) \right) g \left( \bz ; \bbhi \right)- \widehat{\lambda}_{0} \left( s \cdot \exp \left( - \bz^{\top} \bbhi \right) \right) \exp \left( - \bz^{\top} \bbhi \right) \right\}  d H \left( \bz \right) d s \\
            & = \int_{0}^{z} \int_{0}^{t} \ell \left( \bz \right) Y_{i} \left( s,\bbhi \right) \exp \left( - \bz^{\top} \bbhi \right) \left\{ \frac{g \left( \bz ; \bbhi \right)}{\exp \left( - \bz^{\top} \bbhi \right)} - \frac{\widehat{\lambda}_{0} \left( s \cdot \exp \left( - \bz^{\top} \bbhi \right) \right)}{\lambda_{0} \left( s \cdot g \left( \bz ; \bbhi \right) \right)} \right\} \lambda_{0} \left( s \cdot g \left( \bz ; \bbhi \right) \right) d H \left( \bz \right) d s. \label{eq:app:2.2}
        \end{align}
        Under the alternative hypothesis, $\bbhi$ and $\widehat{\Lambda}_{0} \left( t, \bbhi \right)$ converge to $\bbstar$ and $\int \lambda_{0}^{\star} (s, \bbstar) ds$, respectively, and $g \left( Z ; \bbstar \right) \neq \exp \left( Z^{\top} \bbstar \right)$. Therefore, \eqref{eq:app:2.2} converges to
        \begin{align*}	
        	\int_{0}^{z} \int_{0}^{t} \ell \left( \bz \right) Y_{i} \left( s,\bbstar \right) \exp \left( - \bz^{\top} \bbstar \right) \left\{ \frac{g \left( \bz ; \bbstar \right)}{\exp \left( - \bz^{\top} \bbstar \right)} - \frac{\lambda_{0}^{\star} \left( s \cdot \exp \left( - \bz^{\top} \bbstar \right) \right)}{\lambda_{0} \left( s \cdot g \left( \bz ; \bbstar \right)  \right)} \right\} \lambda_{0} \left( s \cdot g \left( \bz ; \bbstar \right) \right) d H \left( \bz \right) d s
        \end{align*}
        	
        By \citet[][p. 136]{hardy1952inequalities} and in \citet[Appendix 3]{lin1993checking}, as $n \to \infty$,
        \begin{align*}
            \frac{\lambda_{0}^{\star} \left( s \cdot \exp \left( - \bz^{\top} \bbstar \right) \right)}{\lambda_{0} \left( s \cdot g \left( \bz ; \bbstar \right)  \right)}
            \conv{}
            \frac{ \int  \exp \left( -\bz^{\top} \bbstar \right) \left\{ \frac{g \left( \bz ; \bbstar \right)}{\exp \left( -\bz^{\top} \bbstar \right)} \right\} Y_{i} \left( t, \bbstar \right) dH \left( \bz \right) }{\int \exp \left( -\bz^{\top} \bbstar \right) Y_{i} \left( t, \bbstar \right) dH\left( \bz \right) }
        \end{align*}
        almost surely. For the maximizer of $ \frac{g \left( \bz ; \bbstar \right)}{\exp \left( -\bz^{\top} \bbstar \right)}$, say $\bz_{\dag}$, we have
        \begin{align*}
            \left\{ \frac{g \left( \bz_{\dag} ; \bbstar \right)}{\exp \left( - \bz_{\dag}^{\top} \bbstar \right)} \right\} -    	\frac{ \int \exp \left( -\bz^{\top} \bbstar \right) \left\{ \frac{g \left( \bz ; \bbstar \right)}{\exp \left( -\bz^{\top} \bbstar \right)} \right\} Y_{i} \left( t, \bbstar \right) dH \left( \bz \right) }{\int \exp \left( -\bz^{\top} \bbstar \right) Y_{i} \left( t, \bbstar \right) dH\left( \bz \right) } > 0.
        \end{align*}
        It shows that the test statistic is consistent against the general alternative as we stated above.  
            
% ---------------------------------------------------------------------------
% ---------------------------------------------------------------------------
\subsection{Proof of Theorem \ref{theorem2}} \label{subsec:prooftheorem2}
    To show that $\w \left( t, \bz ; \bbhi \right)$ is asymptotically identically distributed as the test statistic based on non-smoothed score process $\wh \left( t, \bz \right)$ where
	\begin{align*}
            \wh \left( t, \bz \right) 
            &=n^{-\frac{1}{2}} \widehat{\score}_{\pi}^{\phi} \left( t, \bz,  \bbhi \right) -n^{\frac{1}{2}} \left( \widehat{f}_{\pi} \left( t, \bz \right) + \int_{0}^{t} \widehat{g}_{\pi} \left( s, \bz\right) d \widehat{\Lambda}_{0} \left(s, \bbhi \right) \right)^{\top} \left( \bbhi - \bbhid \right) \\
            & \hspace{2em} - n^{-\frac{1}{2}} \int_{0}^{t} S_{\pi} \left( s, \bz, \bbhi \right) d \left( \widehat{\Lambda}_{0} \left(s, \bbhi \right) - \widehat{\Lambda}_{0} \left(s, \bbhid \right) \right).
	\end{align*}

    For simplicity, we assume that $\psi \left(s, \bb \right) = 1$. Then,
    \begin{align*}
        \wh \left( t, \bz \right) &=  \frac{1}{\sqrt{n}}  \sum_{i=1}^{n} \int_{0}^{t} \left( \pi_{i} \left( \bz \right) - E_{\pi} \left( s, \bz, \bbhi \right) \right) d \widehat{M}_{i} \left( s, \bbhi \right) \phi_{i}  \\
        & - n^{\frac{1}{2}} \left( \widehat{f}_{\pi} \left( t, \bz \right) + \int_{0}^{t} \widehat{g}_{\pi} \left( s, \bz\right) d \widehat{\Lambda}_{0} \left(s, \bbhi \right) \right)^{\top} \left( \bbhi - \bbhid \right)\\
        & - n^{\frac{1}{2}}\int_{0}^{t} n^{-1} S_{\pi} \left( s, \bz, \bbhi \right) d \boldsymbol{\kappa} \left( s \right) \left( \bbhi - \bbhid \right) + o_{p}(1) \\
        & = \frac{1}{\sqrt{n}} \sum_{i=1}^{n} \int_{0}^{t}  \left( \pi_{i} \left( \bz \right) - E_{\pi} \left( s, \bz, \bbhi \right) \right) d \widehat{M}_{i} \left( s, \bbhi \right) \phi_{i} \\ 
        & - n^{-\frac{1}{2}} \left( \widehat{f}_{\pi} \left( t, \bz \right) + \int_{0}^{t} \widehat{g}_{\pi} \left( s, \bz\right) d \widehat{\Lambda}_{0} \left(s, \bbhi \right) + \int_{0}^{t} n^{-1} S_{\pi} \left( s, \bz, \bbhi \right) d \boldsymbol{\kappa} \left( s \right) \right)^{\top} \boldsymbol{\Omega}^{-1} \scoreI \left( \bbhid \right) + o_{p}(1)
    \end{align*}

    Since $\scoreI^{\phi} \left( \bbhid \right) = 0$, we observe that $\wh \left( t, \bz \right)$ shares the same components as $\w \left( t, \bz \right)$. Note that that $\bb_{0}$, $M_{i} \left( t, \bb_{0} \right)$, $f_{\pi} \left( t, \bz \right)$, and $g_{\pi} \left( t, \bz \right)$ can be replaced by $\bbhi$, $\phi_{i} M_{i} \left( t, \bb_{0} \right)$, $\widehat{f}_{\pi} \left( t, \bz \right)$, and $\widehat{g}_{\pi} \left( t, \bz \right)$, respectively. In addition, the resampled martingale residuals $\phi_{i} M_{i} \left( t, \bb_{0} \right)$ have the same distribution as $\phi_{i} \widehat{M}_{i} \left( t, \bbhi \right)$, and the kernel estimates of $f_{0}$ and $g_{0}$ converge uniformly to their true densities. Consequently, $\wh \left( t, \bz \right)$ exhibits the same limiting finite-dimensional distributions as $\w \left( t, \bz \right)$, and its tightness follows the same arguments as for $\w \left( t, \bz \right)$.

\end{document}